\theoremstyle{definition}
\newtheorem {theorem}{Theorem}
\newtheorem{assumption}{Assumption}
\newtheorem {corollary}{Corollary}
\newtheorem {lemma}{Lemma}
\newtheorem {proposition}[theorem]{Proposition}
\newtheorem {remark}[theorem]{Remark}
\crefname{subequation}{subequation}{subequations}
\Crefname{subequation}{Assumption}{assumptions}
\crefname{assumption}{Assumption}{Assumptions}
\begin{document}

\author{Xiao Huang\thanks{ Corresponding author. Department of Economics, Finance, and Quantitative Analysis, Coles College of Business, Kennesaw State University, GA 30144, USA. Email: xhuang3@kennesaw.edu.} \ \ \ \ \ \ \  Zhaoguo Zhan\thanks{Department of Economics, Finance, and Quantitative Analysis, Coles College of Business, Kennesaw State University, GA 30144, USA. Email: zzhan@kennesaw.edu.} }
\title{\Large Local Composite Quantile Regression for Regression Discontinuity}
\date{ \today}
\maketitle

\doublespace

\begin{abstract}
We introduce the local composite quantile regression (LCQR) to causal inference in regression discontinuity (RD) designs. \cite{kai2010local}  study the efficiency property of LCQR, while we show that its nice boundary performance  translates to accurate estimation of treatment effects in RD under a variety of data generating processes. Moreover, we propose a bias-corrected  and standard error-adjusted \textit{t}-test for inference, which leads to confidence intervals with good coverage probabilities. A bandwidth selector   is also discussed.  For illustration, we conduct a simulation study and revisit a classic example from \cite{lee2008randomized}. A companion R package \texttt{rdcqr} is  developed.
\end{abstract}

\bigskip

\textbf{JEL Classification}: C18, C21. 

\bigskip

\textbf{Keywords}: Regression Discontinuity, Treatment Effect, Local Composite Quantile Regression. 

\newpage  

\normalsize

\doublespace

\section{Introduction} \label{intro}

Over the past few decades, regression discontinuity (RD) has become a popular quasi-experimental method to identify the local average treatment effect. In its simplest form, the sharp RD design, a unit $i$ receives treatment  if and only if an underlying variable $X_i$ attains a pre-specified cutoff. Under some smoothness assumptions, units in a small neighborhood of the cutoff share similar characteristics so that the difference between the outcomes  above and below the cutoff can be interpreted as the local average treatment effect. For recent discussions of the literature and background references, see, e.g., \citeauthor{cattaneo2019practical} (\citeyear{cattaneo2019practical}, \citeyear{cattaneo2020practical}).

Central to a large portion of the empirical and theoretical work on RD is the use of local linear regression (LLR, see  \cite{fan1996local}), while the goal of this paper is to introduce another nonparametric smoother to the estimation and inference in sharp as well as fuzzy RD designs. Although LLR is the best linear smoother, it is possible to use an alternative estimator in RD if we consider the larger class of nonlinear smoothers. One such example is the local composite quantile regression (LCQR) method in \cite{kai2010local}, who show that LCQR can have some efficiency gain against LLR when data are non-normal. The literature related to LCQR is growing. Several recent examples include  \cite{kai2011new},  \cite{zhao2014efficient}, \cite{li2016local}, and \cite{huanglin2020}, although none of these studies focuses on RD. To use LCQR, the researcher chooses a finite number of quantile positions, say $5$, and uses a local polynomial to estimate the quantile (the intercept) at each of the 5 quantile positions. Averaging the 5 quantile estimates gives an estimate for the conditional mean, whose values above and below the pre-specified cutoff are the key components in sharp and fuzzy RD designs.

Our paper makes the following contributions to the large and growing literature on RD. First, we introduce the LCQR method  to sharp and fuzzy RD designs. Numerical evidence for the efficiency gain of using LCQR instead of LLR in RD  is also provided. Second, similar to the robust $t$-test in \cite{calonico2014robust} for conducting inference on causal effects, we  propose a \textit{t}-test that adjusts both the bias and the standard error  of the LCQR estimator, and the resulting confidence intervals are shown to have good coverage probabilities. Third, we further discuss a new bandwidth selector that is based on an adjusted mean squared error (MSE). As a byproduct of our research, we also develop an R package \texttt{rdcqr} that implements the LCQR method in this paper. The \texttt{rdcqr} package can be downloaded from \url{https://github.com/xhuang20/rdcqr}.

The rest of the paper unfolds as follows. Section 2 sets up the notation for RD designs and introduces the LCQR method. Section 3 presents the bias correction and standard error adjustment for inference on the boundary. Section 4 briefly discusses several extensions, such as allowing for covariates and kink RD designs.  Section 5 presents a simulation experiment. An empirical illustration of LCQR for RD is provided in Section 6 using the data from \cite{lee2008randomized}. Section 7 concludes. The Online Supplement contains all technical details and proofs, as well as additional figures and tables.

\section{Regression discontinuity and local composite quantile regression}

We introduce the notation for RD  and the application of LCQR in RD in this section.

\subsection{Regression discontinuity: setup and notation}

Consider the triplet $(Y_i,X_i,T_i)$, $i=1,\cdots,n$ in a standard RD setup that nests both the sharp and fuzzy RD designs. $Y_i$ is the observed outcome for individual $i$, $X_i$ is the variable that determines the assignment of treatment, and $T_i$ equals $1$ if individual $i$ receives treatment and $0$ otherwise. For individual $i$ to receive treatment, a threshold value for $X_i$ is set. Without loss of generality, we assume the threshold is $0$, and the treatment assignment rule becomes that  individual $i$ is assigned to the treatment group if $X_i \geq 0$. Throughout the paper, the signs $+$ and $-$  will be used to denote data or quantities associated with $X_i \geq 0$ and $X_i < 0$, respectively. 

We use a general nonparametric model to describe the relationship between $Y_i$ and $X_i$:
\begin{align}
	Y_{+,i} &= m_{Y_+}(X_{+,i}) + \sigma_{\epsilon_{Y_+}}(X_{+,i})\epsilon_{Y_+,i} \label{eq:Y plus model}, \\
	Y_{-,i} &= m_{Y_-}(X_{-,i}) + \sigma_{\epsilon_{Y_-}}(X_{-,i})\epsilon_{Y_-,i} \label{eq:Y minus model},
\end{align}
where $m_{Y_+}, m_{Y_-}, \sigma_{\epsilon_{Y_+}}, \sigma_{\epsilon_{Y_-}}, \epsilon_{Y_+}$ and $\epsilon_{Y_-}$ are the conditional mean functions, conditional standard deviation functions, and error terms with unit variance, respectively.

In a sharp RD design, whether individual $i$ receives treatment is determined by whether $X_i \geq 0$, so $T_i=I\{X_i \geq 0\}$. One is interested in measuring the average treatment effect at the threshold $0$: 
\begin{equation} \label{eq:sharp treatment}
	\tau_{\text{sharp}} = m_{Y_+}(0) - m_{Y_-}(0),
\end{equation}
where
\begin{equation} \label{eq:limit m_Y}
	m_{Y_+}(0) = \lim\limits_{x_+\downarrow 0}m_{Y_+}(x_+) \text{ and } m_{Y_-}(0) = \lim\limits_{x_-\uparrow 0}m_{Y_-}(x_-).  
\end{equation}

A \textit{t}-statistic for testing a hypothesized treatment effect $\tau_0$ reads:
\begin{equation} \label{eq:sharp t}
	t_{\text{sharp}} = \frac{\hat{\tau}_{\text{sharp}} - \tau_0}{\sqrt{\text{Var}(\hat{\tau}_{\text{sharp}})}},
\end{equation}
where, under the \textit{i.i.d.}  assumption, we have

\begin{equation} \label{eq:sharp nume and deno}
     \hat{\tau}_{\text{sharp}} = \hat{m}_{Y_+}(0) - \hat{m}_{Y_-}(0),  \ \ 
     \text{Var}(\hat{\tau}_{\text{sharp}}) = \text{Var}(\hat{m}_{Y_+}(0)) + \text{Var}(\hat{m}_{Y_-}(0)).
\end{equation}

In a fuzzy RD design, the treatment assignment rule remains the same, but $T_i$ is not necessarily equal to $1$ when $X_i \geq 0$ or $0$ when $X_i < 0$. It is customary to use a nonparametric function to describe the probability of receiving treatment:

\begin{align}
	T_{+,i} &= m_{T_+}(X_{+,i}) + \sigma_{\epsilon_{T_+}}(X_{+,i})\epsilon_{T_+,i} \label{eq:T plus model}, \\
    T_{-,i} &= m_{T_-}(X_{-,i}) + \sigma_{\epsilon_{T_-}}(X_{-,i})\epsilon_{T_-,i} \label{eq:T minus model}.
\end{align}
The treatment effect measurement becomes
\begin{equation} \label{eq:fuzzy treatment}
	\tau_{\text{fuzzy}} = \frac{m_{Y_+}(0) - m_{Y_-}(0)}{m_{T_+}(0) - m_{T_-}(0)},
\end{equation}
where $m_{Y_+}(0)$, $m_{Y_-}(0)$ are as defined in (\ref{eq:limit m_Y}) above, and similarly,
\begin{equation} \label{eq:limit m_T}
    m_{T_+}(0) = \lim\limits_{x_+\downarrow 0}m_{T_+}(x_+) \text{ and } m_{T_-}(0) = \lim\limits_{x_-\uparrow 0}m_{T_-}(x_-). 
\end{equation}

The corresponding \textit{t}-statistic is given by 
\begin{equation} \label{eq:fuzzy t}
	t_{\text{fuzzy}} = \frac{\hat{\tau}_{\text{fuzzy}} - \tau_0}{\sqrt{\text{Var}(\hat{\tau}_{\text{fuzzy}})}},
\end{equation}
where
\begin{equation}\label{eq:fuzzy nume and deno}
    \hat{\tau}_{\text{fuzzy}} = \frac{\hat{m}_{Y_+}(0) - \hat{m}_{Y_-}(0)}{\hat{m}_{T_+}(0) - \hat{m}_{T_-}(0)}, 
\end{equation}
and $\text{Var}(\hat{\tau}_{\text{fuzzy}})$ depends on the variances and covariances of $\hat{m}_{Y_+}(0)$, $\hat{m}_{Y_-}(0)$, $\hat{m}_{T_+}(0)$ and $\hat{m}_{T_-}(0)$, which we provide in the Online Supplement (see (A.16) in Section S.2) to conserve space.

To conduct the \textit{t}-test, we thus need nonparametric estimates for all quantities in \cref{eq:sharp nume and deno,eq:fuzzy nume and deno} at the boundary point $0$.

\subsection{Local composite quantile regression}

LLR is the leading method to estimate quantities in \cref{eq:sharp nume and deno,eq:fuzzy nume and deno}. LCQR is introduced in \cite{kai2010local} as an alternative to the LLR method due to its potential efficiency gains with non-normal errors. The probable advantage under non-normality is the key reason we propose to use LCQR in RD as normality  can be easily violated in practice.

The same LCQR method can be applied to \cref{eq:Y plus model,eq:Y minus model,eq:T plus model,eq:T minus model}. Because of this similarity, we describe the LCQR method based on \cref{eq:Y plus model}. For $k=1,\cdots,q$, let $\tau_k = k/(q+1)$ be the equally spaced $q$ quantile positions and $\rho_{\tau_k}(r) = \tau_k r -r I(r < 0)$ be the $q$ check loss functions in quantile regression. The loss function for LCQR at the point $x$ is defined as
\begin{equation} \label{eq:lcqr obj}
	\sum_{k=1}^{q} \sum_{i=1}^{n_+}\rho_{\tau_k} \left(Y_{+,i} - a_k - b (X_{+,i} - x)\right) K\left(\frac{X_{+,i} - x}{h_{Y+}}\right),
\end{equation}
where $a_k$ is the \textit{k}-th quantile estimand, $b$ is a slope restricted to be the same across quantiles, $h_{Y+}$ is the bandwidth, and $K$ is a kernel function.   Here we use the linear approximation to  the conditional mean function in \cref{eq:Y plus model}. Let  $F_{Y|X}$ be the cumulative distribution function of $Y$ given $X$ so that $a_k = F_{Y|X = x}^{-1}(\tau_k)$ is the \textit{k}-th quantile,  and $b$ is $m_{Y_+}^{(1)}(x)$, the first derivative of the conditional mean function at $x$. Using a single slope $b$ allows us to combine information across quantiles, since a specification allowing for $b_k$ in \cref{eq:lcqr obj} would be equivalent to estimating each quantile separately. Our point of interest in \cref{eq:lcqr obj} is $x = 0$.


The parameter $q$ plays an important role here. Once $q$ is chosen, there will be $q$ quantiles corresponding to  the $q$ quantile positions $\tau_k$. Since we use equally spaced quantile positions, the median will always be selected as the middle quantile if $q$ is odd. For example, if $q=3$, the quantile positions are $(0.25,0.5,0.75)$. The goal is to spread the quantile positions on the interval $(0,1)$ so that LCQR can combine information in multiple quantiles. The value of $q$ is decided by the researcher. Results in \cite{kai2010local} and our discussion later on show that a small number such as 5, 7 or 9 may be adequate for many common non-normal errors. It can be viewed as a hyperparameter and can be tuned.

Minimizing \cref{eq:lcqr obj} w.r.t. $a_k$ and $b$ yields $(\hat{a}_1,\cdots,\hat{a}_q,\hat{b})$. The LCQR estimator for $m_{Y+}(x)$ is defined as
\begin{equation} \label{eq:lcqr estimate}
	\hat{m}_{Y_+}(x) = \frac{1}{q}\sum_{k=1}^{q} \hat{a}_k
\end{equation}
and $\hat{b}$ is the estimator for $m_{Y_+}^{(1)}(x)$, the first derivative of $m_{Y_+}(x)$. In a similar fashion, $\hat{m}_{Y_-}(x)$, $\hat{m}_{T_+}(x)$, and $\hat{m}_{T_-}(x)$ can be obtained at $x = 0$. The average in \cref{eq:lcqr estimate} combines information across different quantile estimators. The hope is to possibly improve the relative efficiency of individual nonparametric quantile estimator to the nonparametric mean estimator based on LLR.

\cite{kai2010local} discuss the asymptotic bias and variance of the LCQR estimator at an interior point $x$. The asymptotic properties at a boundary point $x$ are discussed in \cite{kai2010local_supp}. Given the results in \cite{kai2010local_supp}, one can immediately implement the \textit{t}-test for the sharp RD in \cref{eq:sharp t} with  bias correction if needed. To implement the \textit{t}-test for the fuzzy RD in \cref{eq:fuzzy t}, additional covariance expressions  for \cref{eq:fuzzy nume and deno} need to be estimated and we provide these results in Lemma 3 in the Online Supplement. The asymptotic results for the bias and variance of $\hat{m}_{Y_+}(x)$, $\hat{m}_{Y_-}(x)$, $\hat{m}_{T_+}(x)$, and $\hat{m}_{T_-}(x)$ follow those in \cite{kai2010local_supp}, and are given in Lemma 2 in the Online Supplement.

We make the following assumptions. Let $c$ be a positive constant of supp($K$).

\begin{assumption} \label{assumption:m continuity}
	$m_{Y_+}(\cdot)$, $m_{T_+}(\cdot)$, $m_{Y_-}(\cdot)$, and $m_{T_-}(\cdot)$ are at least three times continuously differentiable at $x$.  
\end{assumption}
\begin{assumption} \label{assumption:sigma continuity}
	$\sigma_{\epsilon_{Y_+}}(\cdot)$, $\sigma_{\epsilon_{T_+}}(\cdot)$, $\sigma_{\epsilon_{Y_-}}(\cdot)$, and $\sigma_{\epsilon_{T_-}}(\cdot)$ are right or left continuous and differentiable at $x$.  
\end{assumption}
\begin{assumption} \label{assumption:kernel}
	The kernel function $K$ is bounded and symmetric.
\end{assumption}
Following \Cref{assumption:kernel}, we define, for $j=0$, $1$, $2$, $\cdots$, 
	\begin{align*}
		\mu_{+,j}(c) &= \int_{-c}^{\infty} u^j K(u) du, \qquad \nu_{+,j} = \int_{-c}^{\infty} u^j K^2(u) du,\\
		\mu_{-,j}(c) &= \int_{-\infty}^{c} u^j K(u) du, \qquad \nu_{-,j} = \int_{-\infty}^{c} u^j K^2(u) du.
	\end{align*}
\begin{assumption} \label{assumption:density x}
	The marginal density $f_{X_+}(\cdot)$ is right continuous, differentiable and positive at $x = 0$, and $f_{X_-}(\cdot)$ is left continuous, differentiable and positive at $x = 0$.
\end{assumption}
\begin{assumption} \label{assumption:error distribution}
	All error distributions for $\epsilon_{Y_+}$, $\epsilon_{T_+}$, $\epsilon_{Y_-}$, and $\epsilon_{T_-}$ are symmetric and have positive density. All errors are \textit{i.i.d.}
\end{assumption}
\begin{assumption} \label{assumption:bandwidth}
	$h_{Y_+} \rightarrow 0$, $n_+h_{Y_+} \rightarrow \infty$, and $n_+h_{Y_+}^7 \rightarrow 0$ as $n_+ \rightarrow \infty$. The same condition also holds for $(n_+,h_{T_+}),(n_-,h_{Y_-})$, and $(n_-,h_{T_-})$.
\end{assumption}

\Cref{assumption:m continuity,assumption:sigma continuity} are used for Taylor series expansions in the proof. The bias expressions require only second-order differentiability of the conditional mean function. Higher-order differentiability permits the study of small order terms in the expansion. The support of the kernel is assumed to be $(-\infty,\infty)$ in \Cref{assumption:kernel}. If the kernel has a bounded support such as $[-1,1]$, we would have $\mu_{+,j}(c) = \int_{-c}^{1} u^j K(u) du$. Similar changes can be made to the $\nu$ variables. The differentiability in \Cref{assumption:density x} is also used for Taylor  expansions in the proof,  and we  require  the positivity of the density at the boundary as it appears in the denominator of the bias expression. The symmetric error distribution assumption in \Cref{assumption:error distribution} helps to remove an extra term in the bias. When the error distribution is asymmetric, this extra term can be removed during bias correction. In \Cref{assumption:bandwidth},  $h_{Y_+} \rightarrow 0$ and $n_+h_{Y_+} \rightarrow \infty$ help establish the consistency of $\hat{m}_{Y_+}$. The assumption $n_+h_{Y_+}^7 \rightarrow 0$ is used to  establish the asymptotic normality of the bias-corrected and s.e.-adjusted $t$-statistic. \Cref{assumption:bandwidth} translates to $h \rightarrow 0$, $nh \rightarrow \infty$, and $nh^7 \rightarrow 0$ as $n \rightarrow \infty$ if we use a single bandwidth $h$ for the data above and below  the cutoff, and $n$ is the total number of observations. Similar assumptions can be found in \cite{calonico2014robust} for LLR.


Next we provide two theorems on the LCQR estimator for sharp and fuzzy treatment effects defined in \cref{eq:sharp nume and deno} and \cref{eq:fuzzy nume and deno}, respectively. Let $\mathbf{X}$ be the $\sigma$-field  generated by all $X_i$.
\begin{theorem} \label{thm:sharp results}
	Under \cref{assumption:m continuity,assumption:sigma continuity,assumption:kernel,assumption:density x,assumption:error distribution,assumption:bandwidth}, as both $n_+$ and $n_- \rightarrow \infty$, we have
	\begin{align}
		\text{Bias}(\hat{\tau}_{\text{sharp}}|\mathbf{X}) &= \frac{1}{2} a_{Y_+}(c) m_{Y+}^{(2)}(0) h_{Y_+}^2 - \frac{1}{2} a_{Y_-}(c) m_{Y-}^{(2)}(0)  h_{Y_-}^2 + o_p(h_{Y_+}^2 + h_{Y_-}^2),\\
		\text{Var}(\hat{\tau}_{\text{sharp}}|\mathbf{X}) &=\frac{b_{Y_+}(c)\sigma_{\epsilon_{Y_+}}^2(0)}{n_+h_{Y_+}f_{X_+}(0)} + \frac{b_{Y_-}(c)\sigma_{\epsilon_{Y_-}}^2(0)}{n_-h_{Y_-}f_{X_-}(0)} + o_p(\frac{1}{n_+h_{Y_+}} + \frac{1}{n_-h_{Y_-}}),
	\end{align}
	where 
	\begin{align}
		a_{Y_+}(c) &= a_{Y_-}(c) = \frac{\mu_{+,2}^2(c) - \mu_{+,1}(c)\mu_{+,3}(c)}{\mu_{+,0}(c)\mu_{+,2}(c) - \mu_{+,1}^2(c)},\label{eq:ac}\\
		b_{Y_+}(c) &= e_q^T (S_{Y_+}^{-1}(c) \Sigma_{Y_+}(c) S_{Y_+}^{-1}(c))_{11} e_q/q^2, \label{eq:bYc plus}\\
		b_{Y_-}(c) &= e_q^T (S_{Y_-}^{-1}(c) \Sigma_{Y_-}(c) S_{Y_-}^{-1}(c))_{11} e_q/q^2, \label{eq:bYc minus}
	\end{align}
	and the matrices $S_{Y_+}(c)$, $\Sigma_{Y_+}(c)$ and similarly, $S_{Y_-}(c)$, $\Sigma_{Y_-}(c)$, are given in the Online Supplement (Section S.1, Equation (A.1)). $e_q$ is the $q$-dimensional vector of ones.
\end{theorem}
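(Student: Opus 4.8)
The plan is to exploit the additive, independent structure of $\hat{\tau}_{\text{sharp}} = \hat{m}_{Y_+}(0) - \hat{m}_{Y_-}(0)$ recorded in (\ref{eq:sharp nume and deno}). Since the subsamples $\{X_i \ge 0\}$ and $\{X_i < 0\}$ are disjoint, conditioning on $\mathbf{X}$ makes the two one-sided fits functions of independent blocks of i.i.d. errors; hence the conditional bias of the difference is the difference of the one-sided biases, and the conditional variance is the sum of the one-sided variances (the cross-covariance vanishing). The whole theorem therefore reduces to establishing, for a single boundary fit, the expansions
\[
\text{Bias}(\hat{m}_{Y_+}(0)|\mathbf{X}) = \tfrac{1}{2}\,a_{Y_+}(c)\,m_{Y_+}^{(2)}(0)\,h_{Y_+}^2 + o_p(h_{Y_+}^2),
\]
\[
\text{Var}(\hat{m}_{Y_+}(0)|\mathbf{X}) = \frac{b_{Y_+}(c)\,\sigma_{\epsilon_{Y_+}}^2(0)}{n_+ h_{Y_+}\, f_{X_+}(0)} + o_p\!\Big(\tfrac{1}{n_+ h_{Y_+}}\Big),
\]
together with the symmetric statement for $\hat{m}_{Y_-}(0)$. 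These are exactly the contents of \Cref{lemma:bias and variance}, so the substantive labor lives there; the leading-order expansions need only $h_{Y_+}\to0$ and $n_+h_{Y_+}\to\infty$ from \Cref{assumption:bandwidth}.

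For the one-sided lemma I would follow the convex-optimization route for local CQR at a boundary. First reparametrize: let $a_k^0 = m_{Y_+}(0) + \sigma_{\epsilon_{Y_+}}(0)\,F^{-1}(\tau_k)$ be the true conditional $\tau_k$-quantile and $b^0 = m_{Y_+}^{(1)}(0)$, and rescale the unknowns as $u_k = \sqrt{n_+h_{Y_+}}\,(a_k - a_k^0)$ and $v = \sqrt{n_+h_{Y_+}}\,h_{Y_+}(b-b^0)$, reflecting the faster intercept rate and slower slope rate. Subtracting the value at the truth turns (\ref{eq:lcqr obj}) into a sum of kernel-weighted increments $\rho_{\tau_k}(r_{ki}-\delta_{ki})-\rho_{\tau_k}(r_{ki})$, which I linearize with Knight's identity
\[
\rho_\tau(r-s)-\rho_\tau(r) = s\big(I(r<0)-\tau\big) + \int_0^s \big[I(r\le t)-I(r\le 0)\big]\,dt.
\]
Taking conditional expectations and Taylor-expanding $m_{Y_+}(X_{+,i})$ to second order inserts the curvature term $\tfrac12 m_{Y_+}^{(2)}(0)\,X_{+,i}^2$ that the linear-in-slope fit cannot absorb. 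Because the evaluation sits at the boundary the odd kernel moments $\mu_{+,1}(c),\mu_{+,3}(c)$ do not vanish, so intercept and slope are coupled; solving the resulting $2\times2$ linear system for the curvature projection produces the ratio defining $a_{Y_+}(c)$ in (\ref{eq:ac}), which coincides with the local-linear boundary bias constant and is therefore free of $q$. Pollard's convexity argument then guarantees that the rescaled minimizer converges to the minimizer of the limiting quadratic form, pinning down the first-order behavior of each $\hat{a}_k$.

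The variance comes from the leading stochastic term, a kernel-weighted sum of indicator scores $\tau_k - I\{\epsilon_{Y_+,i}\le F^{-1}(\tau_k)\}$ across the $q$ quantile levels. Its conditional covariance, built from $\min(\tau_k,\tau_{k'}) - \tau_k\tau_{k'}$ against kernel second moments, assembles $\Sigma_{Y_+}(c)$, while the expected Hessian, built from the error density at the $\tau_k$-quantiles against kernel moments, assembles $S_{Y_+}(c)$; the sandwich $S_{Y_+}^{-1}(c)\Sigma_{Y_+}(c)S_{Y_+}^{-1}(c)$ governs the joint asymptotic variance of $(\hat{a}_1,\dots,\hat{a}_q,\hat{b})$. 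Since $\hat{m}_{Y_+}(0)=q^{-1}\sum_k \hat{a}_k = q^{-1}e_q^{T}\hat{a}$, extracting the intercept block $(\cdot)_{11}$ and projecting through $e_q/q$ yields precisely $b_{Y_+}(c)$ in (\ref{eq:bYc plus}), with the $1/(n_+h_{Y_+}f_{X_+}(0))$ rate coming from the density and bandwidth normalizations. Here \Cref{assumption:error distribution} is what forces the cross term between the score and the asymmetric part of the Taylor remainder to vanish, leaving the clean quadratic bias; and the symmetry of $K$ in \Cref{assumption:kernel} makes $a_{Y_-}(c)=a_{Y_+}(c)$ as stated.

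The main obstacle is the interaction of the non-smooth check loss with the boundary. I must establish stochastic equicontinuity of the linearized objective uniformly in $(u,v)$ so that the convexity lemma legitimately transfers convergence of the objective to convergence of the argmin, and I must simultaneously verify that the truncated kernel moments $\mu_{\pm,j}(c),\nu_{\pm,j}$ enter $S_{Y_\pm}(c)$ and $\Sigma_{Y_\pm}(c)$ with the correct one-sided integration limits rather than the symmetric interior ones. Once the one-sided expansions are in hand, assembling \Cref{thm:sharp results} is immediate: conditional independence across the cutoff gives additivity of the variance and subtractivity of the bias, and the two $o_p$ remainders combine into $o_p(h_{Y_+}^2+h_{Y_-}^2)$ and $o_p((n_+h_{Y_+})^{-1}+(n_-h_{Y_-})^{-1})$ as claimed.
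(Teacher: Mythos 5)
Your proposal is correct and follows essentially the same route as the paper: the paper likewise reduces \Cref{thm:sharp results} to the one-sided expansions in \Cref{lemma:bias and variance} (its proof of the theorem is literally a pointer to that lemma), which it in turn justifies by citing the boundary results of \cite{kai2010local_supp} — precisely the Knight-identity/convexity argument with one-sided kernel moments and the sandwich $S_{Y_+}^{-1}(c)\Sigma_{Y_+}(c)S_{Y_+}^{-1}(c)$ that you reconstruct. The only imprecision is your description of the bias as coming from a ``$2\times 2$'' system: the relevant linear system is $(q+1)$-dimensional, and it is only after averaging the $q$ intercepts via $e_q^T/q$ (using the symmetry of the error density to kill the $c_k$-weighted terms) that the curvature projection collapses to the $q$-free local-linear boundary constant $a_{Y_+}(c)$, exactly as the paper's notation $(S_{Y_+}^{-1}(c))_{11}$, $(S_{Y_+}^{-1}(c))_{12}$ makes explicit.
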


Proof of \Cref{thm:sharp results} is a straightforward application of Theorem 2.1 in \cite{kai2010local_supp}. As long as a symmetric kernel is used, the value of $a_{Y_+}(c)$ remains the same if kernel moments on the other side of the cutoff are used. Generally, we have $b_{Y_+}(c) \neq b_{Y_-}(c)$ unless error distributions on both sides of the cutoff are assumed to be the same.

When the bandwidths above and below  the cutoff are assumed to be the same, \Cref{thm:sharp results} reduces to the following corollary. 
\begin{corollary} \label{corollary:sharp single h}
	Under \cref{assumption:m continuity,assumption:sigma continuity,assumption:kernel,assumption:density x,assumption:error distribution,assumption:bandwidth}, if we further assume $h_{Y_+}=h_{Y_-}=h$, then we have
	\begin{align}
	\text{Bias}(\hat{\tau}_{\text{sharp}}|\mathbf{X}) &= \frac{1}{2} \left[ a_{Y_+}(c) m_{Y+}^{(2)}(0) - a_{Y_-}(c) m_{Y-}^{(2)}(0)\right] h^2 + o_p(h^2),\\
	\text{Var}(\hat{\tau}_{\text{sharp}}|\mathbf{X}) &=  \frac{b_{Y_+}(c)\sigma_{\epsilon_{Y_+}}^2(0)}{n_+hf_{X_+}(0)} +  \frac{b_{Y_-}(c)\sigma_{\epsilon_{Y_-}}^2(0)}{n_-hf_{X_-}(0)} + o_p(\frac{1}{n_+h} + \frac{1}{n_-h}).
	\end{align}
\end{corollary}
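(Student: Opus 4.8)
The plan is to obtain \Cref{corollary:sharp single h} as an immediate specialization of \Cref{thm:sharp results}, since imposing $h_{Y_+}=h_{Y_-}=h$ does not disturb any hypothesis. First I would confirm that the common-bandwidth restriction is consistent with \Cref{assumption:bandwidth}: a single $h$ satisfying $h\to 0$ and $n_+h^5, n_-h^5\to\infty$ clearly fulfills the stated requirement for each of the four bandwidths separately. Hence every conclusion of \Cref{thm:sharp results} continues to hold verbatim, and no new Taylor expansion or estimation step is needed — the corollary is purely a matter of substitution and algebraic simplification.

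Second, I would substitute $h_{Y_+}=h_{Y_-}=h$ into the bias expression of \Cref{thm:sharp results}. The two leading terms then share the common factor $\tfrac{1}{2}h^2$, so factoring gives $\tfrac{1}{2}\bigl[a_{Y_+}(c)m_{Y+}^{(2)}(0)-a_{Y_-}(c)m_{Y-}^{(2)}(0)\bigr]h^2$, which is exactly the displayed leading term. The remainder collapses via $o_p(h_{Y_+}^2+h_{Y_-}^2)=o_p(2h^2)=o_p(h^2)$, the constant being absorbed into the $o_p$ notation. The variance follows by the same substitution: replacing $h_{Y_+}$ and $h_{Y_-}$ by $h$ in the two variance terms reproduces the stated expression directly, and the error term becomes $o_p\bigl(\tfrac{1}{n_+h}+\tfrac{1}{n_-h}\bigr)$ in the identical way.

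There is no genuine obstacle here; the only point requiring care is the bookkeeping on the stochastic-order terms, namely verifying that combining $o_p(h_{Y_+}^2)$ and $o_p(h_{Y_-}^2)$ under a shared bandwidth reduces to $o_p(h^2)$, and analogously for the variance remainder. Since these are routine consequences of the definition of $o_p$ together with the fact that the estimators above and below the cutoff are treated independently in \Cref{thm:sharp results}, the argument amounts to substitution plus this elementary order-of-magnitude check, and nothing further is required.
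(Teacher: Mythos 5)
Your proposal is correct and matches the paper's treatment exactly: the paper derives \Cref{corollary:sharp single h} as an immediate specialization of \Cref{thm:sharp results} under $h_{Y_+}=h_{Y_-}=h$, with no separate proof given, which is precisely your substitution-plus-$o_p$-bookkeeping argument. Your added check that the common bandwidth remains compatible with \Cref{assumption:bandwidth} is a sensible, if implicit, point.
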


The asymptotic results for the fuzzy RD are similarly provided in the theorem below.
\begin{theorem} \label{thm:fuzzy results}
	Under \cref{assumption:m continuity,assumption:sigma continuity,assumption:kernel,assumption:density x,assumption:error distribution,assumption:bandwidth}, as both $n_+$ and $n_- \rightarrow \infty$, we have
	\begin{align}
		\text{Bias}(\hat{\tau}_{\text{fuzzy}}|\mathbf{X}) &= \frac{1}{m_{T_+}(0) - m_{T_-}(0)}\left[\frac{1}{2} a_{Y_+}(c) m_{Y+}^{(2)}(0) h_{Y_+}^2 - \frac{1}{2} a_{Y_-}(c) m_{Y-}^{(2)}(0)  h_{Y_-}^2\right] \nonumber \\
		&\quad - \frac{m_{Y_+}(0) - m_{Y_-}(0)}{\left[m_{T_+}(0) - m_{T_-}(0)\right]^2}\left[\frac{1}{2} a_{T_+}(c) m_{T+}^{(2)}(0) h_{T_+}^2 - \frac{1}{2} a_{T_-}(c) m_{T-}^{(2)}(0)  h_{T_-}^2\right]\nonumber\\
		&\quad + o_p(h_{Y_+}^2 + h_{Y_-}^2 + h_{T_+}^2 + h_{T_-}^2),
	\end{align}
and $\text{Var}(\hat{\tau}_{\text{fuzzy}}|\mathbf{X})$ is provided in the proof; see the Online Supplement (A.16) in Section S.2.
\end{theorem}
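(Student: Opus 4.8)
The estimator $\hat{\tau}_{\text{fuzzy}}$ is a smooth ratio transformation of the four boundary estimators, so the natural route is a delta-method (first-order Taylor) expansion around the population values. Write $N = m_{Y_+}(0) - m_{Y_-}(0)$ and $D = m_{T_+}(0) - m_{T_-}(0)$ for the population numerator and denominator, and set $\hat{N} = \hat{m}_{Y_+}(0) - \hat{m}_{Y_-}(0)$, $\hat{D} = \hat{m}_{T_+}(0) - \hat{m}_{T_-}(0)$, so that $\hat{\tau}_{\text{fuzzy}} = \hat{N}/\hat{D}$ and $\tau_{\text{fuzzy}} = N/D$. Because a fuzzy design requires a jump in the treatment probability we have $D \neq 0$, and $g(n,d) = n/d$ is infinitely differentiable in a neighborhood of $(N,D)$. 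First I would record that the four LCQR estimators are conditionally consistent: by \Cref{lemma:bias and variance} each has conditional bias of order $h^2$ and conditional variance of order $1/(nh)$, so $\hat{D} \to D$ in probability and the expansion is valid with probability approaching one.

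Next I would expand $g$ to second order about $(N,D)$, noting that $g$ is linear in its first argument so $\partial^2 g/\partial n^2 = 0$:
\begin{equation*}
\hat{\tau}_{\text{fuzzy}} - \tau_{\text{fuzzy}} = \frac{1}{D}(\hat{N} - N) - \frac{N}{D^2}(\hat{D} - D) + R,
\end{equation*}
where $R$ collects the quadratic-and-higher remainder and is governed by $(\hat{D}-D)^2$ and $(\hat{N}-N)(\hat{D}-D)$. Taking $\mathrm{E}[\cdot\mid\mathbf{X}]$ of both sides gives
\begin{equation*}
\text{Bias}(\hat{\tau}_{\text{fuzzy}}\mid\mathbf{X}) = \frac{1}{D}\,\text{Bias}(\hat{N}\mid\mathbf{X}) - \frac{N}{D^2}\,\text{Bias}(\hat{D}\mid\mathbf{X}) + \mathrm{E}[R\mid\mathbf{X}].
\end{equation*}
The two leading terms are then read off directly: $\text{Bias}(\hat{N}\mid\mathbf{X})$ is exactly the numerator bias of \Cref{thm:sharp results}, while $\text{Bias}(\hat{D}\mid\mathbf{X})$ is its analogue with $(Y,a_Y,m_Y^{(2)},h_Y)$ replaced by $(T,a_T,m_T^{(2)},h_T)$, obtained by applying the identical argument to \cref{eq:T plus model,eq:T minus model}. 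Substituting reproduces the two bracketed terms in the statement, carrying the prefactors $1/D$ and $-N/D^2$.

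The only real work is bounding $\mathrm{E}[R\mid\mathbf{X}]$. After writing out the second-order terms, $\mathrm{E}[R\mid\mathbf{X}]$ is a fixed linear combination (with coefficients $N/D^3$ and $-1/D^2$) of $\mathrm{E}[(\hat{D}-D)^2\mid\mathbf{X}]$ and $\mathrm{E}[(\hat{N}-N)(\hat{D}-D)\mid\mathbf{X}]$, with higher-order Taylor terms of still smaller order. Each of these decomposes into a product of conditional biases, which is of order $h^4$, plus a conditional variance/covariance term of order $1/(nh)$ supplied by \Cref{lemma:bias and variance,lemma:covariance}. The crucial point is that \Cref{assumption:bandwidth}, $nh^5 \to \infty$, forces $1/(nh) = o(h^4)$, and trivially $h^4 = o(h^2)$; hence both pieces are $o_p(h_{Y_+}^2 + h_{Y_-}^2 + h_{T_+}^2 + h_{T_-}^2)$ and $\mathrm{E}[R\mid\mathbf{X}]$ is absorbed into the stated remainder. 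Verifying that these squared and cross-product deviations are genuinely of smaller order than the leading $h^2$ bias under the given bandwidth condition is the main obstacle; everything else is bookkeeping. Finally, the conditional variance deferred to the supplement follows from the same linearization: $\text{Var}(\hat{\tau}_{\text{fuzzy}}\mid\mathbf{X})$ is the quadratic form $(1/D,\,-N/D^2)$ applied to the $2\times2$ conditional covariance matrix of $(\hat{N},\hat{D})$, which expands into the variances of \Cref{thm:sharp results} together with the cross-covariances of \Cref{lemma:covariance}, yielding expression (\ref{fuzzy variance long form}).
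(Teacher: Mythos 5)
Your proposal is correct and takes essentially the same route as the paper: the paper's proof opens with exactly your first-order linearization of $\hat{\tau}_{\text{fuzzy}}-\tau_{\text{fuzzy}}$ with prefactors $1/D$ and $-N/D^2$, reads off the bias by applying \Cref{lemma:bias and variance} four times, and obtains the variance in (\ref{fuzzy variance long form}) from the same quadratic form in the conditional covariance matrix of $(\hat{N},\hat{D})$ using \Cref{lemma:covariance}. The only difference is that the paper simply asserts the $o_p(h_{Y_+}^2+h_{Y_-}^2+h_{T_+}^2+h_{T_-}^2)$ remainder, whereas you verify it explicitly through the second-order terms and \Cref{assumption:bandwidth} (via $1/(n h)=o(h^{2})$ since $nh^{3}\rightarrow\infty$), which is a harmless strengthening of the paper's argument.
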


From \Cref{thm:sharp results,thm:fuzzy results} and using a single bandwidth $h$, we see a lot of similarities between  LCQR and  LLR: both biases have order $O(h^2)$, and both variances have order $O((nh)^{-1})$. In fact, the asymptotic bias expression is identical for the two methods. The variance from LCQR, however, is smaller than that from LLR under many non-normal error distributions, for which we  provide  numerical results in the next subsection.

In addition, in the special case of $q = 1$, we can show that 
\begin{equation} \label{eq:q1}
b_{Y_+}(c) = \frac{\mu_{+,2}^2 \nu_{+,0} - 2 \mu_{+,1}\mu_{+,2}\nu_{+,1} + \mu_{+,1}^2\nu_{+,2}}{(\mu_{+,0}\mu_{+,2}-\mu_{+,1}^2)^2} \frac{1}{q^2}\sum_{k=1}^{q}\sum_{k'=1}^{q} \frac{\tau_{kk'}}{f_{\epsilon_{Y_+}}(c_k)f_{\epsilon_{Y_+}}(c_k')},
\end{equation}
where $\tau_{kk'}=\tau_{k}\wedge\tau_{k'}-\tau_{k}\tau_{k'}$,   $k$, $k'=1$, $2$, ..., $q$, $c_k=F_{\epsilon_{Y_+}}^{-1}(\tau_k)$,  and $F_{\epsilon_{Y_+}}$,  $f_{\epsilon_{Y_+}}$ are the c.d.f.  and p.d.f. of the error distribution, respectively.  This result  similarly holds for $b_{Y_-}(c)$. However,  \cref{eq:q1} does not hold for $q \geq 2$ in general.

\subsection{Efficiency gains on the boundary}

We provide some numerical evidence in support of using LCQR in RD. To facilitate the presentation of the results, we consider \Cref{corollary:sharp single h} with a single bandwidth and replace the conditional density $f_{X_+}(0)$ with the unconditional density $f_{X}(0)$ so that the product $n_+ f_{X_+}(0)$ becomes $nf_{X}(0)$. Similarly, $n_- f_{X_-}(0)$ is also replaced by $nf_{X}(0)$. 

Under standard assumptions, the MSE for the LLR estimator of $\tau_{sharp}$ is given by (see, e.g., \cite{imbens2012optimal})
\begin{equation} \label{eq:MSE LLR}
	\begin{split}
	\text{MSE}_{\text{LLR}} &= \frac{1}{4} \left[ a_{Y_+}(c) m_{Y+}^{(2)}(0) - a_{Y_-}(c) m_{Y-}^{(2)}(0)\right]^2 h^4\\
	&\quad + \frac{1}{nh}\left[ \frac{b(c)\sigma_{\epsilon_{Y_+}}^2(0)}{f_{X}(0)} +  \frac{b(c)\sigma_{\epsilon_{Y_-}}^2(0)}{f_{X}(0)} \right] + o_p(h^4 + \frac{1}{nh}),
	\end{split}
\end{equation}
where 
\begin{equation} \label{eq:bc}
	b(c) = \frac{\mu_{+,2}^2 \nu_{+,0} - 2 \mu_{+,1}\mu_{+,2}\nu_{+,1} + \mu_{+,1}^2\nu_{+,2}}{(\mu_{+,0}\mu_{+,2}-\mu_{+,1}^2)^2}.
\end{equation}

Let $a(c) = a_{Y_+}(c)= a_{Y_-}(c)$. By assuming errors on both sides of the cutoff have the same distribution and letting $b_{Y}(c) = b_{Y_+}(c) = b_{Y_-}(c)$, we can further simplify the variance expression in \Cref{corollary:sharp single h} to have
\begin{align} \label{eq:MSE LCQR}
\text{MSE}_{\text{LCQR}}
           &= \frac{1}{4} \left[ a(c) m_{Y+}^{(2)}(0) - a(c) m_{Y-}^{(2)}(0)\right]^2 h^4 \nonumber \nonumber\\
           &\quad + \frac{1}{nh}\left[ \frac{b(c)\sigma_{\epsilon_{Y_+}}^2(0)}{f_{X}(0)} +  \frac{b(c)\sigma_{\epsilon_{Y_-}}^2(0)}{f_{X}(0)} \right] \frac{b_Y(c)}{b(c)} + o_p(h^4 + \frac{1}{nh}),
\end{align}
where the two MSEs differ by a factor  $\frac{b_Y(c)}{b(c)}$ for the variance term. Minimizing these two MSEs gives optimal bandwidths. Substituting the bandwidths into the MSEs leads to the optimal values, $\text{MSE}^{\text{opt}}_{\text{LLR}}$ and $\text{MSE}^{\text{opt}}_{\text{LCQR}}$. Similar to \cite{kai2010local}, we define the asymptotic relative efficiency (ARE) of LCQR with respect to LLR on the boundary as
\begin{equation} \label{eq:are}
	\text{ARE} = \frac{\text{MSE}^{\text{opt}}_{\text{LLR}}}{\text{MSE}^{\text{opt}}_{\text{LCQR}}} \rightarrow \left[\frac{b_Y(c)}{b(c)}\right]^{-4/5}.
\end{equation}

When $q = 1$, the ratio in (\ref{eq:are}) simplifies to  $\left[\frac{1}{q^2}\sum_{k=1}^{q}\sum_{k'=1}^{q} \frac{\tau_{kk'}}{f_{\epsilon_{Y_+}}(c_k)f_{\epsilon_{Y_+}}(c_k')}\right]^{-4/5}.$ When $q \geq 2$, this simplification does not hold. Yet we can evaluate the ratio in \cref{eq:are} for many common distributions. Using the triangular kernel as an example, we calculate the ratio for the five error distributions listed in \Cref{tab:are}. Using the Epanechnikov kernel gives similar results.

\bigskip

\begin{table}[htp]
	\centering
	\caption{Asymptotic relative efficiency (ARE) of LCQR in sharp RD designs}
	\begin{tabular}{lrrrrr}
		\toprule
		\text{Error distribution}& $q$ = 1  & $q$ = 5 & $q$ = 9 & $q$ = 19 & $q$ = 99 \\
		\midrule
		1. $N(0,1)$& 0.6968 & 0.9290 & 0.9569 & 0.9728 & 0.9819 \\
		2. \text{Laplace with $\mu=0$ and $\sigma = 1$}  & 1.7411 & 1.3315 & 1.2920 & 1.2616 & 1.2303 \\
		3. \textit{t}-distribution with 3 degrees of freedom& 1.4718 & 1.6401 & 1.6144 & 1.5703 & 1.4854 \\
		4. $0.95N(0,1) + 0.05N(0,3^2)$& 0.8639 & 1.1271 & 1.1511 & 1.1579 & 1.1327 \\
		5. $0.95N(0,1) + 0.05N(0,10^2)$& 2.6960 & 3.4578 & 3.4986 & 3.4590 & 2.2632 \\
		\bottomrule
	\end{tabular}%
	\label{tab:are}%
\end{table}%

The ARE for boundary points in \Cref{tab:are} largely follows the same pattern for interior points as shown in Table 1 of \cite{kai2010local}. When errors are normally distributed (Distribution 1), increasing $q$ will quickly bring ARE close to one, reflecting a very small efficiency loss of LCQR with respect to LLR. For non-normal errors (Distributions 2 - 5), there can be large efficiency gains. The  column with $q=1$ in \Cref{tab:are} is identical for both interior and boundary points, a result of using a symmetric kernel function. When $q \geq 2$, the numbers differ from those in Table 1 of \cite{kai2010local}. Overall, \Cref{tab:are} provides the numerical evidence in support of using LCQR in RD when data are non-normal, i.e., the corresponding ARE mostly exceeds 1.

\section{Inference on the boundary}

The goal of inference is related to but different from that of estimation. While one still needs a precise $\hat{\tau}_{\text{sharp}}$ or $\hat{\tau}_{\text{fuzzy}}$, much of the effort is spent on making $t$-statistics in \cref{eq:sharp t,eq:fuzzy t} behave like a standard normal random variable. Common approaches include reducing the impact of bias by under-smoothing, directly adjusting the bias, combining bias correction and standard error (s.e.) adjustment, etc. This section studies the bias-corrected and s.e.-adjusted $t$-statistics  based on LCQR  for conducting inference in sharp and fuzzy RD designs.

\subsection{The sharp case}

From \Cref{thm:sharp results}, the bias-corrected estimator is given by
\begin{equation} \label{eq:sharp bc}
	\hat{\tau}_{\text{sharp}}^{\text{bc}} = \hat{\tau}_{\text{sharp}} - \widehat{\text{Bias}}(\hat{\tau}_{\text{sharp}}),
\end{equation}
where $\widehat{\text{Bias}}(\hat{\tau}_{\text{sharp}}) = \widehat{\text{Bias}}(\hat{m}_{Y_+}) - \widehat{\text{Bias}}(\hat{m}_{Y_-})$ with
\begin{equation}\label{eq:m+ bias hat}
	\widehat{\text{Bias}}(\hat{m}_{Y_+})= \frac{1}{2} a_{Y_+}(c) \hat{m}_{Y_+}^{(2)} h_{Y_+}^2 ,   \text{ \ }
	\widehat{\text{Bias}}(\hat{m}_{Y_-})= \frac{1}{2} a_{Y_-}(c) \hat{m}_{Y_-}^{(2)} h_{Y_-}^2,
\end{equation}
and $\hat{m}_{Y_+}^{(2)}$ and $\hat{m}_{Y_-}^{(2)}$ are the estimators for the second derivatives $m_{Y_+}^{(2)}$ and $m_{Y_-}^{(2)}$, respectively, which can be computed using LCQR with a second-order polynomial.

Simply replacing $\hat{\tau}_{\text{sharp}}$ in \cref{eq:sharp t} with $\hat{\tau}_{\text{sharp}}^{\text{bc}} $ may still lead to undercoverage of the resulting confidence intervals.   \cite{calonico2014robust} observed that the key to fix this problem is to take into consideration  the additional variability introduced by bias correction. Similar idea is applied to inference based on LCQR, i.e., instead of using $\text{Var}(\hat{\tau}_{\text{sharp}})$ in the denominator of \cref{eq:sharp t}, we use $\text{Var}(\hat{\tau}_{\text{sharp}} - \widehat{\text{Bias}}(\hat{\tau}_{\text{sharp}}))$. The additional variability due to bias correction increases the adjusted variance, and the s.e./variance  adjustment naturally improves the coverage of the resulting confidence intervals.  

By the \textit{i.i.d.} assumption, we have
\begin{align}
	\text{Var}(\hat{\tau}_{\text{sharp}} - \widehat{\text{Bias}}(\hat{\tau}_{\text{sharp}})) &= \text{Var}(\hat{\tau}_{\text{sharp}}) + \text{Var}(\widehat{\text{Bias}}(\hat{\tau}_{\text{sharp}})) -2 \text{Cov}(\hat{\tau}_{\text{sharp}},\widehat{\text{Bias}}(\hat{\tau}_{\text{sharp}})) \nonumber\\
	&= \text{Var}(\hat{m}_{Y_+}) + \text{Var}(\hat{m}_{Y_-}) + \text{Var}(\widehat{\text{Bias}}(\hat{m}_{Y_+})) + \text{Var}(\widehat{\text{Bias}}(\hat{m}_{Y_-})) \nonumber \\
	& \quad - 2 \text{Cov}(\hat{m}_{Y_+},\widehat{\text{Bias}}(\hat{m}_{Y_+})) - 2 \text{Cov}(\hat{m}_{Y_-},\widehat{\text{Bias}}(\hat{m}_{Y_-})), \label{eq:sharp var adjusted}
\end{align}
where the expressions for $\text{Var}(\hat{m}_{Y_+})$ and $\text{Var}(\hat{m}_{Y_-})$ are available in \cite{kai2010local_supp, kai2010local} and Lemma 2. We contribute to the literature by deriving the explicit forms of the variances of bias and the covariances in \cref{eq:sharp var adjusted}. 

\begin{theorem} \label{thm:sharp t adjusted dist}
	Under \cref{assumption:m continuity,assumption:sigma continuity,assumption:kernel,assumption:density x,assumption:error distribution,assumption:bandwidth}, as both $n_+ \text{ and }n_- \rightarrow \infty$, the adjusted $t$-statistic for the sharp RD follows an asymptotic normal distribution
	\begin{equation} \label{eq:sharp t adjusted dist}
		t_{\text{sharp}}^{\text{adj.}} = \frac{\hat{\tau}_{\text{sharp}} - \widehat{\text{Bias}}(\hat{\tau}_{\text{sharp}})-\tau_0}{\sqrt{\text{Var}(\hat{\tau}_{\text{sharp}} - \widehat{\text{Bias}}(\hat{\tau}_{\text{sharp}}))}} \overset{d}{\to} N(0,1).
	\end{equation}
	The exact form of the adjusted variance is provided in the proof; see the proof of \Cref{thm:sharp t adjusted dist} in the Online Supplement Section S.2. 
\end{theorem}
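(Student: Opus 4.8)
The plan is to exploit the independence between the samples above and below the cutoff, so that $\hat{\tau}_{\text{sharp}}^{\text{bc}} = \hat{\tau}_{\text{sharp}} - \widehat{\text{Bias}}(\hat{\tau}_{\text{sharp}})$ splits into two independent pieces, one built from the ``$+$'' data and one from the ``$-$'' data. It then suffices to establish a joint asymptotic linear (Bahadur) representation for each side and to glue the two normal limits together (the statement being understood with the numerator centered at the true effect, i.e. under the null $\tau_0 = \tau_{\text{sharp}}$). Concretely, on the ``$+$'' side the bias-corrected piece is $\hat{m}_{Y_+}(0) - \tfrac{1}{2}a_{Y_+}(c)\,\hat{m}_{Y_+}^{(2)}h_{Y_+}^2$, where $\hat{m}_{Y_+}(0)=\tfrac1q\sum_k\hat a_k$ comes from the local-linear CQR fit in \cref{eq:lcqr obj} and $\hat{m}_{Y_+}^{(2)}=2\hat b_2$ comes from a separate local-quadratic CQR fit. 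The first task is therefore to obtain the \emph{joint} limiting distribution of the coefficient vectors from these two fits.

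To do this I would follow the M-estimation route used in \cite{kai2010local_supp}. Writing the composite check-loss objective for each fit and applying the Knight identity together with a convexity (quadratic-approximation) argument as in the proof of Theorem 2.1 of \cite{kai2010local_supp}, I would show that the centered and rescaled coefficient vector of each fit admits a representation of the form $S^{-1}W_n + o_p(1)$, where $W_n$ is a kernel-weighted sum of the sign-type score contributions $\tau_k - I(\epsilon_{Y_+,i}\le c_k)$ and $S$ is the corresponding Hessian limit. Stacking the local-linear scores and the local-quadratic scores into a single vector, a Lyapunov CLT for the resulting triangular array of terms that are independent conditional on $\mathbf X$ delivers joint asymptotic normality of $(\hat m_{Y_+}(0),\hat m_{Y_+}^{(2)})$, with an explicit covariance matrix determined by the kernel moments $\mu_{+,j},\nu_{+,j}$ and the error density evaluated at the quantile points $c_k$. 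The analogous statement holds on the ``$-$'' side.

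Given the joint representation, the remaining steps are mostly bookkeeping. The bias-correction term is designed so that its probability limit equals the leading $O(h_{Y_+}^2)$ bias identified in \Cref{thm:sharp results}; hence the bias that survives in $\hat\tau_{\text{sharp}}^{\text{bc}}-\tau_{\text{sharp}}$ is of smaller order and, under \Cref{assumption:bandwidth}, is $o_p$ of the standard deviation, so no further recentering is needed in the limit. The variance of the bias-corrected estimator is then assembled from the blocks of the joint covariance matrix exactly as in \cref{eq:sharp var adjusted}: the $\text{Var}(\hat m_{Y_\pm})$ blocks reproduce \Cref{thm:sharp results}, while the new $\text{Var}(\widehat{\text{Bias}}(\hat m_{Y_\pm}))$ and $\text{Cov}(\hat m_{Y_\pm},\widehat{\text{Bias}}(\hat m_{Y_\pm}))$ blocks come from the second-derivative fit and its cross-covariance with the level fit. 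Consistency of the plug-in variance, replacing $m^{(2)}$, $\sigma^2_\epsilon$, $f_X(0)$ and the density factors by their estimators, follows from the consistency of each component, and Slutsky's theorem then yields the $N(0,1)$ limit in \cref{eq:sharp t adjusted dist}.

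I expect the main obstacle to be the cross-covariance $\text{Cov}(\hat m_{Y_+},\widehat{\text{Bias}}(\hat m_{Y_+}))$, that is, the covariance between the level estimator from the local-linear fit and the second-derivative estimator from the local-quadratic fit. Because the two estimators are different functionals of the \emph{same} kernel-weighted data (possibly with different bandwidths), their scores are correlated, and computing this off-diagonal block requires carefully tracking the interaction of the two polynomial designs through the shared $S^{-1}$ structure and the double sum over quantile indices $k,k'$ that produces the $\tau_{kk'}=\tau_k\wedge\tau_{k'}-\tau_k\tau_{k'}$ terms. Keeping this covariance exact, rather than discarding it, is precisely what makes the adjusted variance, and hence the coverage of the resulting confidence intervals, correct.
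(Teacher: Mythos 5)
Your proposal is correct in substance and follows the same overall architecture as the paper — independence of the two sides of the cutoff, an asymptotic linear representation for each bias-corrected piece, and variance bookkeeping exactly along \cref{eq:sharp var adjusted} — but it differs in the key technical device. You treat the level estimator (local-linear CQR) and the second-derivative estimator (local-quadratic CQR) as \emph{two separate fits}, stack their kernel-weighted indicator scores, and invoke a Lyapunov CLT for the triangular array conditional on $\mathbf{X}$ to get their joint law; you then flag the cross-covariance $\text{Cov}(\hat m_{Y_+},\widehat{\text{Bias}}(\hat m_{Y_+}))$ as the main obstacle. The paper short-circuits precisely this obstacle: it works with the \emph{single} joint coefficient vector $\hat\theta_{n_+}=(\hat u_1,\dots,\hat u_q,\hat v_1,\dots,\hat v_p)^T$ from one higher-order LCQR fit, whose joint asymptotic normality with sandwich covariance $\frac{\sigma^2_{\epsilon_{Y_+}}(0)}{f_{X_+}(0)}S_{Y_+}^{-1}(c)\Sigma_{Y_+}(c)S_{Y_+}^{-1}(c)$ is imported wholesale from \Cref{lemma:theta distribution} (Theorem 2.1 of \cite{kai2010local_supp}). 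The covariance you worry about is then just the off-diagonal block: $\text{Cov}(\hat u_k,\hat v_2)$ is read off as $e_q^T(S_{Y_+}^{-1}\Sigma_{Y_+}S_{Y_+}^{-1})_{12,2}$, and $\text{Var}(\widehat{\text{Bias}}(\hat m_{Y_+}))$ comes from \Cref{lemma:bias and var of bias} via $\hat m^{(2)}_{Y_+}=m^{(2)}_{Y_+}+2\hat v_2/(h_{Y_+}^2\sqrt{n_+h_{Y_+}})$, so no new CLT or score-stacking is needed — only covariance algebra. Your stacked-fits route is a legitimate alternative (and arguably closer to what is implemented, since the level is fit with $p=1$ and the second derivative with a higher-order polynomial), at the cost of having to carry two design matrices through a rectangular cross-$\Sigma$ block; the paper's route buys brevity by letting one joint fit deliver all blocks at once. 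One caveat on your bias step: the surviving bias after correction is $O(h^3)$, so its negligibility relative to the $O((nh)^{-1/2})$ standard deviation requires $nh^7\to 0$, which \Cref{assumption:bandwidth} alone does not guarantee (the adjusted-MSE bandwidth $h\propto n^{-1/7}$ of \Cref{thm:adjusted bandwidth} makes $nh^7$ converge to a constant); the paper's proof is silent on this point as well, so your argument matches its level of rigor, but attributing the negligibility to \Cref{assumption:bandwidth} is slightly too strong as stated.
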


Following \Cref{thm:sharp t adjusted dist}, the confidence interval for the nominal $95\%$ coverage becomes $\hat{\tau}_{\text{sharp}} - \widehat{\text{Bias}}(\hat{\tau}_{\text{sharp}}) \pm 1.96 \sqrt{\text{Var}(\hat{\tau}_{\text{sharp}} - \widehat{\text{Bias}}(\hat{\tau}_{\text{sharp}}))}$, which incorporates the variability introduced by bias correction and will thus improve the coverage probability.

\setcounter{theorem}{0}
\begin{remark}
	\Cref{thm:sharp t adjusted dist} assumes that the researcher uses the same bandwidths  ($h_{Y_+}$ and $h_{Y_-}$) for estimation, bias correction, and s.e. adjustment. This simplifies the presentation of the result and the proof. It is also  empirically relevant and appealing  since using a single bandwidth throughout estimation and inference greatly reduces the complexity of implementation, though the researcher might want to use a different bandwidth for every required nonparametric estimation. Moreover, \citeauthor{calonico2018jasa} (\citeyear{calonico2018jasa}, \citeyear{calonico2020optimalbandwidth})
	discuss how using the same bandwidths is optimal in some well-defined senses for the robust $t$-test of  \cite{calonico2014robust}.
\end{remark}

\begin{remark}
	We assume two different bandwidths, $h_{Y_+}$ and $h_{Y_-}$, in \Cref{thm:sharp t adjusted dist}. If they are further assumed to be the same, $h_{Y_+} = h_{Y_-} = h$, then by replacing the conditional density, $f_{X_+}(0)$ and $f_{X_-}(0)$, with the unconditional one, $f_{X}(0)$, it can be shown that 
	\begin{equation*}
		\text{Var}(\hat{\tau}_{\text{sharp}} - \widehat{\text{Bias}}(\hat{\tau}_{\text{sharp}})) = \frac{1}{nh}C_1 + o_p(\frac{1}{nh}),
	\end{equation*}
	where  the constant $C_1$ can be straightforwardly inferred from the proof of \Cref{thm:sharp t adjusted dist} in the Online Supplement.
\end{remark}

\begin{remark}
Instead of bias correction, one may consider under-smoothing ($h=o(n^{-1/5})$), so that the bias is negligible relative to the square root of the variance, which leads to the conventional 95\% confidence interval $\hat{\tau}_{\text{sharp}}  \pm 1.96 \sqrt{\text{Var}(\hat{\tau}_{\text{sharp}})}$. Yet \citeauthor{calonico2018jasa} (\citeyear{calonico2018jasa}, \citeyear{calonico2020optimalbandwidth}) show that robust bias correction can offer high-order refinements in the sense that its resulting confidence interval tends to have smaller coverage errors than that based on under-smoothing. Their proof is based on coverage error expansions for confidence intervals resulting from local polynomial estimation. It  would thus be value-added to this paper if one similarly extends the proof to  LCQR-based confidence intervals. 
\end{remark}
 
\subsection{The fuzzy case}

Similar to the sharp case above, we also propose the adjusted $t$-statistic for the fuzzy RD design. Following \cite{feir2016weak}, we use a null-restricted \textit{t}-statistic to help eliminate the size distortion due to possibly weak identification in the fuzzy RD. Use \cref{eq:fuzzy treatment} to rewrite the the null $H_0: \tau_{\text{fuzzy}} = \tau_0$ as
\begin{equation}
	H_0: [m_{Y_+}(0) - \tau_0 m_{T_+}(0)] - [m_{Y_-}(0) - \tau_0 m_{T_-}(0)]= 0.
\end{equation}

Define $\tilde{\tau}_{\text{fuzzy}} = (\hat{m}_{Y_+} - \tau_0 \hat{m}_{T_+}) - (\hat{m}_{Y_-} - \tau_0 \hat{m}_{T_-})$. The bias-corrected $\tilde{\tau}_{\text{fuzzy}}$ is given by
\begin{equation}
	\tilde{\tau}^{\text{bc}}_{\text{fuzzy}} = \tilde{\tau}_{\text{fuzzy}} - \widehat{\text{Bias}}(\tilde{\tau}_{\text{fuzzy}}),
\end{equation}
where
\begin{equation} \label{eq:fuzzy biases}
	\widehat{\text{Bias}}(\tilde{\tau}_{\text{fuzzy}}) = [\widehat{\text{Bias}}(\hat{m}_{Y_+}) - \tau_0 \widehat{\text{Bias}}(\hat{m}_{T_+})] - [\widehat{\text{Bias}}(\hat{m}_{Y_-}) - \tau_0 \widehat{\text{Bias}}(\hat{m}_{T_-})].
\end{equation}

We note that $\widehat{\text{Bias}}(\hat{m}_{Y_+})$ and $\widehat{\text{Bias}}(\hat{m}_{Y_-})$ are defined in \cref{eq:m+ bias hat}, while $\widehat{\text{Bias}}(\hat{m}_{T_+})$ and $\widehat{\text{Bias}}(\hat{m}_{T_-})$ are defined similarly by replacing $Y$ with $T$. All the bias terms can be estimated using the result in Lemma 2 in the Online Supplement. To illustrate the components in the adjusted variance, it is helpful to consider first the data above the cutoff (the expressions for the data below the cutoff result by replacing $+$ with $-$):
\begin{align} \label{eq:fuzzy adj var}
	&\text{Var}( (\hat{m}_{Y_+} - \tau_0 \hat{m}_{T_+}) -(\widehat{\text{Bias}}(\hat{m}_{Y_+}) - \tau_0 \widehat{\text{Bias}}(\hat{m}_{T_+})) )\nonumber\\
	&= \text{Var}(\hat{m}_{Y_+}) + \tau_0^2\text{Var}(\hat{m}_{T_+}) - 2 \tau_0 \text{Cov}(\hat{m}_{Y_+},\hat{m}_{T_+})\nonumber\\
	&\quad + \text{Var}(\widehat{\text{Bias}}(\hat{m}_{Y_+})) + \tau_0^2 \text{Var}(\widehat{\text{Bias}}(\hat{m}_{T_+})) - 2 \tau_0 \text{Cov}(\widehat{\text{Bias}}(\hat{m}_{Y_+}),\widehat{\text{Bias}}(\hat{m}_{T_+}))\nonumber\\
	&\quad -2 \text{Cov}(\hat{m}_{Y_+},\widehat{\text{Bias}}(\hat{m}_{Y_+})) + 2 \tau_0 \text{Cov}(\hat{m}_{Y_+},\widehat{\text{Bias}}(\hat{m}_{T_+}))\nonumber\\
	&\quad + 2 \tau_0 \text{Cov}(\hat{m}_{T_+},\widehat{\text{Bias}}(\hat{m}_{Y_+})) - 2 \tau_0^2 \text{Cov}(\hat{m}_{T_+},\widehat{\text{Bias}}(\hat{m}_{T_+})).
\end{align}

To operationalize the variance adjustment process, we derive all the required covariance terms in the Online Supplement. 

\setcounter{theorem}{3}
\begin{theorem} \label{thm:fuzzy t adjusted dist}
	Under \cref{assumption:m continuity,assumption:sigma continuity,assumption:kernel,assumption:density x,assumption:error distribution,assumption:bandwidth}, as both $n_+ \text{ and }n_- \rightarrow \infty$, the adjusted $t$-statistic for the fuzzy RD follows an asymptotic normal distribution
	\begin{equation} \label{eq:fuzzy t adjusted dist}
		t_{\text{fuzzy}}^{\text{adj.}} = \frac{\tilde{\tau}_{\text{fuzzy}} - \widehat{\text{Bias}}(\tilde{\tau}_{\text{fuzzy}})}{\sqrt{\text{Var}(\tilde{\tau}_{\text{fuzzy}} - \widehat{\text{Bias}}(\tilde{\tau}_{\text{fuzzy}}))}} \overset{d}{\to} N(0,1).
	\end{equation}
The exact expression for $\text{Var}(\tilde{\tau}_{\text{fuzzy}} - \widehat{\text{Bias}}(\tilde{\tau}_{\text{fuzzy}}))$ is provided  in the proof; see the proof of \Cref{thm:fuzzy t adjusted dist} in the Online Supplement Section S.2.
\end{theorem}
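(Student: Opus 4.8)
The plan is to establish the $N(0,1)$ limit by deriving a joint asymptotic linear (Bahadur-type) representation for all the LCQR estimators entering the numerator, applying a multivariate central limit theorem to the relevant linear combination, and then studentizing by the consistently estimated adjusted variance via Slutsky's theorem. The argument runs parallel to the proof of \Cref{thm:sharp t adjusted dist}, so I would reuse its structure and concentrate on the new ingredients contributed by the treatment-probability estimators. A useful simplification at the outset is that, because $\tilde{\tau}_{\text{fuzzy}}$ is \emph{linear} in the four level estimators rather than the ratio $\hat{\tau}_{\text{fuzzy}}$, no delta-method expansion of an estimated denominator is required; this is exactly what renders the null-restricted statistic robust to a weak first stage.

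First I would write each of the four level estimators $\hat{m}_{Y_+}, \hat{m}_{T_+}, \hat{m}_{Y_-}, \hat{m}_{T_-}$ and each of the four second-derivative estimators used in \cref{eq:fuzzy biases} in the Bahadur form guaranteed by Theorem 2.1 of \cite{kai2010local_supp}: a kernel-weighted sum of sign-transformed error terms, plus the leading $O(h^2)$ smoothing bias, plus an $o_p$ remainder. The check-loss first-order conditions linearize around the population quantile intercepts, and the symmetry of the error distributions in \Cref{assumption:error distribution} together with the symmetric kernel in \Cref{assumption:kernel} removes the extra bias term. Collecting the leading stochastic parts into a single vector, conditionally on $\mathbf{X}$ these are sums of independent mean-zero terms across $i$, and \Cref{assumption:bandwidth} guarantees that $\hat{m}^{(2)}$ is consistent so that the estimated bias cancels the $O(h^2)$ bias of $\tilde{\tau}_{\text{fuzzy}}$ up to $o_p$ of the standard deviation.

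Next I would form the specific linear combination $\tilde{\tau}_{\text{fuzzy}} - \widehat{\text{Bias}}(\tilde{\tau}_{\text{fuzzy}})$ dictated by \cref{eq:fuzzy biases}. Under $H_0$ the population target vanishes, so after bias correction the numerator is asymptotically centered at zero, and a Lindeberg--Feller CLT invoked through the Cram\'er--Wold device delivers its asymptotic normality with variance equal to the exact adjusted variance built from \cref{eq:fuzzy adj var} and its mirror image on the minus side. The independence of the data above and below the cutoff makes the plus and minus contributions independent, so their variances add. Slutsky's theorem, applied with a consistent estimator of this variance that is bounded away from zero, then yields the stated $N(0,1)$ limit.

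The main obstacle will be the covariance bookkeeping, which is precisely where the paper's contribution lies. On each side of the cutoff I must track three kinds of correlation simultaneously: between the outcome and treatment level estimators, e.g. $\text{Cov}(\hat{m}_{Y_+}, \hat{m}_{T_+})$, which depends on the joint law of $(\epsilon_{Y_+,i}, \epsilon_{T_+,i})$ at a common observation and on the ratio $h_{Y_+}/h_{T_+}$; between each level estimator and its own bias-correction term, e.g. $\text{Cov}(\hat{m}_{Y_+}, \widehat{\text{Bias}}(\hat{m}_{Y_+}))$, which is nonzero because the intercept and second-derivative LCQR fits share the same data and kernel weights; and the cross terms such as $\text{Cov}(\hat{m}_{Y_+}, \widehat{\text{Bias}}(\hat{m}_{T_+}))$. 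Each of these reduces to a kernel-moment factor times a quantile-covariance factor assembled from the sandwich matrices $S$ and $\Sigma$ of \Cref{thm:sharp results} --- reducing, when $q=1$, to the factor $\frac{1}{q^2}\sum_{k=1}^{q}\sum_{k'=1}^{q} \tau_{kk'}/[f_{\epsilon_{Y_+}}(c_k) f_{\epsilon_{Y_+}}(c_{k'})]$ of \cref{eq:q1}. Deriving these covariances from the joint Bahadur representation (as in \Cref{lemma:covariance}), assembling them into \cref{eq:fuzzy adj var}, and verifying their consistent estimation is the technical heart of the proof; once this is done, the remaining steps are routine.
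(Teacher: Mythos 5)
Your proposal is correct and follows essentially the same route as the paper's proof: both exploit the linearity of the null-restricted statistic $\tilde{\tau}_{\text{fuzzy}}$ to avoid any delta-method step, obtain the Bahadur-type representations of the level and second-derivative LCQR estimators from Theorem 2.1 of \cite{kai2010local_supp} (\Cref{lemma:theta distribution}), assemble exactly the same ten variance/covariance terms per side of the cutoff in \cref{eq:fuzzy adj var} --- including the three new cross terms $\text{Cov}(\widehat{\text{Bias}}(\hat{m}_{Y_+}),\widehat{\text{Bias}}(\hat{m}_{T_+}))$, $\text{Cov}(\hat{m}_{Y_+},\widehat{\text{Bias}}(\hat{m}_{T_+}))$ and $\text{Cov}(\hat{m}_{T_+},\widehat{\text{Bias}}(\hat{m}_{Y_+}))$ that the paper derives via \Cref{lemma:covariance} --- and conclude by asymptotic normality of the resulting linear combination with independent contributions from the two sides of the cutoff. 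The only cosmetic differences are that you invoke the Lindeberg--Feller CLT with Cram\'er--Wold and an explicit Slutsky step where the paper simply applies \Cref{lemma:theta distribution} to each of the four bias-corrected terms and adds the plus and minus variances; the substance is identical.
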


\subsection{A revised MSE-optimal bandwidth} \label{sec:bandwidth}

In this subsection, we propose a method to revise the MSE-optimal bandwidth by taking into consideration both bias correction and s.e. adjustment. It is well-known in the nonparametric literature that the MSE-optimal bandwidth, when used in inference, often induces undercoverage of conventional confidence intervals. The root cause is that the MSE-optimal bandwidth, $h_{\text{MSE}}=O(n^{-1/5})$, leads to $n h_{\text{MSE}}^5 \rightarrow \text{a constant}$,  while we need $nh^5 \rightarrow 0$ in order to ignore bias and use undersmoothing without bias correction. Hence using $h_{\text{MSE}}$ in \cref{eq:sharp t} could lead to poor coverage of confidence intervals. Using a bias-corrected \textit{t}-statistic without adjusting the variance, $\frac{\hat{\tau}_{\text{sharp}} - \widehat{\text{Bias}}(\hat{\tau}_{\text{sharp}})-\tau_0}{\sqrt{\text{Var}(\hat{\tau}_{\text{sharp}})}}$, does not solve this problem; see the discussion in Section 2 of \cite{calonico2014robust}.

It is helpful to revisit the bandwidth selection process in an MSE-optimal setting in order to find a solution. We use a single bandwidth $h$ for illustration. Consider the usual MSE when estimating a conditional mean function $m(\cdot)$ by $\hat{m}(\cdot)$:
\begin{equation}
	\text{MSE} = \text{Bias}^2(\hat{m}) + \text{Var}(\hat{m}) +  o_p(h^4 + \frac{1}{nh}).
\end{equation}
It is clear that the MSE-optimal bandwidth aims to balance two terms,  $\text{Bias}^2$ and $\text{Variance}$. After correcting the bias for the numerator of the \textit{t}-statistic, an optimal bandwidth should balance $\text{Bias}^2(\hat{m}-\widehat{\text{Bias}}(\hat{m}))$, not $\text{Bias}^2(\hat{m})$, with the variance term. Given $\text{Bias}(\hat{m}) = O(h^2)$, we need to expand the bias expression up to $O(h^3)$ in order to compute $\text{Bias}(\hat{m}-\widehat{\text{Bias}}(\hat{m}))$. In addition, after incorporating variance adjustment due to bias correction, the adjusted MSE takes the following form
 \begin{equation} \label{eq:adjmse}
 \text{adj. MSE} = \text{Bias}^2(\hat{m}-\widehat{\text{Bias}}(\hat{m}))+ \text{adj. Var}(\hat{m}) +  o_p(h^6 + \frac{1}{nh}),
 \end{equation}
 where $\text{adj. Var}(\hat{m})$ can take the form in \cref{eq:sharp var adjusted} in a sharp RD design. The following theorem presents the bandwidth result associated with the adjusted MSE in a sharp RD design,  assuming we work with data above the cutoff.
 
 \begin{theorem} \label{thm:adjusted bandwidth}
 	Under \cref{assumption:m continuity,assumption:sigma continuity,assumption:kernel,assumption:density x,assumption:error distribution} and $h_{Y_+} \rightarrow 0$ as $n_+ \rightarrow \infty$, the bandwidth that minimizes the adjusted MSE is given by
 	\begin{equation} \label{eq:bw adj.}
 		h_{Y_+} = \left(\frac{C_3}{6C_2^2}\right)^{1/7}n_+^{-1/7},
 	\end{equation}
 	where $C_2$ and $C_3$ are constants  provided  in the proof; see the Online Supplement Section S.2. 
 \end{theorem}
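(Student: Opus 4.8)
The plan is to collapse the adjusted MSE in \cref{eq:adjmse} into the two-term form $C_2^2 h_{Y_+}^6 + C_3/(n_+ h_{Y_+})$ and minimize it by a first-order condition. The minimization is elementary calculus, so the real work is identifying the leading constants $C_2$ (post-correction squared bias) and $C_3$ (adjusted variance).

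First I would expand the conditional bias of $\hat{m}_{Y_+}$ one order beyond \Cref{thm:sharp results}. \Cref{assumption:m continuity} supplies three-times differentiability, so the boundary Taylor argument behind \Cref{lemma:bias and variance} and Theorem~2.1 of \cite{kai2010local_supp} gives
\begin{equation*}
	\text{Bias}(\hat{m}_{Y_+}\mid\mathbf{X}) = \tfrac{1}{2} a_{Y_+}(c)\, m_{Y_+}^{(2)}(0)\, h_{Y_+}^2 + B_3\, m_{Y_+}^{(3)}(0)\, h_{Y_+}^3 + o_p(h_{Y_+}^3),
\end{equation*}
with $B_3$ a kernel-moment constant. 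Next I would take the mean of the plug-in correction $\widehat{\text{Bias}}(\hat{m}_{Y_+}) = \tfrac{1}{2} a_{Y_+}(c)\, \hat{m}_{Y_+}^{(2)}\, h_{Y_+}^2$ from \cref{eq:m+ bias hat}; the only new ingredient is the boundary bias of the second-derivative estimator, which the same expansion shows to be $O(h_{Y_+})$, say $\text{Bias}(\hat{m}_{Y_+}^{(2)}\mid\mathbf{X}) = D\, m_{Y_+}^{(3)}(0)\, h_{Y_+} + o_p(h_{Y_+})$. Subtracting, the two $O(h_{Y_+}^2)$ pieces cancel because $\hat{m}_{Y_+}^{(2)}$ is consistent for $m_{Y_+}^{(2)}(0)$, leaving
\begin{equation*}
	\text{Bias}\big(\hat{m}_{Y_+} - \widehat{\text{Bias}}(\hat{m}_{Y_+}) \mid \mathbf{X}\big) = \big(B_3 - \tfrac{1}{2} a_{Y_+}(c)\, D\big)\, m_{Y_+}^{(3)}(0)\, h_{Y_+}^3 + o_p(h_{Y_+}^3).
\end{equation*}
Denoting the bracketed net coefficient times $m_{Y_+}^{(3)}(0)$ by $C_2$, the squared bias is $C_2^2 h_{Y_+}^6 + o_p(h_{Y_+}^6)$, matching the $o_p(h^6)$ remainder anticipated in \cref{eq:adjmse}.

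For the variance I would import the single-side adjusted variance already assembled in the proof of \Cref{thm:sharp t adjusted dist}: restricting \cref{eq:sharp var adjusted} to the data above the cutoff, each summand $\text{Var}(\hat{m}_{Y_+})$, $\text{Var}(\widehat{\text{Bias}}(\hat{m}_{Y_+}))$, and $\text{Cov}(\hat{m}_{Y_+},\widehat{\text{Bias}}(\hat{m}_{Y_+}))$ is $O_p\big((n_+ h_{Y_+})^{-1}\big)$, so that $\text{adj. Var}(\hat{m}_{Y_+}) = C_3/(n_+ h_{Y_+}) + o_p\big((n_+ h_{Y_+})^{-1}\big)$ for the corresponding constant $C_3$. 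Substituting both pieces into \cref{eq:adjmse} yields $\text{adj. MSE} = C_2^2 h_{Y_+}^6 + C_3/(n_+ h_{Y_+}) + o_p\big(h_{Y_+}^6 + (n_+ h_{Y_+})^{-1}\big)$. Differentiating the leading part and setting the derivative to zero gives $6 C_2^2 h_{Y_+}^7 = C_3/n_+$, i.e.\ $h_{Y_+} = (C_3/(6 C_2^2))^{1/7} n_+^{-1/7}$, which is \cref{eq:bw adj.}; convexity of the leading objective in $h_{Y_+}$ confirms this is the minimizer.

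The main obstacle is the bias bookkeeping in the first two steps: I must carry the boundary expansion of both $\hat{m}_{Y_+}$ and $\hat{m}_{Y_+}^{(2)}$ to one extra order and verify that their $O(h_{Y_+}^2)$ contributions cancel exactly, so that $C_2$ is correctly read off as the surviving $h_{Y_+}^3$ coefficient rather than being contaminated by a residual $h_{Y_+}^2$ term; it is precisely this clean cancellation that upgrades the bandwidth rate from the usual $n_+^{-1/5}$ to $n_+^{-1/7}$. Once the $C_2 h_{Y_+}^3$ bias and the $C_3/(n_+ h_{Y_+})$ variance are in hand, the minimization is a one-line exercise.
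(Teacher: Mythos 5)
Your overall architecture matches the paper's: collapse the adjusted MSE to $C_2^2h_{Y_+}^6 + C_3/(n_+h_{Y_+})$, import the one-sided adjusted variance from the proof of \Cref{thm:sharp t adjusted dist} as $C_3$, and solve the first-order condition, which correctly yields $h_{Y_+}=(C_3/(6C_2^2))^{1/7}n_+^{-1/7}$. The genuine gap is in your identification of $C_2$. At a boundary point the $h_{Y_+}^3$ coefficient of $\text{Bias}(\hat{m}_{Y_+}\mid\mathbf{X})$ is \emph{not} a pure $m_{Y_+}^{(3)}$ term: when the terms $\textrm{I}$ and $\textrm{II}$ in the expansion of $E(W_{Y_+,n_+}^*)$ are carried to third order, one must Taylor-expand both the regression function inside $r_{i,1}$ \emph{and} the design density hidden in the sample kernel moments, and the latter contributes $\frac{1}{2}\tilde{a}_{Y_+}(c)\frac{f_{X_+}^{(1)}(0)}{f_{X_+}(0)}m_{Y_+}^{(2)}(0)h_{Y_+}^3$ alongside $\frac{1}{6}\check{a}_{Y_+}(c)m_{Y_+}^{(3)}(0)h_{Y_+}^3$, with $\check{a}_{Y_+}(c)=\frac{\mu_{+,2}\mu_{+,3}-\mu_{+,1}\mu_{+,4}}{\mu_{+,0}\mu_{+,2}-\mu_{+,1}^2}$ and $\tilde{a}_{Y_+}(c)=\frac{\mu_{+,2}^2-\mu_{+,1}\mu_{+,4}}{\mu_{+,0}\mu_{+,2}-\mu_{+,1}^2}$. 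At an interior point with a symmetric kernel the odd moments kill this density term, but on the boundary $\mu_{+,1}\neq 0$ and it is generically nonzero. Your $C_2=(B_3-\frac{1}{2}a_{Y_+}(c)D)\,m_{Y_+}^{(3)}(0)$ therefore omits a leading contribution, and the resulting bandwidth constant is wrong whenever $f_{X_+}^{(1)}(0)\neq 0$ (as in the paper's own simulation design, where $X\sim 2\,\text{Beta}(2,4)-1$).

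A second, related inconsistency: you posit $\text{Bias}(\hat{m}_{Y_+}^{(2)}\mid\mathbf{X})=D\,m_{Y_+}^{(3)}(0)h_{Y_+}+o_p(h_{Y_+})$, which corresponds to estimating the second derivative with a local \emph{quadratic} fit. The supplement instead computes $\hat{m}_{Y_+}^{(2)}$ from a $p=3$ LCQR fit, whose boundary bias is $\frac{1}{12}a_{Y_+}^*(c)m_{Y_+}^{(4)}h_{Y_+}^2+o_p(h_{Y_+}^2)$ (\Cref{lemma:bias and var of bias}); the plug-in correction then perturbs the bias only at $O(h_{Y_+}^4)$, so $C_2$ is read directly off the raw $h_{Y_+}^3$ coefficient and no ``$B_3-\frac{1}{2}aD$'' netting occurs. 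This is not cosmetic: if you genuinely used $p=2$, then $\text{Var}(\widehat{\text{Bias}}(\hat{m}_{Y_+}))$ and the covariance terms change, so the $C_3=V_{\text{sharp},+}$ you import from \Cref{thm:sharp t adjusted dist} (which is built on the $p=3$ quantities $b_{Y_+}^*(c)$ and $(S_{Y_+}^{-1}\Sigma_{Y_+}S_{Y_+}^{-1})_{12,2}$) would no longer be the adjusted variance of your estimator — your $C_2$ and $C_3$ would then be computed under two incompatible second-derivative estimators. The rate $n_+^{-1/7}$ and the minimization step survive either convention, but the theorem's content is precisely the constants, and those require the full boundary expansion including the $m_{Y_+}^{(2)}f_{X_+}^{(1)}/f_{X_+}$ piece together with the $p=3$-based variance.
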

 
The result for the fuzzy design case can be obtained similarly. \Cref{thm:adjusted bandwidth} assumes that two separate bandwidths are used for the data above and below the cutoff. Similar result holds for $h_{Y_-}$. On each side of the cutoff, a single bandwidth is used in both estimation and bias correction. \Cref{thm:adjusted bandwidth} provides the expression of this bandwidth.

\setcounter{theorem}{3}

\begin{remark}
	The bandwidth in \cref{eq:bw adj.} is of order $O(n_+^{-1/7})$, while \Cref{assumption:bandwidth} requires $h_{Y_+} = o(n_+^{-1/7})$ to establish the asymptotic normality of \cref{eq:sharp t adjusted dist}. Notice that the numerator in \cref{eq:sharp t adjusted dist} corrects the bias up to $O_p(h_+^2)$ (and $O_p(h_-^2)$), and the remaining terms of the ratio can be roughly written as $O_p(\sqrt{n_+h_{Y_+}}) \times O_p(h_{Y_+}^3)$. If the  bias correction up to $O_p(h_+^2)$ is truly effective, we can: (1) treat it as if bias were completely removed and ignore any remaining $O_p(h_+^3)$ terms on the numerator of the bias-corrected \textit{t}-statistic in \cref{eq:sharp t adjusted dist} so that the assumption $n_+h_{Y_+}^7 \rightarrow 0$ becomes unnecessary; (2) use a slightly larger bandwidth such as the one in \cref{eq:bw adj.} to further reduce variance while relying on the bias correction to remove any extra bias due to the use of a larger bandwidth. Our simulation results indeed indicate that bias correction in LCQR is effective. This provides an explanation of why an $O(n_+^{-1/7})$ bandwidth could work well in our case.
\end{remark}

\begin{remark}
	If equal bandwidth on both sides of the cutoff is preferred, it can be derived in a manner similar to the optimal bandwidth choice in \cite{imbens2012optimal}. To see that, we further assume $h_{Y_+} = h_{Y_-} = h$ and let $C_{+,2}$, $C_{+,3}$, $C_{-,2}$, $C_{-,3}$ be the corresponding constants in \Cref{thm:adjusted bandwidth} for the data above and below the cutoff. The adjusted MSE becomes 
$\text{adj. MSE} = (C_{+,2} - C_{-,2})^2 h^6 + \frac{1}{nh}(C_{+,3} + C_{-,3}) + o_p(h^6 + \frac{1}{nh})$, and the optimal bandwidth is given by
\medskip
	\begin{equation} \label{eq:bw adj IK style}
		h = \left(\frac{C_{+,3} + C_{-,3}}{6(C_{+,2} - C_{-,2})^2}\right)^{1/7} n^{-1/7}.
	\end{equation}
This bandwidth is shown to have good finite sample performance in our simulation study. One caveat is that the performance of this bandwidth relies on a good estimate of the third derivative of the conditional mean function. If it is difficult to obtain an estimate of the third derivative, one can use a simple, global quintic polynomial for estimation. This is implemented with the option \texttt{ls.derivative = TRUE} in the \texttt{rdcqr} package.
\end{remark}

\begin{remark}
	The bandwidth considered in this section has order $O(n_+^{-1/7})$ or $O(n^{-1/7})$. To check the robustness (to bandwidth selection) of the adjusted \textit{t}-statistic, we also experiment with the rule-of-thumb bandwidth in Section 4.2 of \cite{fan1996local} in our simulation. This bandwidth is close to the Mean Integrated Squared Error optimal (MISE-optimal) bandwidth and has order $O(n_+^{-1/5})$ and $O(n_-^{-1/5})$ for the data above and below the cutoff. The results are reported in  the Online Supplement Table 5, and they are found very similar to those obtained from using an $O(n^{-1/7})$ bandwidth.
\end{remark}

\section{Extensions}

In this section, we extend the discussion to several related topics that are of either practical or theoretical importance. Within the framework of LLR, these topics are studied in \cite{calonico2018jasa,calonico2019covariates,calonico2020optimal, calonico2020optimalbandwidth}. Within the framework of LCQR, we develop the fixed-$n$ approximations for estimation and inference in both sharp and fuzzy RD, while we also provide a brief discussion on several other topics.

\subsection{Fixed-n approximations for small samples} \label{sec:fixed-n}

Results in previous sections are based on asymptotic approximations as $n \rightarrow \infty$, while \cite{calonico2018jasa} point out that fixed-$n$ approximations and the associated Studentization can help local polynomial regression retain the automatic boundary carpentry property in coverage errors. We therefore study the fixed-$n$ approximations of the asymptotic results introduced in early sections for LCQR. Essentially, we re-derive \Cref{thm:sharp t adjusted dist,thm:fuzzy t adjusted dist} by keeping $n$ and $h$ fixed. Details of the fixed-$n$ version of the \textit{t}-statistics in \cref{eq:sharp t adjusted dist,eq:fuzzy t adjusted dist} are given in Propositions 1 and 2 and the associated proofs in Section S.3 of the Online Supplement.

As an example, consider the fixed-$n$ variance of the estimator $\hat{m}_{Y_+}$ in Lemma 5:
\begin{equation}
	\text{Var}(\hat{m}_{Y_+}|\mathbf{X})_{\text{fixed-n}} =  \frac{1}{n_+h_{Y_+}q^2} e_q^T \left(S_{nY_+}^{-1} \Sigma_{nY_+} S_{nY_+}^{-1}\right)e_q+ o_p(\frac{1}{n_+h_{Y_+}}),
\end{equation}
where $e_q$ is a vector of ones, $S_{nY_+}$ and $\Sigma_{nY_+}$ are defined in Section S.1 Equation (A.7). Compared to the first-order asymptotic variance in Lemma 2, it is clear that the fixed-$n$ expression does not use any of the kernel moments such as $\mu_{+,1}$.  The $o_p$ term at the end is a result of approximating the non-differentiable objective function of LCQR with a differentiable, quadratic function; see Proof of Lemma 5 in Section S.3 for details.

Our simulation presented later on suggests that fixed-$n$ approximations improve the coverage of LCQR-based confidence intervals when the sample size is small, and work particularly well under heteroskedasticity. In Table \ref{tb:main coverage}, confidence intervals based on fixed-$n$ approximations are found  comparable to asymptotic confidence intervals, though fixed-$n$ results seem more conservative and give slightly higher  coverage. When the sample size is decreased, fixed-$n$ approximations give better coverage (see Table 6 in Section S.4.4 of the Online Supplement).  The \texttt{rdcqr}  package offers both asymptotic and fixed-$n$ approximations in the computation of bias, s.e., and adjusted s.e. 

\subsection{Alternative bandwidth choices} \label{sec:ce-optimal}
Although we discuss the bandwidth that results from minimizing the  adjusted MSE of the bias-corrected LCQR estimator, other bandwidth choices that satisfy \Cref{assumption:bandwidth} are also allowed. One possibility is the coverage error (CE)-optimal bandwidth suggested by \cite{calonico2020optimalbandwidth}, who propose to derive the optimal bandwidth by minimizing the coverage error of confidence intervals for the treatment effect.  \cite{calonico2020optimal} further show that this bandwidth choice can make bias-corrected confidence intervals attain the minimax bound on coverage errors.  It will thus be interesting to derive such a CE-optimal bandwidth for the LCQR-based confidence interval, which, however, would require a considerable amount of work beyond the scope of this paper. We leave it for future research. Our simulation evidence presented in Section 5, however, shows that the bandwidth in \Cref{thm:adjusted bandwidth} provides  good coverage probabilities  along with an accurate estimation of the treatment effect.

\subsection{Adding covariates} \label{sec:covariates}

Many RD applications come with covariates in addition to the treatment assignment variable $X$. Let $Z_{\text{cv}}$ denote the additional covariates. Depending on the nature of the data, adding covariates could help reduce bias and make treatment estimates more precise. One can simply add these covariates to LLR to adjust the treatment estimate; see \citeauthor{imbens2008regression} (\citeyear{imbens2008regression}) for an example. More recently, \cite{calonico2019covariates} gives an MSE expansion for the covariate-adjusted RD estimator, and provides the corresponding bandwidth selection, bias correction and s.e. adjustment results.  

Using a single bandwidth $h$, similar to Equations (1) and (2) in \cite{calonico2019covariates}, we can write the treatment equation without and with covariates as
\begin{align}
    \hat{\tau}: \quad &\hat{Y}_i = \hat{a} + T_i \hat{\tau} + X_{i} \hat{\beta}_- + T_i X_{i} \hat{\beta}_+, \label{eq:without covariates} \\
    \tilde{\tau}: \quad & \hat{Y}_i = \hat{a} + T_i \tilde{\tau} + X_{i} \hat{\beta}_- + T_i X_{i} \hat{\beta}_+ + Z_{cv,i}^T \hat{\beta}_{cv}, \label{eq:with covariates} 
\end{align}
where all estimates are LCQR estimates. \Cref{thm:sharp t adjusted dist,thm:fuzzy t adjusted dist} are derived based on \cref{eq:without covariates} while \cref{eq:with covariates} follows from the following LCQR objective function,
\begin{equation} \label{eq:lcqr obj with covariates}
	\sum_{k=1}^{q} \sum_{i=1}^{n}\rho_{\tau_k} \left(Y_{i} - a_k - b (X_{i} -  x)- \tau_{\text{sharp}} T_i - \beta T_i (X_{i} -  x) - Z_{cv,i}^T \beta_{cv}\right) K\left(\frac{X_{i} - x}{h}\right).
\end{equation}

Since \cref{eq:lcqr obj with covariates} leads to a nonlinear problem with no closed-form solution, we cannot use partitioned regression to directly express $\tilde{\tau}$ as a linear function of other parameters in the equation and study its asymptotics. Instead, one needs to retool the asymptotic methods in \cite{kai2010local_supp, kai2010local} to incorporate the covariates based on \cref{eq:lcqr obj with covariates}. As a result, all asymptotic results in previous sections need to be revised to reflect the presence of covariates. Developing the rigorous asymptotic results for \cref{eq:with covariates}, similar to \cite{calonico2019covariates}, will be a useful addition to make LCQR more appealing to applied researchers, and require future work.

For interested readers who want to try LCQR with covariates in RD, we offer an \textit{ad hoc} approach. The companion \texttt{rdcqr} package offers a function to estimate $\tilde{\tau}$ in \cref{eq:with covariates}, though no bias-correction or s.e. is currently provided. We suggest to proceed by using the bias and adjusted s.e. of $\hat{\tau}$ in \cref{eq:without covariates} for the $t$-statistic for $\tilde{\tau}$. This \textit{ad hoc} approach is not completely unwarranted. \citet[p.~626]{imbens2008regression} wrote: ``If the conditional distribution of $Z$ given $X$ is continuous at $x=c$, then including $Z$ in the regression will have little effect on the expected value of the estimator for $\tau$ ...'' and $Z$ refers to covariates. Hence we conjecture that, although $\tilde{\tau}$ and $\hat{\tau}$ are different, they could be numerically close to each other, and their variance differences could also be relatively small in practice. As a result, using the bias and adjusted s.e. of $\hat{\tau}$ for $\tilde{\tau}$ in the \textit{t}-test probably will not hurt much. Indeed, our simulation in Section S.4.6 of the Online Supplement shows that this \textit{ad hoc} approach works well for the data  generating process (DGP) used in \cite{calonico2019covariates}. When $\tilde{\tau}$ and $\hat{\tau}$ differ to a large extent, one cannot use the \textit{ad hoc} approach.

\subsection{Kink designs}

While the interest of sharp and fuzzy RD designs lies in the levels (i.e. conditional mean functions) at the cutoff, kink RD designs focus on the  derivatives of regression functions; see e.g. \cite{card2015inference}. \cite{kai2010local_supp, kai2010local} show that LCQR can also have efficiency gains when used for estimating derivatives. Therefore, it is natural to extend the LCQR method to kink RD designs. In  Table 7 of the Online Supplement, we report the simulation outcome that shows LCQR could outperform the local polynomial regression for estimating derivatives in a sharp kink design when data are non-normal. We, however, do not further explore kink RD designs in this paper, since they would involve higher-order terms in Taylor series expansions, and are more challenging in practical implementations.

\section{Monte Carlo simulation}

In this section, we conduct a Monte Carlo study to investigate the finite sample properties of the LCQR method. We focus on the sharp RD design and separate the Monte Carlo study into two parts, one for  estimation (Section \ref{sec:simu estimation}), and the other for inference (Section \ref{sec:simu inference}).

The sharp designs calibrated  to \citeauthor{lee2008randomized} (\citeyear{lee2008randomized}, Lee)  and \citeauthor{ludwig2007does} (\citeyear{ludwig2007does}, LM) are used in the DGP:
\begin{align*}
	Y_i &= m(X_i) + \sigma(X_i) \epsilon_i, \quad i = 1, \cdots, n,\\
	X_i &\sim 2 \times \text{Beta}(2,4) - 1,
\end{align*}
where the conditional means are given by
\begin{equation} \label{eq:lee}
	m_{\text{Lee}}(X_i) = \begin{cases}
	0.48 + 1.27 X_i + 7.18 X_i^2 + 20.21 X_i^3 + 21.54 X_i^4 + 7.33 X_i^5 & \text{ if } X_i < 0,\\
	0.52 + 0.84 X_i - 3.00 X_i^2 + 7.99 X_i^3 - 9.01 X_i^4 + 3.56 X_i^5 & \text{ if } X_i \geq 0,
	\end{cases}
\end{equation}
and 
\begin{equation} \label{eq:lm}
	m_{\text{LM}}(X_i) = \begin{cases}
	3.71 + 2.30 X_i + 3.28 X_i^2 + 1.45 X_i^3 + 0.23 X_i^4 + 0.03 X_i^5 & \text{ if } X_i < 0,\\
	0.26 + 18.49 X_i - 54.81 X_i^2 + 74.30 X_i^3 - 45.02 X_i^4 + 9.83 X_i^5& \text{ if } X_i \geq 0.
	\end{cases}
\end{equation}
Hence the treatment effects are $0.04$ and $-3.45$ in (\ref{eq:lee}) and (\ref{eq:lm}), respectively.

We use the same five error distributions in \cite{kai2010local} to simulate $\epsilon_i$. These five error distributions are  listed in Table \ref{tab:are} above, which lead to five DGPs: for DGP 1, $\epsilon_i \sim N(0,1)$;  ...; for DGP 5, $\epsilon_i \sim \text{mixture normal }0.95N(0, 1) + 0.05N(0,10^2)$.


In addition, the homoskedastic and heteroskedastic specifications in \cite{kai2010local} are used for simulating the standard deviation:

\begin{equation*}
	\sigma(X_i) = \begin{cases}
	0.5 & \text{ for homoskedastic errors,}\\
	2 + \cos(2\pi X_i)/10 & \text{ for heteroskedastic errors.}
	\end{cases}
\end{equation*}
We set $n = 500$ with 5000 replications. The data are \textit{i.i.d.} draws in all replications. The triangular kernel is used in all estimations.

Since we also study the coverage probability of confidence intervals, we compare the LCQR results with the robust confidence interval of \cite{calonico2014robust}. For ease of exposition, we first summarize in Table \ref{table_summary} the types of the estimators reported later on.
\begin{table}[htbp]
	\centering 
	\caption{Summary of estimators in the simulation study}
\begin{tabular}{llr}  
	\toprule
	Estimator    & Description \\
	\midrule
	$\hat{\tau}_{\text{1bw}}^{\text{cqr}}$ & \text{LCQR with equal bandwidth }\\
		$\hat{\tau}_{\text{2bw}}^{\text{cqr}}$ & \text{LCQR with unequal bandwidth }\\
			$\hat{\tau}_{\text{1bw}}^{\text{llr}}$ & \text{LLR estimator}\\
	$\hat{\tau}_{\text{1bw}}^{\text{cqr,bc}}$ & \text{LCQR with equal bandwidth, bias-corrected and s.e.-adjusted}\\

	$\hat{\tau}_{\text{2bw}}^{\text{cqr,bc}}$ & \text{LCQR with unequal bandwidth, bias-corrected and s.e.-adjusted}\\

	$\hat{\tau}_{\text{1bw}}^{\text{robust,bc}}$ & \text{the robust estimator in \cite{calonico2014robust} with equal bandwidth}\\
	\\[-1.5em]
	$\hat{\tau}_{\text{1bw, fixed-n}}^{\text{cqr,bc}}$ & \text{fixed-n LCQR with equal bandwidth, bias-corrected and s.e.-adjusted}\\	
	\bottomrule
\end{tabular}\label{table_summary}
\end{table}

We set $q = 7$ for all LCQR estimators. $\hat{\tau}_{\text{1bw}}^{\text{robust,bc}}$ is obtained by using the R package \texttt{rdrobust}. We use the option \texttt{bwselect = mserd} in estimation and the option \texttt{bwselect = cerrd} in inference so that $\hat{\tau}_{\text{1bw}}^{\text{robust,bc}}$ is MSE-optimal and CE-optimal in  \Cref{sec:simu estimation} and \Cref{sec:simu inference}, respectively.  The local linear estimator, $\hat{\tau}_{\text{1bw}}^{\text{llr}}$, is obtained by applying the main bandwidth used in $\hat{\tau}_{\text{1bw}}^{\text{robust,bc}}$. 

To calculate the bandwidth in \cref{eq:bw adj.} and \cref{eq:bw adj IK style} for LCQR, we use the rule-of-thumb bandwidth selector described in Section 4.2 of \cite{fan1996local} to compute quantities such as $C_2$ and $C_3$ in \cref{eq:bw adj.}. The bandwidth in \cref{eq:bw adj.} or \cref{eq:bw adj IK style} is then used to perform the LCQR estimation, bias correction, and s.e. adjustment. Unlike the LLR estimator, the LCQR estimator has no closed-form expression, and is obtained from the iterative MM algorithm. 


\subsection{Estimation of the treatment effect} \label{sec:simu estimation}

In this subsection, we compare LCQR  with  LLR for the treatment effect estimation. 

Without bias correction, \Cref{tb:lee and lm estimation} suggests that the two LCQR estimators, $\hat{\tau}_{\text{1bw}}^{\text{cqr}}$ and $\hat{\tau}_{\text{2bw}}^{\text{cqr}}$, are less accurate than LLR, though the numerical difference is small. This result probably is not a surprise since we use the bandwidth based on the adjusted MSE, which is not MSE-optimal for $\hat{\tau}_{\text{1bw}}^{\text{cqr}}$ and $\hat{\tau}_{\text{2bw}}^{\text{cqr}}$. Since the employed bandwidth balances the bias of the bias-corrected estimator and adjusted variance in \cref{eq:adjmse}, it will be interesting to investigate its bias-correction performance. \Cref{fig:lee bias} presents the bias of the bias-corrected LCQR and LLR estimators for estimating the treatment effect in \cref{eq:lee} with homoskedasticity. It shows that LCQR produces accurate estimates after bias-correction. The improvement can be large, depending on the error distribution; similar results for other models can be found in Figure 3 in the Online Supplement. It suggests that   bias correction will help center the confidence intervals.

\begin{figure}[htp]
	\centering
	\includegraphics[width=0.60\textwidth,keepaspectratio=TRUE]{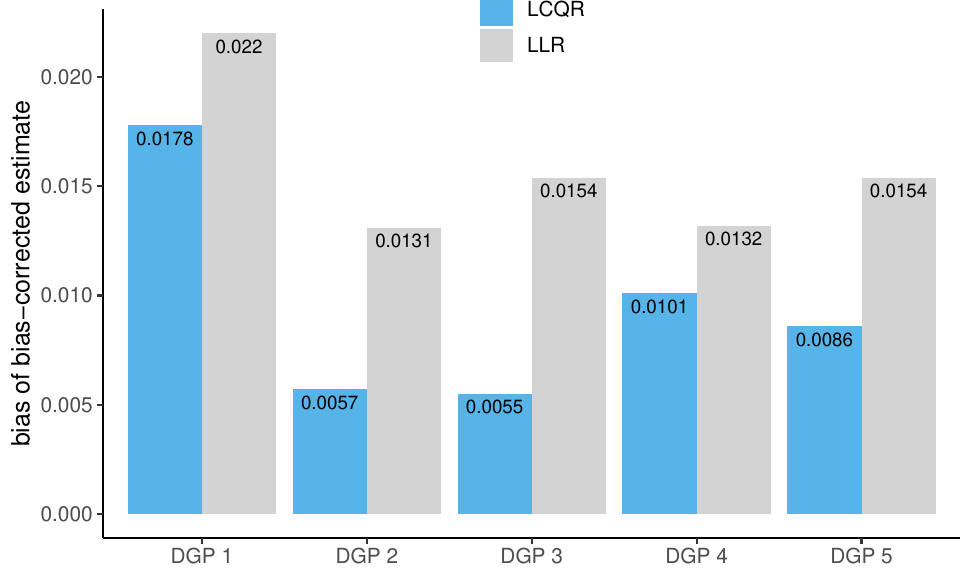}%
	\caption{Absolute value of average bias of the bias-corrected estimators, $\hat{\tau}_{\text{1bw}}^{\text{cqr,bc}}$ and $\hat{\tau}_{\text{1bw}}^{\text{robust,bc}}$ for the Lee model with homoskedasticity. $\hat{\tau}_{\text{1bw}}^{\text{cqr,bc}}$ is the bias-corrected LCQR estimator. $\hat{\tau}_{\text{1bw}}^{\text{robust,bc}}$ is the bias-corrected LLR estimator. The result is based on $5000$ replications and the true treatment effect is $0.04$.}%
	\label{fig:lee bias}%
\end{figure}

\begin{table}[htp] \centering
	\begin{center}
	\caption{Estimation of treatment effects in Lee and LM models} 
	\label{tb:lee and lm estimation} 
		\begin{threeparttable}
			\begin{tabular}{lcccccccccc} 
				\hline
		
				\multirow{2}{*}{Estimator}&\multicolumn{2}{c}{DGP 1}&\multicolumn{2}{c}{DGP 2}&\multicolumn{2}{c}{DGP 3}&\multicolumn{2}{c}{DGP 4}&\multicolumn{2}{c}{DGP 5}\\ \cmidrule(lr){2-3} \cmidrule(lr){4-5} \cmidrule(lr){6-7} \cmidrule(lr){8-9} \cmidrule(lr){10-11}  
				&coef.&s.e.&coef.&s.e.&coef.&s.e.&coef.&s.e.&coef.&s.e.\\\hline
					&			\multicolumn{10}{c}{A: Lee with homoskedastic errors and and $\tau_{\text{sharp}} = 0.04$} \\
					\cmidrule(lr){2-11}	
							$\hat{\tau}_{\text{1bw}}^{\text{cqr}}$ & 0.071 & 0.188 & 0.061 & 0.228 & 0.062 & 0.250 & 0.061 & 0.204 & 0.058 & 0.250 \\
										$\hat{\tau}_{\text{2bw}}^{\text{cqr}}$ & 0.073 & 0.186 & 0.062 & 0.227 & 0.062 & 0.248 & 0.064 & 0.202 & 0.060 & 0.247 \\
													$\hat{\tau}_{\text{1bw}}^{\text{llr}}$ & 0.068 & 0.203 & 0.060 & 0.284 & 0.060 & 0.332 & 0.060 & 0.237 & 0.062 & 0.443 \\
			$\hat{\tau}_{\text{1bw}}^{\text{cqr,bc}}$ & 0.058 & 0.347 & 0.046 & 0.411 & 0.046 & 0.454 & 0.050 & 0.375 & 0.049 & 0.453 \\

			$\hat{\tau}_{\text{2bw}}^{\text{cqr,bc}}$ & 0.059 & 0.344 & 0.046 & 0.408 & 0.048 & 0.450 & 0.053 & 0.371 & 0.054 & 0.448 \\

			$\hat{\tau}_{\text{1bw}}^{\text{robust,bc}}$ & 0.062 & 0.241 & 0.053 & 0.336 & 0.055 & 0.393 & 0.053 & 0.281 & 0.055 & 0.525 \\

				&\multicolumn{10}{c}{B: Lee  with heteroskedatic errors and and $\tau_{\text{sharp}} = 0.04$} \\
					\cmidrule(lr){2-11}			
							
				$\hat{\tau}_{\text{1bw}}^{\text{cqr}}$ & 0.071 & 0.100 & 0.064 & 0.123 & 0.064 & 0.134 & 0.065 & 0.109 & 0.060 & 0.137 \\
								$\hat{\tau}_{\text{2bw}}^{\text{cqr}}$ & 0.072 & 0.099 & 0.064 & 0.122 & 0.063 & 0.133 & 0.066 & 0.108 & 0.062 & 0.133 \\
												$\hat{\tau}_{\text{1bw}}^{\text{llr}}$ & 0.067 & 0.116 & 0.062 & 0.163 & 0.061 & 0.190 & 0.062 & 0.136 & 0.063 & 0.255 \\
				$\hat{\tau}_{\text{1bw}}^{\text{cqr,bc}}$ & 0.056 & 0.184 & 0.049 & 0.220 & 0.048 & 0.243 & 0.050 & 0.201 & 0.049 & 0.247 \\

				$\hat{\tau}_{\text{2bw}}^{\text{cqr,bc}}$ & 0.057 & 0.183 & 0.048 & 0.219 & 0.049 & 0.242 & 0.051 & 0.199 & 0.050 & 0.241 \\

				$\hat{\tau}_{\text{1bw}}^{\text{robust,bc}}$ & 0.061 & 0.138 & 0.055 & 0.193 & 0.056 & 0.226 & 0.056 & 0.161 & 0.057 & 0.303 \\
				
				 	&\multicolumn{10}{c}{C: LM  with homoskedasticity errors and $\tau_{\text{sharp}} = -3.45$} \\
			\cmidrule(lr){2-11}
				$\hat{\tau}_{\text{1bw}}^{\text{cqr}}$ & -3.244 & 0.191 & -3.265 & 0.231 & -3.269 & 0.252 & -3.257 & 0.206 & -3.296 & 0.251 \\
								$\hat{\tau}_{\text{2bw}}^{\text{cqr}}$ & -3.259 & 0.189 & -3.280 & 0.228 & -3.285 & 0.249 & -3.270 & 0.204 & -3.308 & 0.247 \\
							$\hat{\tau}_{\text{1bw}}^{\text{llr}}$ & -3.340 & 0.223 & -3.342 & 0.301 & -3.333 & 0.346 & -3.345 & 0.256 & -3.328 & 0.452 \\	
				$\hat{\tau}_{\text{1bw}}^{\text{cqr,bc}}$ & -3.439 & 0.353 & -3.453 & 0.415 & -3.450 & 0.458 & -3.445 & 0.380 & -3.443 & 0.455 \\

				$\hat{\tau}_{\text{2bw}}^{\text{cqr,bc}}$ & -3.440 & 0.348 & -3.452 & 0.411 & -3.450 & 0.452 & -3.443 & 0.375 & -3.440 & 0.447 \\

				$\hat{\tau}_{\text{1bw}}^{\text{robust,bc}}$ & -3.423 & 0.256 & -3.434 & 0.349 & -3.426 & 0.403 & -3.432 & 0.295 & -3.426 & 0.530 \\
			
				&\multicolumn{10}{c}{D: LM  with heteroskedatic errors and $\tau_{\text{sharp}} = -3.45$} \\
						\cmidrule(lr){2-11}
				$\hat{\tau}_{\text{1bw}}^{\text{cqr}}$ & -3.234 & 0.106 & -3.245 & 0.127 & -3.252 & 0.138 & -3.242 & 0.114 & -3.281 & 0.137 \\
								$\hat{\tau}_{\text{2bw}}^{\text{cqr}}$ & -3.239 & 0.104 & -3.256 & 0.125 & -3.259 & 0.136 & -3.247 & 0.112 & -3.288 & 0.134 \\
												$\hat{\tau}_{\text{1bw}}^{\text{llr}}$ & -3.364 & 0.146 & -3.359 & 0.190 & -3.349 & 0.215 & -3.364 & 0.165 & -3.339 & 0.273 \\
				$\hat{\tau}_{\text{1bw}}^{\text{cqr,bc}}$ & -3.443 & 0.195 & -3.448 & 0.228 & -3.444 & 0.250 & -3.442 & 0.209 & -3.443 & 0.248 \\

				$\hat{\tau}_{\text{2bw}}^{\text{cqr,bc}}$ & -3.443 & 0.192 & -3.447 & 0.225 & -3.446 & 0.246 & -3.444 & 0.206 & -3.445 & 0.243 \\

				$\hat{\tau}_{\text{1bw}}^{\text{robust,bc}}$ & -3.431 & 0.163 & -3.438 & 0.216 & -3.432 & 0.245 & -3.437 & 0.185 & -3.431 & 0.315 \\
				
				\hline 
			\end{tabular}
			\begin{tablenotes}[flushleft]
				\item[] \textit{Notes}: The estimates for the treatment effect and standard error are averages over 5000 replications with a sample size $n = 500$. The s.e. and adjusted s.e. for LCQR are obtained using the asymptotic expressions from \Cref{thm:sharp results} and \Cref{thm:sharp t adjusted dist}. Estimators with a superscript \texttt{bc} are both bias-corrected and s.e.-adjusted. The subscripts 1bw and 2bw refer to the use of equal and unequal bandwidths below and above the cutoff.
			\end{tablenotes}
		\end{threeparttable}
	\end{center} 
\end{table}

 Table \ref{tb:lee and lm estimation} further presents the standard errors  of the studied estimators to facilitate comparison.  The s.e. of $\hat{\tau}_{\text{1bw}}^{\text{cqr}}$ is consistently smaller than that of $\hat{\tau}_{\text{1bw}}^{\text{llr}}$, indicating the efficiency gain of LCQR against LLR. Although DGP 1 has normal errors, $\hat{\tau}_{\text{1bw}}^{\text{cqr}}$ and $\hat{\tau}_{\text{1bw}}^{\text{llr}}$ do not use the same bandwidth, so the s.e. of $\hat{\tau}_{\text{1bw}}^{\text{cqr}}$  also appears smaller than that of $\hat{\tau}_{\text{1bw}}^{\text{llr}}$.   As the DGP moves away from normal errors, the LCQR estimator achieves various levels of efficiency gains compared to the LLR estimator. For example, the s.e. of $\hat{\tau}_{\text{1bw}}^{\text{cqr}}$ in DGP 5 of  Table \ref{tb:lee and lm estimation}  Panel B is $0.137$, compared to $0.255$ for $\hat{\tau}_{\text{1bw}}^{\text{llr}}$, so the LCQR/LLR standard error ratio is close to 50\% in this example.

While reading  Table \ref{tb:lee and lm estimation}, it is important to bear in mind that the s.e.s of estimators with a superscript ``\texttt{bc}" are not suitable to assess the efficiency of the estimators. They are computed by incorporating the additional variability due to bias correction. However, they shed light on the length of confidence intervals.   Table \ref{tb:lee and lm estimation}  shows that  the adjusted s.e. of $\hat{\tau}_{\text{1bw}}^{\text{cqr,bc}}$ could be smaller or larger than that of $\hat{\tau}_{\text{1bw}}^{\text{robust,bc}}$, so the simulation results for comparing LCQR with LLR appear to be mixed after bias correction. 

In addition, using two bandwidths above and below the cutoff also gives mixed results in terms of bias correction for the LCQR estimator, though it is clear that the two-bandwidth approach leads to a small decrease in s.e. for LCQR.

\subsection{Inference on the treatment effect} \label{sec:simu inference}

This subsection studies the coverage probability of the confidence intervals based on the LCQR estimator, using the adjusted s.e. in \Cref{thm:sharp t adjusted dist} and the adj. MSE-based bandwidth selector. The results are summarized in Table \ref{tb:main coverage}, where the nominal coverage probability is set to $95\%$.

\begin{table}[htp] \centering 
	\begin{center}
	\caption{Coverage probability of 95\% confidence intervals in  Lee and LM models } 
	\label{tb:main coverage} 
		\begin{threeparttable}
			\begin{tabular}{lccccc  ccccc} 
				\hline
				& \multicolumn{5}{c}{A. Lee with homoskedastic errors} &  \multicolumn{5}{c}{B. Lee  with heteroskedatic errors}\\
				\cmidrule(lr){2-6}\cmidrule(lr){7-11}
			\footnotesize	Estimator& \footnotesize DGP 1&\footnotesize DGP 2&\footnotesize DGP 3&\footnotesize DGP 4&\footnotesize DGP 5&\footnotesize  DGP 1&\footnotesize DGP 2&\footnotesize DGP 3&\footnotesize DGP 4&\footnotesize DGP 5\\ 	
				\hline
			$\hat{\tau}_{\text{1bw}}^{\text{cqr}}$ & 0.922 & 0.925 & 0.918 & 0.921 & 0.910 & 0.893 & 0.898 & 0.895 & 0.894 & 0.895 \\
			    $\hat{\tau}_{\text{2bw}}^{\text{cqr}}$ & 0.919 & 0.916 & 0.912 & 0.916 & 0.904 & 0.883 & 0.892 & 0.882 & 0.890 & 0.893 \\
			        $\hat{\tau}_{\text{1bw}}^{\text{llr}}$ & 0.934 & 0.935 & 0.939 & 0.932 & 0.959 &  0.927 & 0.932 & 0.936 & 0.929 & 0.956 \\
    $\hat{\tau}_{\text{1bw}}^{\text{cqr,bc}}$ & 0.973 & 0.961 & 0.964 & 0.970 & 0.955 &0.959 & 0.952 & 0.954 & 0.956 & 0.948 \\

    $\hat{\tau}_{\text{2bw}}^{\text{cqr,bc}}$ & 0.971 & 0.957 & 0.963 & 0.969 & 0.956 &0.958 & 0.947 & 0.950 & 0.956 & 0.947  \\

    $\hat{\tau}_{\text{1bw}}^{\text{robust,bc}}$ & 0.936 & 0.939 & 0.941 & 0.934 & 0.959 & 0.932 & 0.935 & 0.937 & 0.932 & 0.957\\
    $\hat{\tau}_{\text{1bw,fixed-n}}^{\text{cqr,bc}}$ & 0.980 & 0.971 & 0.975 & 0.976 & 0.969  & 0.961 & 0.957 & 0.958 & 0.959 & 0.958 \\
    &\\
	& \multicolumn{5}{c}{C. LM  with homoskedastic errors} &  \multicolumn{5}{c}{D. LM  with heteroskedatic errors}\\
				\cmidrule(lr){2-6}\cmidrule(lr){7-11}
		\footnotesize	Estimator& \footnotesize DGP 1&\footnotesize DGP 2&\footnotesize DGP 3&\footnotesize DGP 4&\footnotesize DGP 5&\footnotesize  DGP 1&\footnotesize DGP 2&\footnotesize DGP 3&\footnotesize DGP 4&\footnotesize DGP 5\\ 	
				\hline
				$\hat{\tau}_{\text{1bw}}^{\text{cqr}}$ & 0.743 & 0.800 & 0.817 & 0.771 & 0.822& 0.526 & 0.604 & 0.642 & 0.564 & 0.670 \\
				    $\hat{\tau}_{\text{2bw}}^{\text{cqr}}$ & 0.766 & 0.813 & 0.833 & 0.797 & 0.836 & 0.505 & 0.609 & 0.649 & 0.555 & 0.679 \\
				        $\hat{\tau}_{\text{1bw}}^{\text{llr}}$ & 0.910 & 0.915 & 0.925 & 0.913 & 0.943 & 0.900 & 0.906 & 0.911 & 0.905 & 0.930 \\
    $\hat{\tau}_{\text{1bw}}^{\text{cqr,bc}}$ & 0.969 & 0.960 & 0.963 & 0.967 & 0.951  & 0.955 & 0.951 & 0.952 & 0.953 & 0.946\\

    $\hat{\tau}_{\text{2bw}}^{\text{cqr,bc}}$ & 0.969 & 0.960 & 0.960 & 0.968 & 0.953& 0.954 & 0.945 & 0.945 & 0.950 & 0.939 \\

    $\hat{\tau}_{\text{1bw}}^{\text{robust, bc}}$ & 0.931 & 0.932 & 0.935 & 0.929 & 0.954 & 0.928 & 0.929 & 0.937 & 0.931 & 0.951 \\
    $\hat{\tau}_{\text{1bw,fixed-n}}^{\text{cqr,bc}}$ & 0.976 & 0.970 & 0.973 & 0.974 & 0.967 & 0.957 & 0.953 & 0.956 & 0.959 & 0.958 \\
				\hline 
			\end{tabular}
			\begin{tablenotes}[flushleft]
				\item[] \textit{Notes}: The reported numbers are the simulated coverage probabilities of the 95\% confidence intervals associated with different estimators. The results are based on 5000 replications with a sample size $n = 500$. The s.e. and adjusted s.e. for the LCQR estimator are obtained based on the asymptotic expressions from \Cref{thm:sharp results} and \Cref{thm:sharp t adjusted dist}, except for $\hat{\tau}_{\text{1bw,fixed-n}}^{\text{cqr,bc}}$ where fixed-$n$ approximations are used. Estimators with superscript \texttt{bc} are both bias-corrected and s.e.-adjusted. The result of $\hat{\tau}_{\text{1bw}}^{\text{robust,bc}}$ is based on the CE-optimal bandwidth.
			\end{tablenotes}
		\end{threeparttable}
	\end{center} 
\end{table}

Without bias correction or s.e. adjustment, confidence intervals based on the LCQR estimators,	$\hat{\tau}_{\text{1bw}}^{\text{cqr}}$ and 	$\hat{\tau}_{\text{2bw}}^{\text{cqr}}$, are found to have poor  coverage probabilities in  Table \ref{tb:main coverage}, which highlights an important difference between estimation and inference: despite the excellent finite sample properties of LCQR in Table \ref{tb:lee and lm estimation}, one has to perform both bias correction and s.e. adjustment in order to produce the desired  coverage of confidence intervals. Furthermore, the coverage  probability of LCQR-based confidence intervals is lower than that of LLR-based confidence intervals. For example, it is $92.2\%$ under DGP 1 in Panel A of Table \ref{tb:main coverage}, compared to $93.4\%$ of the LLR-based confidence interval. Although $\hat{\tau}_{\text{1bw}}^{\text{cqr}}$ tends to have a smaller bias and a smaller s.e. than $\hat{\tau}_{\text{1bw}}^{\text{llr}}$, it appears that the bias in $\hat{\tau}_{\text{1bw}}^{\text{cqr}}$ is still not small enough to center the estimator close to the true value. A smaller s.e. will further contribute to the decrease in coverage probabilities. This explains the poor coverage for confidence intervals based on $\hat{\tau}_{\text{1bw}}^{\text{cqr}}$ and $\hat{\tau}_{\text{2bw}}^{\text{cqr}}$.

With bias correction and s.e. adjustment, the proposed bandwidth selectors in  \cref{eq:bw adj.} and \cref{eq:bw adj IK style}  lead to good coverage across the five DGPs in Table \ref{tb:main coverage}. It is important to recall that the presented LCQR estimators  use the bandwidths in  \cref{eq:bw adj.} and \cref{eq:bw adj IK style}, which are not designed to optimize the coverage probability  of  confidence intervals, while the confidence intervals for $\hat{\tau}_{\text{1bw}}^{\text{robust,bc}}$ in Table \ref{tb:main coverage} are optimized to minimize coverage errors. In this respect, it is reasonable to conclude that LCQR with bandwidths in   \cref{eq:bw adj.} and \cref{eq:bw adj IK style} offers very competitive results.

In the last row of each panel in \Cref{tb:main coverage}, we also report the coverage probability  based on fixed-$n$ approximations. The benefit of using fixed-$n$ approximations becomes clearer in Table 6 of the Online Supplement, where the sample size decreases to 300 and fixed-$n$ approximations often give the best coverage.

\section{Application to Lee (2008)}

In this section, we use the data from \cite{lee2008randomized} to illustrate the practical usage of  LCQR. To facilitate comparison, the findings based on LLR are also presented.

\begin{figure}[htp]
\caption{Comparison of LCQR and LLR using Lee (2008)}
\label{app_lee}\centering
\subfloat[Regression Discontinuity Design]{\includegraphics[width=220 pt,
height=160 pt]{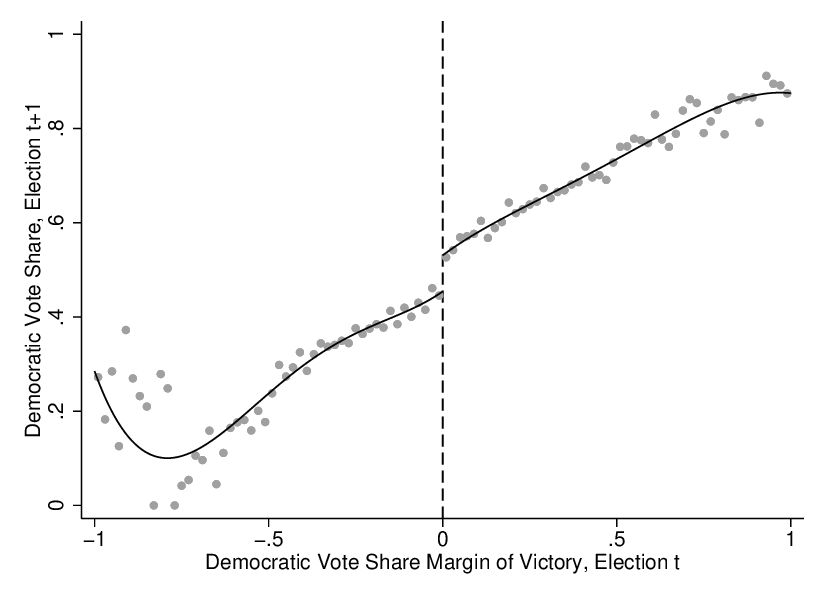}}
\subfloat[LCQR Estimate $\pm$ 2 S.E.]{\includegraphics[width=220 pt, height=160
pt]{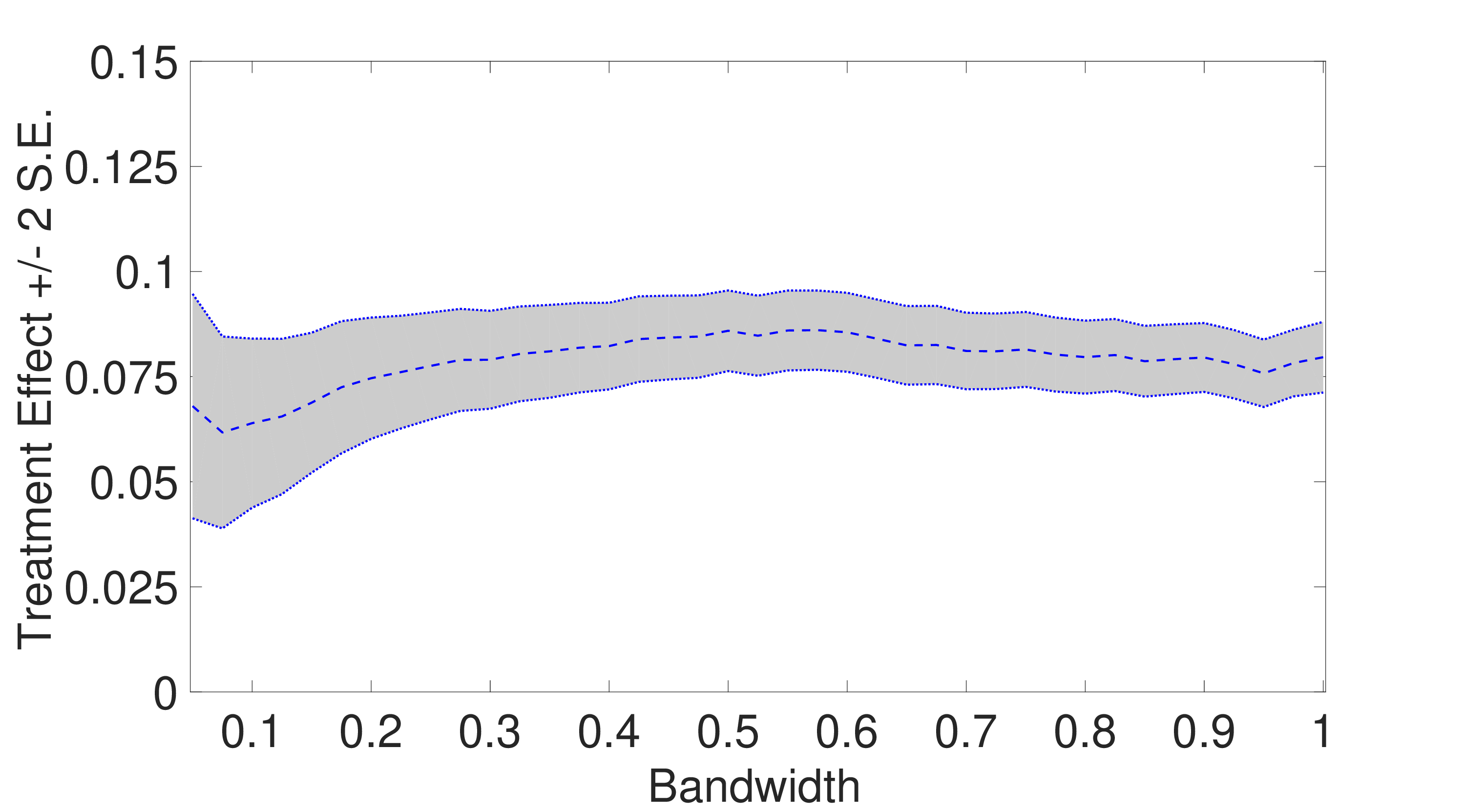}}\\
\subfloat[LLR Estimate $\pm$ 2 S.E.]{\includegraphics[width=220 pt,
height=160 pt]{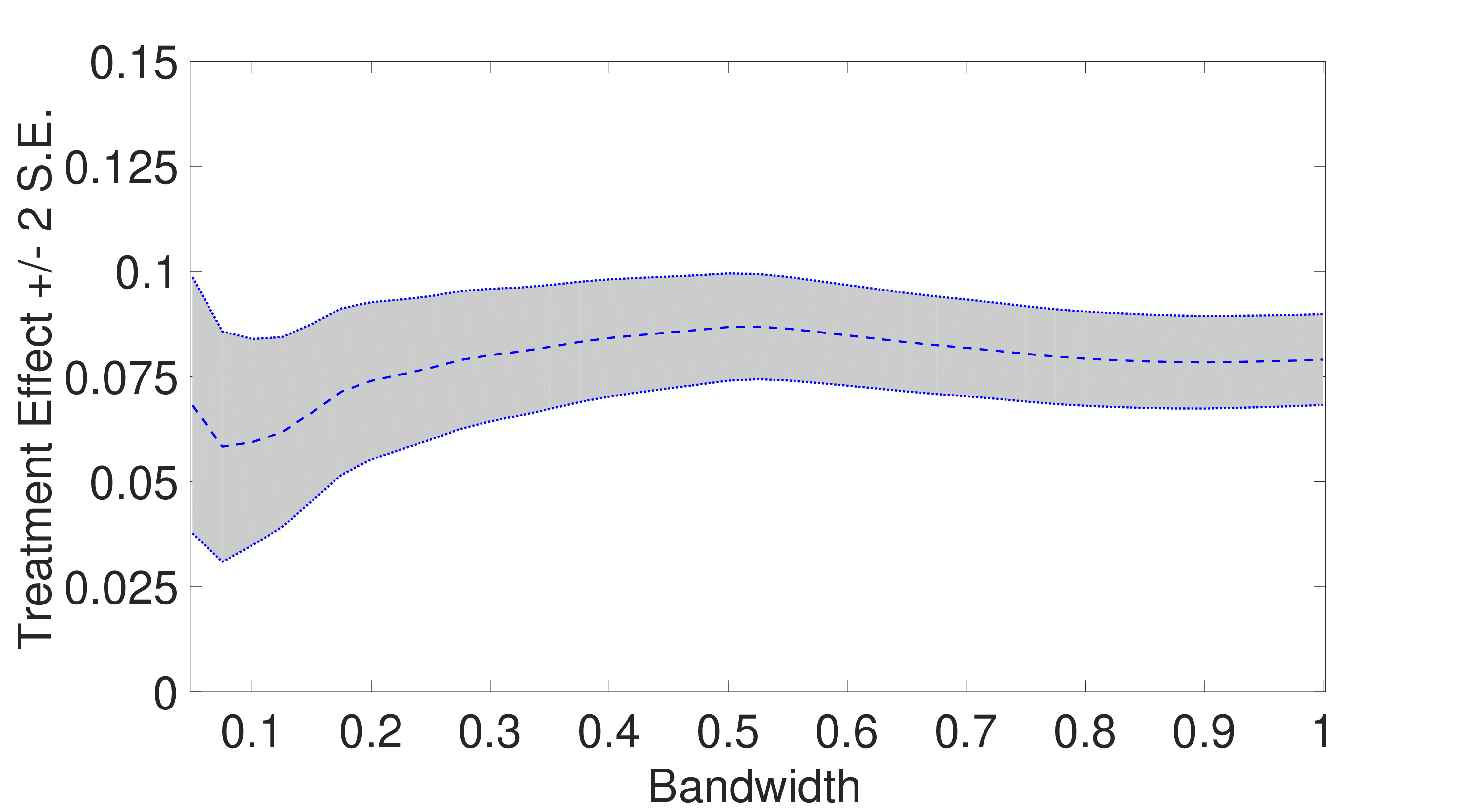}}
\subfloat[Ratio of S.E., LCQR/LLR]{\includegraphics[width=220 pt, height=160
pt]{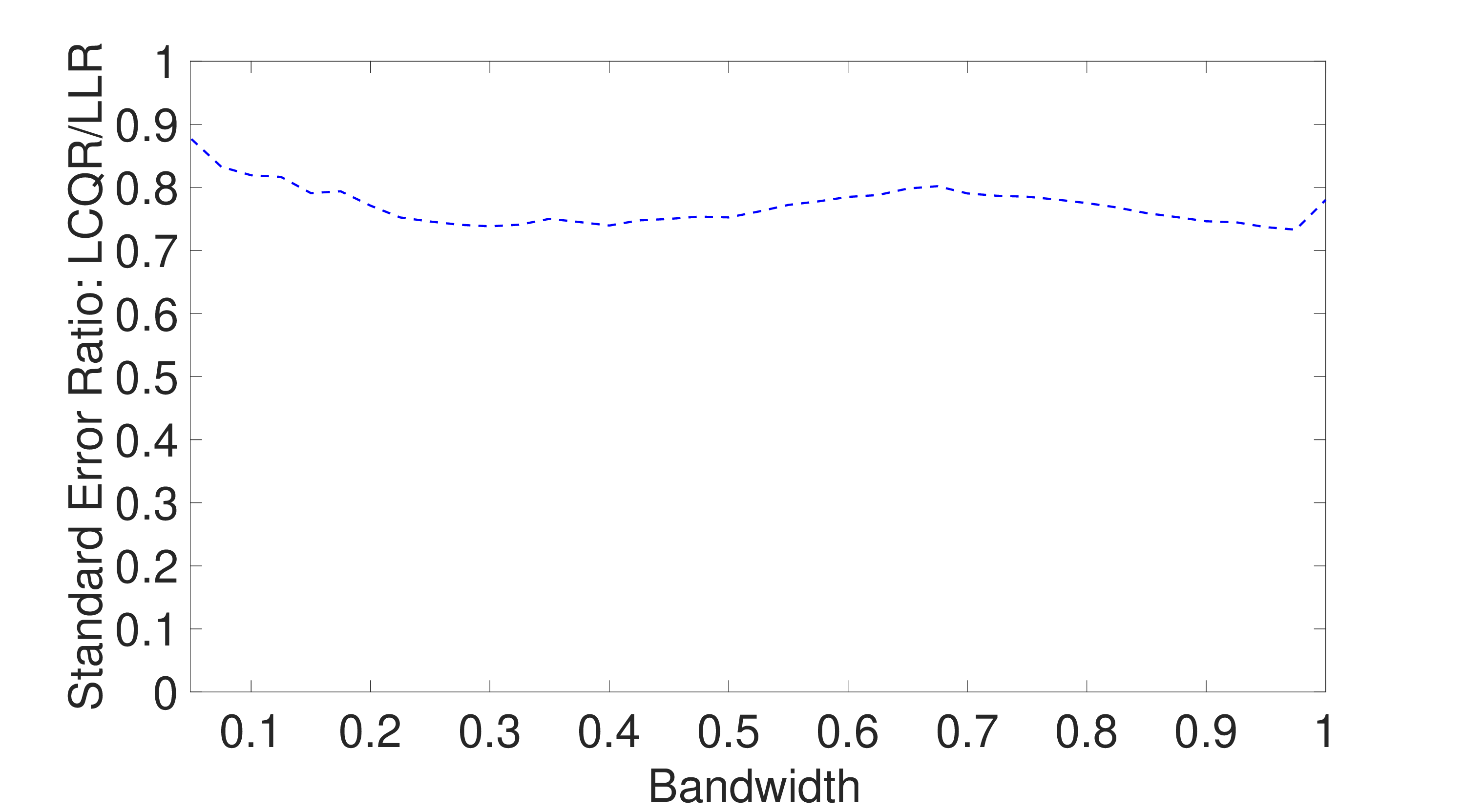}}
\par
\justify
{\textit{Notes}:  (a) x-axis, the democratic vote share (margin of victory) in Election $t$; y-axis, the democratic vote share  in Election $t+1$.  The dots show the sample mean of the vertical variable  in each bin of the horizontal variable (50 bins on each side of the cutoff). The solid line represents the fitted fourth-order polynomial.  As the bandwidth increases, (b) presents the LCQR estimate $\pm$ 2 $\times$ standard error; (c) presents the LLR estimate $\pm$ 2 $\times$ standard error; and (d) presents the ratio of the standard errors by LCQR  and LLR.  The MSE-optimal bandwidth of \cite{imbens2012optimal} for the studied data set  is about 0.3. The triangular kernel is used for both LCQR and LLR estimators.}
\end{figure}

We revisit a classic  example from \cite{lee2008randomized} with  $6,558$ observations depicted in Figure \ref{app_lee}(a); see also \cite{imbens2012optimal}. The horizontal running variable is the democratic vote share (margin of victory) in a previous election, while the vertical outcome variable is the democratic vote share  in the election afterwards.  Consistent with \cite{lee2008randomized} and several follow-up studies, Figure \ref{app_lee}(a) indicates that there is a positive impact of incumbency on re-election, i.e., a visible jump occurs at the threshold zero.

Figure \ref{app_lee}(b) and (c) present the estimated impact  of incumbency on re-election by LCQR and LLR  methods, respectively. We consider a sequence of bandwidth values ranging from 0.05 to  1 with the step size 0.025: 0.05, 0.075, 0.1, ..., 1. This bandwidth sequence thus nests many common  choices such as the MSE-optimal bandwidth of \cite{imbens2012optimal}, which is about 0.3 for the studied data. 

The comparison of Figure \ref{app_lee}(b) and (c) shows that LCQR and LLR yield similar point estimates over a wide range of bandwidths for the studied  \cite{lee2008randomized} application. As the bandwidth increases, Figure \ref{app_lee}(a) indicates that there are  data points staying further away from the fitted regression line. These data points affect  LCQR and LLR  estimators in a different manner, leading to slightly disparate point estimates. Nevertheless, all the point estimates depicted in Figure \ref{app_lee}(b) and (c) are  significantly positive, since the (vertical) zero value is excluded from the shaded regions generated by  $\pm$ 2 standard errors.

Most importantly, Figure \ref{app_lee}(d)  highlights that the standard error of the LCQR estimator is substantially smaller than that of LLR. The standard error ratio is mostly around 70\% $\sim$ 80\%, which can also be viewed by comparing the shaded regions in Figure \ref{app_lee}(b) and (c). Moreover, this standard error ratio does not change much as  the bandwidth varies. Thus, Figure \ref{app_lee}(d) indicates that a confidence interval for the impact  of incumbency on re-election based on the LCQR estimator could be considerably tighter than that by LLR. 

Consider, for example, 0.3 as the adopted bandwidth. The LCQR estimate $\pm$ 1.96 standard error leads to the 95\% confidence interval (0.068,   0.090), while the conventional 95\% confidence interval by LLR is  (0.065, 0.096). If bias correction is further accounted for at the adopted bandwidth 0.3, then the bias-corrected 95\% confidence interval based on LCQR  is   (0.048,   0.090).   This interval is comparable to the bias-corrected 95\% confidence interval  using the \cite{calonico2014robust} approach, which is  (0.046,   0.089). These empirical findings therefore lend credibility to our proposed LCQR approach.

\section{Conclusions}

In this paper, we study the application of LCQR in \cite{kai2010local} to the estimation and inference in RD. We present numerical evidence for the efficiency gain of using LCQR in RD estimation, and also propose a bias-corrected and s.e.-adjusted \textit{t}-statistic to improve the coverage of confidence intervals. Simulation results show good performance of the proposed method under several non-normal error distributions. 

The current work can be extended in several directions. For instance, throughout the paper, we focus on the local linear composite quantile regression, while the general case, local $p$-th order polynomial composite quantile regression, can be similarly adopted.   \cite{kai2010local_supp} establish the asymptotic theory for the LCQR estimator in the general case. Following  \cite{kai2010local_supp}, one could extend our results on the bias-corrected and s.e.-adjusted \textit{t}-statistic to allow for higher-order polynomials. In addition, it will  be naturally appealing to formally explore LCQR for kink RD designs as well as RD designs using covariates. It will also be interesting to revisit some of the existing applications in RD with the proposed method as data may deviate from normality. Finally, on the computation side, instead of using the same bandwidth for estimation, bias correction and s.e. adjustment, one can refine the bandwidth selection process, which may also improve the estimation and coverage probability of LCQR. We leave these topics for future research.

\section*{Acknowledgments}

We thank the Editor, the Associate Editor, and two anonymous referees
for their comments that substantially improved the paper. The authors gratefully acknowledge  the financial support of the Education Economics Center and the computational support of the Office of Research at Kennesaw State University, and have no conflicts of interest to disclose.

\newpage
\spacing{1.45}
\bibliographystyle{ecca}
\bibliography{reference}

\newpage
\setcounter{page}{1}
\spacing{1.42}

\begin{appendices}
{\centering \title{\large\MakeUppercase{Supplementary Material to ``Local Composite Quantile Regression for Regression Discontinuity"}\footnote{Email: xhuang3@kennesaw.edu and zzhan@kennesaw.edu.}\\[10pt]} }
	\begin{center}
		\large
		\author{\textsc{Xiao Huang}  \ \ \ \  \ \ \  \textsc{Zhaoguo Zhan}}\\
		\date{\today}
	\end{center}
	\maketitle
    
\bigskip    
    
This supplement contains all technical details,  lemmas and proofs for the asymptotic and fixed-$n$ results, as well as additional figures and tables.

\setstretch{2}
\bigskip

\localtableofcontents

\newpage 
\setstretch{1.35}
\setcounter{section}{19}
\setcounter{equation}{0}
\renewcommand{\theequation}{A.\arabic{equation}}
\subsection{Notation} \label{supp:notation}    

In a sharp RD, LCQR is applied separately to \cref{eq:Y plus model,eq:Y minus model}, while in a fuzzy RD, LCQR is applied separately to \cref{eq:Y plus model,eq:Y minus model,eq:T plus model,eq:T minus model}. Instead of introducing four similar sets of variables, notation and proofs, we will focus on \cref{eq:Y plus model}. Exactly the same proof holds for the results based on \cref{eq:Y minus model,eq:T plus model,eq:T minus model} with similar notation, where the subscript $Y_+$ used for \cref{eq:Y plus model} becomes $Y_-$, $T_+$, $T_-$ for \cref{eq:Y minus model,eq:T plus model,eq:T minus model}, respectively.


Consider \cref{eq:Y plus model}.  Let $f_{\epsilon_{Y_+}} = (f_{\epsilon_{Y_+}}(c_1), \cdots,f_{\epsilon_{Y_+}}(c_q))^T$ be a $q \times 1$ vector, $S_{Y_{+},11}(c)$ be a $q \times q$ diagonal matrix with diagonal elements $f_{\epsilon_{Y_+}}(c_k) \mu_{+,0}$, $S_{Y_{+},12}(c)$ be a $q \times p$ matrix with $(k,j)$ element $f_{\epsilon_{Y_+}}(c_k)\mu_{+,j}$, $S_{Y_{+},21}(c)$ be the transpose of $S_{Y_{+},12}(c)$, and $S_{Y_{+},22}(c)$ be a $p \times p$ matrix with $(j,j')$ element equal to  $\sum_{k=1}^{q}f_{\epsilon_{Y_+}}(c_k) \mu_{+,j+j'}(c)$. Let $\Sigma_{Y_{+},11}(c)$ be a $q \times q$ matrix with $(k,k')$ element $\nu_{+,0}(c)\tau_{kk'}$, $\Sigma_{Y_{+},12}(c)$ be a $q \times p$ matrix with $(k,j)$ element $\sum_{k'=1}^{q} \tau_{kk'}\nu_{+,j}$, $\Sigma_{Y_{+},21}(c)$ be the transpose of $\Sigma_{Y_{+},12}(c)$, $\Sigma_{Y_{+},22}(c)$ be a $p \times p$ matrix with $(j,j')$ element equal to $\sum_{k=1}^{q}\sum_{k'=1}^{q} \tau_{kk'} \nu_{+,j+j'}(c)$. Define
\begin{equation}
	S_{Y_+}(c) = 
	\begin{pmatrix}
	S_{Y_{+},11}(c) & S_{Y_{+},12}(c)\\
	S_{Y_{+},21}(c) & S_{Y_{+},22}(c)
	\end{pmatrix},
	\quad
	\Sigma_{Y_+}(c) =
	\begin{pmatrix}
	\Sigma_{Y_{+},11}(c) & \Sigma_{Y_{+},12}(c) \\
	\Sigma_{Y_{+},21}(c) & \Sigma_{Y_{+},22}(c)
	\end{pmatrix}, \label{notation of Thm1}
\end{equation}
and the partitioned inverse  $S_{Y_+}^{-1}(c)$:
\begin{equation}
	S_{Y_+}^{-1}(c) = 
	\begin{pmatrix}
	(S_{Y_+}^{-1}(c))_{11} & (S_{Y_+}^{-1}(c))_{12} \\
	(S_{Y_+}^{-1}(c))_{21} & (S_{Y_+}^{-1}(c))_{22}
	\end{pmatrix}.\label{notation of s inverse}
\end{equation}
Let $F_{+}(c_k,c_{k'})$ be the joint cumulative distribution function of $\epsilon_{Y_+}$ and $\epsilon_{T_+}$ at $(c_k,c_{k'})$ and assume $h_{Y_+}=h_{T_+}$. Define $\phi_{kk'} = F_+(c_k,c_{k'}) - \tau_k \tau_{k'}$. Also let $\Sigma_{YT_{+},11}(c)$ be a $q \times q$ matrix with $(k,k')$ element $\nu_{+,0}(c)\phi_{kk'}$, $\Sigma_{YT_{+},12}(c)$ be a $q \times p$ matrix with $(k,j)$ element $\sum_{k'=1}^{q}\phi_{kk'}\nu_{+,j}(c)$, $\Sigma_{YT_{+},21}(c)$ be the transpose of $\Sigma_{YT_{+},12}(c)$, $\Sigma_{YT_{+},22}(c)$ be a $p \times p$ matrix with $(j,j')$ element $\sum_{k=1}^{q}\sum_{k'=1}^{q} \phi_{kk'} \nu_{+,j+j'}(c)$. Define
\begin{equation}
	\Sigma_{YT_+}(c) =
	\begin{pmatrix}
	\Sigma_{YT_{+},11}(c) & \Sigma_{YT_{+},12}(c) \\
	\Sigma_{YT_{+},21}(c) & \Sigma_{YT_{+},22}(c)
	\end{pmatrix}.\label{notation of YT plus}
\end{equation}

Like (\ref{notation of Thm1}), (\ref{notation of s inverse}) and (\ref{notation of YT plus}) above,  a similar set of definitions can be provided to other variables on the boundary, including $S_{Y_-}(c)$, $S_{Y_-}^{-1}(c)$, $\Sigma_{Y_-}(c)$, $S_{T_+}(c)$, $S_{T_+}^{-1}(c)$, $\Sigma_{T_+}(c)$,  $S_{T_-}(c)$, $S_{T_-}^{-1}(c)$, $\Sigma_{T_-}(c)$, $F_{-}(c_k,c_{k'})$, $\Sigma_{YT_-}(c)$, and $\phi_{kk'}$ can also be redefined with $F_{-}(c_k,c_{k'})$.


Let $x_{+,i} = (X_{+,i} - x)/h_{Y_+}$ and $K_{+,i} = K(x_{+,i})$ with $x=0$. Define
\begin{align}
	u_k &= \sqrt{n_+h_{Y_+}}(a_{k} - m_{Y_+}(x) - \sigma_{\epsilon_{Y_+}} c_k), \ k = 1,\cdots,q, \nonumber\\
	v_j &= h_{Y_+}^j \sqrt{n_+h_{Y_+}} (j!b_j - m_{Y+}^{(j)}(x))/j!, \ j=1,\cdots,p, \nonumber\\
	\Delta_{i,k} &= \frac{u_k}{\sqrt{n_+ h_{Y_+}}} + \sum_{j=1}^{p}\frac{v_j x_{+,i}^j}{\sqrt{n_+ h_{Y_+}}},\nonumber\\
	r_{i,p} &= m_{Y_+}(X_{+,i}) - \sum_{j=0}^{p} m_{Y_+}^{(j)}(x)(X_{+,i} - x)^j/j!,\nonumber\\
	d_{i,k} &= c_k [\sigma_{\epsilon_{Y_+}}(X_{+,i}) - \sigma_{\epsilon_{Y_+}}(x)] + r_{i,p}.\label{notation of set uvrd}
\end{align}
Let $W_{Y_+,n_+}^* = (w_{Y_+,11}^*,\cdots,w_{Y_+,1q}^*,w_{Y_+,21}^*,\cdots,w_{Y_+,2p}^*)^T= (w_{Y_+,1n}^*,w_{Y_+,2n}^*)^T$, where 
\begin{align}
	w_{Y_+,1k}^* &= \frac{1}{\sqrt{n_+h_{Y_+}}} \sum_{i=1}^{n_+} K(x_{+,i})\eta_{Y_+,i,k}^*,\nonumber\\
	w_{Y_+,2j}^* &= \frac{1}{\sqrt{n_+h_{Y_+}}} \sum_{k=1}^{q} \sum_{i=1}^{n_+} K(x_{+,i}) x_{+,i}^j \eta_{Y_+,i,k}^*, \nonumber\\
	\eta_{Y_+,i,k}^* &= I(\epsilon_{Y_+,i} \leq c_k - \frac{d_{i,k}}{\sigma_{\epsilon_{Y_+,i}}}) - \tau_k.
\end{align}
Also let $W_{Y_+,n_+} = (w_{Y_+,11},\cdots,w_{Y_+,1q},w_{Y_+,21},\cdots,w_{Y_+,2p})^T = (w_{Y_+,1n}, w_{Y_+,2n})^T$, where
\begin{align}
w_{Y_+,1k} &= \frac{1}{\sqrt{n_+h_{Y_+}}} \sum_{i=1}^{n_+} K(x_{+,i})\eta_{Y_+,i,k},\nonumber\\
w_{Y_+,2j} &= \frac{1}{\sqrt{n_+h_{Y_+}}} \sum_{k=1}^{q} \sum_{i=1}^{n_+} K(x_{+,i}) x_{+,i}^j \eta_{Y_+,i,k}, \nonumber\\
\eta_{Y_+,i,k} &= I(\epsilon_{Y_+,i} \leq c_k) - \tau_k.
\end{align}
Similarly, we define $W_{T_+,n_+}^*,w_{T_+,1k}^*,w_{T_+,2j}^*, \eta_{T_+,i,k}^*, W_{T_+,n_+},w_{T_+,1k},w_{T_+,2j}$ and $\eta_{T_+,i,k}$.

Consider the case $p = 1$ and define $\theta = (u_1,\cdots,u_q,v_1)^T$.  Let $\hat{\theta}_{n_+} = (\hat{u}_1,\cdots,\hat{u}_q,\hat{v}_1)^T$ be the transformed minimizer of (\ref{eq:lcqr obj}).  It can be shown that minimizing (\ref{eq:lcqr obj}) is equivalent to minimizing
\begin{equation*} \label{eq:lcqr obj2}
	L_{n_+}(\theta) = \sum_{i=1}^{n_+} \left(K(x_{+,i})\sum_{k=1}^{q}(\rho_{\tau_k}(\sigma_{\epsilon_{Y_+,i}} (\epsilon_{Y_+,i} - c_k) + d_{i,k} - \Delta_{i,k}) - \rho_{\tau_k}(\sigma_{\epsilon_{Y_+,i}} (\epsilon_{Y_+,i} - c_k) + d_{i,k}) )\right).
\end{equation*}


\setstretch{1.45}
Next, in a similar fashion we introduce the notation used for fixed-$n$ approximations in \Cref{supp:fixed-n}.  The notation of $K_{+,i}$ as well as $x_{+,i}$ has been provided for (\ref{notation of set uvrd}). Let $S_{nY_{+},11}$ be a $q \times q$ diagonal matrix with diagonal elements $f_{\epsilon_{Y_+}}(c_k) \frac{1}{n_+ h_{Y+}}\sum_{i=1}^{n+} \frac{K_{+,i}}{\sigma_{\epsilon_{Y_+,i}}}  $, $S_{nY_{+},12}$ be a $q \times p$ matrix with $(k,j)$ element $f_{\epsilon_{Y_+}}(c_k) \frac{1}{n_+ h_{Y+}}\sum_{i=1}^{n+} \frac{K_{+,i}x_{+,i}^j}{\sigma_{\epsilon_{Y_+,i}}} $, $S_{nY_{+},21}$ be the transpose of $S_{nY_{+},12}$, and $S_{nY_{+},22}$ be a $p \times p$ matrix with $(j,j')$ element equal to  $\sum_{k=1}^{q}f_{\epsilon_{Y_+}}(c_k) \frac{1}{n_+ h_{Y+}}\sum_{i=1}^{n+} \frac{K_{+,i}x_{+,i}^{j+j'}}{\sigma_{\epsilon_{Y_+,i}}}$.  Let $\Sigma_{nY_{+},11}$ be a $q \times q$ matrix with $(k,k')$ element $\frac{1}{n_+ h_{Y+}}\sum_{i=1}^{n+} K_{+,i}^2 \tau_{kk'}$, $\Sigma_{nY_{+},12}$ be a $q \times p$ matrix with $(k,j)$ element $\sum_{k'=1}^{q} \tau_{kk'} \frac{1}{n_+ h_{Y+}}\sum_{i=1}^{n+} K_{+,i}^2 x_{+,i}^j $, $\Sigma_{nY_{+},21}$ be the transpose of $\Sigma_{nY_{+},12}(c)$, $\Sigma_{nY_{+},22}$ be a $p \times p$ matrix with $(j,j')$ element equal to $\sum_{k=1}^{q}\sum_{k'=1}^{q} \tau_{kk'} \frac{1}{n_+ h_{Y+}}\sum_{i=1}^{n} K_{+,i}^2 x_{+,i}^{j+j'}$.

Similar to (\ref{notation of Thm1}) and (\ref{notation of s inverse}), define
\begin{equation}
S_{nY_+} = 
\begin{pmatrix}
S_{nY_{+},11} & S_{nY_{+},12}\\
S_{nY_{+},21} & S_{nY_{+},22}
\end{pmatrix},
\quad
\Sigma_{nY_+} =
\begin{pmatrix}
\Sigma_{nY_{+},11} & \Sigma_{nY_{+},12} \\
\Sigma_{nY_{+},21} & \Sigma_{nY_{+},22}
\end{pmatrix},\label{notation of fixedn s sigma}
\end{equation}
\begin{equation}
S_{nY_+}^{-1} = 
\begin{pmatrix}
(S_{nY_+}^{-1})_{11} & (S_{nY_+}^{-1})_{12} \\
(S_{nY_+}^{-1})_{21} & (S_{nY_+}^{-1})_{22}
\end{pmatrix}.
\end{equation}

Similar to (\ref{notation of YT plus}), let $\Sigma_{nYT_{+},11}$ be a $q \times q$ matrix with $(k,k')$ element $\phi_{kk'}\frac{1}{n_+ \sqrt{h_{Y+}h_{T+}}}\sum_{i=1}^{n+} K_{+,i}^2 $, $\Sigma_{nYT_{+},12}$ be a $q \times p$ matrix with $(k,j)$ element $\sum_{k'=1}^{q}\phi_{kk'} \frac{1}{n_+ \sqrt{h_{Y+}h_{T+}}}\sum_{i=1}^{n+} K_{+,i}^2 x_{+,i}^j$, $\Sigma_{nYT_{+},21}$ be the transpose of $\Sigma_{nYT_{+},12}$, and $\Sigma_{nYT_{+},22}$ be a $p \times p$ matrix with $(j,j')$ element  equal to
$\sum_{k=1}^{q}\sum_{k'=1}^{q} \phi_{kk'} \frac{1}{n_+ \sqrt{h_{Y+}h_{T+}}}\sum_{i=1}^{n+} K_{+,i}^2 x_{+,i}^{j+j'}$.  Define
\begin{equation}
\Sigma_{nYT_+} =
\begin{pmatrix}
\Sigma_{nYT_{+},11} & \Sigma_{nYT_{+},12} \\
\Sigma_{nYT_{+},21} & \Sigma_{nYT_{+},22}
\end{pmatrix}.
\end{equation}

A similar set of definitions can be provided to other fixed-$n$ variables on the boundary, including $S_{nY_-}$, $S_{nY_-}^{-1}$, $\Sigma_{nY_-}$,  $S_{nT_+}$, $S_{nT_+}^{-1}$, $\Sigma_{nT_+}$ , $S_{nT_-}$ , $S_{nT_-}^{-1}$, $\Sigma_{nT_-}$ and $\Sigma_{nYT_-}$.

Given the above fixed-$n$ definitions and let $x=0$, it can be verified that, as $n_+ \rightarrow \infty$, 
\begin{equation*}
    S_{nY_+} \rightarrow \frac{f_{X_+}(x)}{\sigma_{\epsilon_{Y_+}}(x)} S_{Y_+}(c), \: \Sigma_{nY_+} \rightarrow f_{X_+}(x)\Sigma_{Y_+}(c) \textrm{ and } \Sigma_{nYT_+} \rightarrow f_{X_+}(x)\Sigma_{YT_+}(c).
\end{equation*}

\subsection{Lemmas and proofs for \Cref{thm:sharp results,thm:fuzzy results,thm:sharp t adjusted dist,thm:fuzzy t adjusted dist,thm:adjusted bandwidth}} \label{supp:asymptotic}


\begin{lemma} \label{lemma:theta distribution}
	Under \cref{assumption:m continuity,assumption:sigma continuity,assumption:kernel,assumption:density x,assumption:error distribution,assumption:bandwidth}, as $n_+ \rightarrow \infty$, we have
	\begin{equation}
		\hat{\theta}_{n_+} + \frac{\sigma_{\epsilon_{Y_+}}(0)}{f_{X_+}(0)} S_{Y_+}^{-1}(c) E(W_{n_+}^*|\mathbf{X}) \overset{L}{\rightarrow} MVN \left(\mathbf{0},\frac{\sigma_{\epsilon_{Y_+}}^2(0)}{f_{X_+}(0)} S_{Y_+}^{-1}(c) \Sigma_{Y_+}(c) S_{Y_+}^{-1}(c) \right).
	\end{equation}
\end{lemma}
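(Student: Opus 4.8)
The plan is to adapt the convexity-lemma approach for quantile-type M-estimators to the composite objective, exactly as in \cite{kai2010local_supp}. The key structural fact is that the reparametrized criterion $L_{n_+}(\theta)$ displayed above is a finite sum of check losses composed with affine functions of $\theta$, hence convex in $\theta$; this is what will let me upgrade pointwise-in-$\theta$ convergence to convergence of the minimizer $\hat{\theta}_{n_+}$. Writing $\sigma=\sigma_{\epsilon_{Y_+}}(0)$ and $f=f_{X_+}(0)$ for brevity and conditioning on $\mathbf X$ (so the only randomness is in the errors), I would first apply Knight's identity $\rho_\tau(r-s)-\rho_\tau(r)=s\,(I(r<0)-\tau)+\int_0^s[I(r\le t)-I(r\le 0)]\,dt$ to each summand, with $r=\sigma_{\epsilon_{Y_+,i}}(\epsilon_{Y_+,i}-c_k)+d_{i,k}$ and $s=\Delta_{i,k}$. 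Since the error laws are continuous, $I(r<0)=I(\epsilon_{Y_+,i}\le c_k-d_{i,k}/\sigma_{\epsilon_{Y_+,i}})$ up to a null set, so the linear term collapses to $\eta^*_{Y_+,i,k}$ and, because $\Delta_{i,k}=(u_k+\sum_j v_j x_{+,i}^j)/\sqrt{n_+h_{Y_+}}$ is affine in $\theta$, the whole linear part equals $\theta^{T}W_{n_+}^{*}$. This gives
\[
L_{n_+}(\theta)=\theta^{T}W_{n_+}^{*}+B_{n_+}(\theta),\qquad B_{n_+}(\theta)=\sum_{i,k}K_{+,i}\int_0^{\Delta_{i,k}}\!\!\big[I(r\le t)-I(r\le 0)\big]\,dt .
\]

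Second, I would show $B_{n_+}(\theta)\overset{p}{\to}\tfrac12\,\tfrac{f}{\sigma}\,\theta^{T}S_{Y_+}(c)\,\theta$ for each fixed $\theta$. Taking $E[\,\cdot\,|\mathbf X]$ replaces the inner indicator difference by $F_{\epsilon_{Y_+}}(c_k-d_{i,k}/\sigma+t/\sigma)-F_{\epsilon_{Y_+}}(c_k-d_{i,k}/\sigma)$; a first-order Taylor expansion in $t$ (valid by \Cref{assumption:m continuity,assumption:sigma continuity} and the positivity/differentiability of $f$ in \Cref{assumption:density x}), together with $d_{i,k}\to 0$, gives the per-term contribution $\tfrac{f_{\epsilon_{Y_+}}(c_k)}{2\sigma}K_{+,i}\Delta_{i,k}^{2}$. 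Summing, expanding the square, and replacing the deterministic weighted moments $\tfrac{1}{n_+h_{Y_+}}\sum_i K_{+,i}x_{+,i}^{m}$ by their limits $f\,\mu_{+,m}(c)$ produces exactly $\tfrac{f}{2\sigma}\theta^{T}S_{Y_+}(c)\theta$, by the block structure of $S_{Y_+}(c)$. A bound showing $\mathrm{Var}(B_{n_+}(\theta)\,|\mathbf X)\to 0$ then upgrades this conditional-mean statement to convergence in probability.

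Third, I would establish joint asymptotic normality of the centered score $W_{n_+}^{*}-E[W_{n_+}^{*}|\mathbf X]$. Conditional on $\mathbf X$ its summands are independent across $i$, uniformly bounded (the kernel is bounded by \Cref{assumption:kernel} and each $\eta^{*}$ lies in $[-1,1]$), and mean zero; after the $(n_+h_{Y_+})^{-1/2}$ scaling with $n_+h_{Y_+}\to\infty$ a Lyapunov ratio tends to zero, so Cram\'er--Wold yields asymptotic normality. Independence and $\mathrm{Cov}(\eta^{*}_{i,k},\eta^{*}_{i,k'})\to\tau_{kk'}$ (with the analogous $\nu$-weighted sums for the slope blocks) give $\mathrm{Var}(W_{n_+}^{*}|\mathbf X)\to f\,\Sigma_{Y_+}(c)$, hence $W_{n_+}^{*}-E[W_{n_+}^{*}|\mathbf X]\overset{d}{\to}MVN(\mathbf 0, f\,\Sigma_{Y_+}(c))$.

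Finally, since $L_{n_+}$ is convex and converges (in probability, pointwise in $\theta$) to the strictly convex quadratic $Q(\theta)=\theta^{T}W_{n_+}^{*}+\tfrac12\tfrac{f}{\sigma}\theta^{T}S_{Y_+}(c)\theta$, the convexity lemma gives $\hat{\theta}_{n_+}=-\tfrac{\sigma}{f}S_{Y_+}^{-1}(c)W_{n_+}^{*}+o_p(1)$, the unique minimizer of $Q$. Rearranging,
\[
\hat{\theta}_{n_+}+\tfrac{\sigma}{f}S_{Y_+}^{-1}(c)E[W_{n_+}^{*}|\mathbf X]=-\tfrac{\sigma}{f}S_{Y_+}^{-1}(c)\big(W_{n_+}^{*}-E[W_{n_+}^{*}|\mathbf X]\big)+o_p(1),
\]
and substituting the Step-three limit yields a Gaussian limit with covariance $\tfrac{\sigma^{2}}{f^{2}}S_{Y_+}^{-1}(c)\,(f\Sigma_{Y_+}(c))\,S_{Y_+}^{-1}(c)=\tfrac{\sigma^{2}}{f}S_{Y_+}^{-1}(c)\Sigma_{Y_+}(c)S_{Y_+}^{-1}(c)$, which is the claim.

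I expect the delicate step to be making the quadratic approximation of $B_{n_+}(\theta)$ rigorous \emph{uniformly} on compact $\theta$-sets, as the convexity lemma requires: one must control the Taylor-remainder errors coming from $d_{i,k}$ (which bundles the mean-approximation term $r_{i,p}$ with the heteroskedasticity term $c_k[\sigma_{\epsilon_{Y_+}}(X_{+,i})-\sigma_{\epsilon_{Y_+}}(0)]$) and from replacing the empirical weighted moments by kernel moments, and show the integral remainder concentrates at its conditional mean. This is where the smoothness assumptions and the bandwidth condition $n_+h_{Y_+}^5\to\infty$ of \Cref{assumption:bandwidth} do the real work, reconciling the $O(h_{Y_+}^2)$ size of $d_{i,k}$ with the $(n_+h_{Y_+})^{-1/2}$ scaling.
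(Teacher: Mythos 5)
Your proposal is correct and follows essentially the same route as the proof the paper invokes: the paper's own proof of \Cref{lemma:theta distribution} simply cites Theorem 2.1 of \cite{kai2010local_supp}, whose argument is exactly your Knight's-identity decomposition, quadratic approximation of the convex criterion via the convexity lemma, and conditional Cram\'er--Wold CLT for the centered score $W_{n_+}^* - E(W_{n_+}^*|\mathbf{X})$. One small note: for this lemma only $n_+h_{Y_+} \rightarrow \infty$ is needed (as the paper remarks in its discussion of \Cref{assumption:bandwidth}, the stronger $n_+h_{Y_+}^5 \rightarrow \infty$ is reserved for consistent estimation of $m_{Y_+}^{(2)}$), so your closing attribution of the ``real work'' to $n_+h_{Y_+}^5 \rightarrow \infty$ is misplaced, though harmless since the assumption is in force.
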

\begin{proof}[Proof of \Cref{lemma:theta distribution}]
	See the proof of Theorem 2.1 in \cite{kai2010local_supp}.
\end{proof}

\begin{lemma} \label{lemma:bias and variance}
	Under \cref{assumption:m continuity,assumption:sigma continuity,assumption:kernel,assumption:density x,assumption:error distribution,assumption:bandwidth}, as $n_+ \rightarrow \infty$, the asymptotic bias and variance for the LCQR estimator in \cref{eq:Y plus model} are given by
	\begin{align}
		\text{Bias}(\hat{m}_{Y_+}(0)|\mathbf{X})&= \frac{1}{2} a_{Y_+}(c) m_{Y_+}^{(2)}(0) h_{Y_+}^2 + o_p(h_{Y_+}^2),\nonumber\\
		\text{Var}(\hat{m}_{Y_+}(0)|\mathbf{X}) &= \frac{1}{n_+h_{Y_+}}\frac{b_{Y_+}(c)\sigma_{\epsilon_{Y_+}}^2(0)}{f_{X_+}(0)} + o_p(\frac{1}{n_+h_{Y_+}}),\nonumber\\
		a_{Y_+}(c) &= \frac{\mu_{+,2}^2(c) - \mu_{+,1}(c)\mu_{+,3}(c)}{\mu_{+,0}(c)\mu_{+,2}(c) - \mu_{+,1}^2(c)},\nonumber\\ 
		b_{Y_+}(c) &= e_q^T (S_{Y_+}^{-1}(c) \Sigma_{Y_+}(c) S_{Y_+}^{-1}(c))_{11} e_q/q^2.  
	\end{align}
\end{lemma}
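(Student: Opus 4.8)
The plan is to reduce both claims to the stochastic expansion for $\hat{\theta}_{n_+}$ provided by \Cref{lemma:theta distribution}, so that the lemma becomes a matter of extracting the right linear combination and computing one leading-order conditional mean. First I would record, from the change of variables defining $u_k$ with $p=1$, that the minimizer satisfies $\hat{a}_k = m_{Y_+}(0) + \sigma_{\epsilon_{Y_+}}(0)c_k + \hat{u}_k/\sqrt{n_+h_{Y_+}}$, whence
\[
\hat{m}_{Y_+}(0) - m_{Y_+}(0) = \frac{\sigma_{\epsilon_{Y_+}}(0)}{q}\sum_{k=1}^{q}c_k + \frac{1}{q\sqrt{n_+h_{Y_+}}}(e_q^T,0)\hat{\theta}_{n_+}.
\]
The first term vanishes: under \Cref{assumption:error distribution} the error law is symmetric about $0$, and since $\tau_{q+1-k}=1-\tau_k$ we get $c_{q+1-k}=-c_k$, so $\sum_k c_k = 0$. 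Isolating this symmetry step is worthwhile because the same pairing identity is what later removes the heteroskedasticity contribution from the bias.

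For the variance the argument is then immediate. Taking conditional variance and inserting the limiting covariance from \Cref{lemma:theta distribution},
\[
\text{Var}(\hat{m}_{Y_+}(0)|\mathbf{X}) = \frac{1}{q^2 n_+h_{Y_+}}(e_q^T,0)\,\text{Var}(\hat{\theta}_{n_+}|\mathbf{X})\,(e_q^T,0)^T + o_p\!\left(\tfrac{1}{n_+h_{Y_+}}\right),
\]
and contracting $(e_q^T,0)$ with $\frac{\sigma_{\epsilon_{Y_+}}^2(0)}{f_{X_+}(0)}S_{Y_+}^{-1}(c)\Sigma_{Y_+}(c)S_{Y_+}^{-1}(c)$ selects exactly the top-left $q\times q$ block, giving $e_q^T (S_{Y_+}^{-1}(c)\Sigma_{Y_+}(c)S_{Y_+}^{-1}(c))_{11} e_q = q^2 b_{Y_+}(c)$ by the definition \eqref{eq:bYc plus}. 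The factors of $q^2$ cancel and the stated variance follows.

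The real work is the bias, namely $\text{Bias}(\hat{m}_{Y_+}(0)|\mathbf{X}) = \frac{1}{q\sqrt{n_+h_{Y_+}}}(e_q^T,0)E(\hat{\theta}_{n_+}|\mathbf{X})$, where by the expansion in \Cref{lemma:theta distribution} the conditional mean of $\hat{\theta}_{n_+}$ equals $-\frac{\sigma_{\epsilon_{Y_+}}(0)}{f_{X_+}(0)}S_{Y_+}^{-1}(c)E(W_{n_+}^*|\mathbf{X})$ to leading order. I would Taylor-expand the score's conditional mean: since $E(\eta_{Y_+,i,k}^*|\mathbf{X}) = F_{\epsilon_{Y_+}}(c_k - d_{i,k}/\sigma_{\epsilon_{Y_+},i}) - \tau_k = -f_{\epsilon_{Y_+}}(c_k)\,d_{i,k}/\sigma_{\epsilon_{Y_+}}(0) + o(\cdot)$ with $d_{i,k} = c_k\sigma_{\epsilon_{Y_+}}'(0)X_{+,i} + \tfrac12 m_{Y_+}^{(2)}(0)X_{+,i}^2 + \cdots$, replacing the kernel sums $\frac{1}{n_+h_{Y_+}}\sum_i K(x_{+,i})x_{+,i}^j$ by $f_{X_+}(0)\mu_{+,j}(c)$ produces the two blocks of $E(W_{n_+}^*|\mathbf{X})$, with a linear-in-$c_k$ heteroskedasticity piece of order $h_{Y_+}$ and a curvature piece of order $h_{Y_+}^2$. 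The identity $\sum_k f_{\epsilon_{Y_+}}(c_k)c_k = 0$ (same symmetric pairing as above) annihilates every $\sigma_{\epsilon_{Y_+}}'$ term after projection. I would finish by solving the block system $S_{Y_+}(c)\mathbf{g} = E(W_{n_+}^*|\mathbf{X})$, using that the $(1,1)$ block is diagonal and $p=1$, then summing the first $q$ components of $\mathbf{g}$; the algebra collapses to $\tfrac12 m_{Y_+}^{(2)}(0)h_{Y_+}^2\,\frac{\mu_{+,2}^2(c)-\mu_{+,1}(c)\mu_{+,3}(c)}{\mu_{+,0}(c)\mu_{+,2}(c)-\mu_{+,1}^2(c)} = \tfrac12 a_{Y_+}(c)m_{Y_+}^{(2)}(0)h_{Y_+}^2$.

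The main obstacle is precisely this bias computation: the bookkeeping of the leading-order expansion of $E(W_{n_+}^*|\mathbf{X})$, including controlling the remainder as $o_p(h_{Y_+}^2)$ via the third-order differentiability in \Cref{assumption:m continuity} and \Cref{assumption:bandwidth}, together with the matrix reduction to the closed form $a_{Y_+}(c)$. Because the cancellation of the heteroskedasticity term rests entirely on the symmetric-error and equally-spaced-quantile structure, I would verify the two symmetry identities $\sum_k c_k = 0$ and $\sum_k f_{\epsilon_{Y_+}}(c_k)c_k = 0$ carefully before relying on the simplification; everything else is a direct consequence of \Cref{lemma:theta distribution} and the definitions of $a_{Y_+}(c)$ and $b_{Y_+}(c)$.
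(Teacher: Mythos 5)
Your overall route is sound and is, in substance, the same one the paper relies on: the paper's own proof of this lemma is a two-line citation to Theorem 2.2 of \cite{kai2010local_supp}, and the expansion you reconstruct from \Cref{lemma:theta distribution} — writing $\hat{m}_{Y_+}(0)-m_{Y_+}(0)=\frac{\sigma_{\epsilon_{Y_+}}(0)}{q}\sum_k c_k+\frac{1}{q\sqrt{n_+h_{Y_+}}}e_q^T\hat{u}$, dropping $\sum_k c_k=0$ by the symmetric pairing $c_{q+1-k}=-c_k$, Taylor-expanding $E(W_{n_+}^*|\mathbf{X})$ through $d_{i,k}$, and solving the block system — is exactly the computation the paper carries out explicitly when it proves \Cref{thm:adjusted bandwidth} (there extended to $O(h^3)$). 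Your variance argument, contracting $(e_q^T,0)$ against $S_{Y_+}^{-1}(c)\Sigma_{Y_+}(c)S_{Y_+}^{-1}(c)$ to pick out the $(1,1)$ block and cancel the $q^2$, matches \cref{eq:var} in the paper's proof, and your block elimination for the bias does collapse to $a_{Y_+}(c)$ as claimed (I verified the algebra for $p=1$).

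There is one genuine imprecision: the identity $\sum_k f_{\epsilon_{Y_+}}(c_k)c_k=0$ does not by itself ``annihilate every $\sigma_{\epsilon_{Y_+}}'$ term after projection.'' It suffices for the term routed through $(S_{Y_+}^{-1}(c))_{12}$, because there the sum over $k$ appears explicitly inside $E(w_{Y_+,2n}^*)$. But the heteroskedasticity contribution routed through $(S_{Y_+}^{-1}(c))_{11}$ is $e_q^T(S_{Y_+}^{-1}(c))_{11}v$ with $v=(f_{\epsilon_{Y_+}}(c_1)c_1,\dots,f_{\epsilon_{Y_+}}(c_q)c_q)^T$, and $(S_{Y_+}^{-1}(c))_{11}$ is \emph{not} diagonal (only the block $S_{Y_+,11}(c)$ of $S_{Y_+}(c)$ is), so $e_q^Tv=0$ does not imply $e_q^T(S_{Y_+}^{-1}(c))_{11}v=0$. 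The correct argument is the matrix version of your pairing: let $J$ be the $q\times q$ index-reversal permutation; \Cref{assumption:error distribution} gives $f_{\epsilon_{Y_+}}(c_{q+1-k})=f_{\epsilon_{Y_+}}(c_k)$ and $\tau_{q+1-k,\,q+1-k'}=\tau_{kk'}$, so with $\tilde{J}=\mathrm{diag}(J,I_p)$ one has $\tilde{J}S_{Y_+}(c)\tilde{J}=S_{Y_+}(c)$, hence $J(S_{Y_+}^{-1}(c))_{11}J=(S_{Y_+}^{-1}(c))_{11}$; combined with $Je_q=e_q$ and $Jv=-v$ this yields $e_q^T(S_{Y_+}^{-1}(c))_{11}v=-e_q^T(S_{Y_+}^{-1}(c))_{11}v=0$. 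This is precisely the unstated justification behind the paper's remark (in the proof of \Cref{thm:adjusted bandwidth}) that ``all terms containing $c_k$ \ldots become zero after a summation.'' With that step repaired, your proposal is complete; the remainder control via the third-order differentiability in \Cref{assumption:m continuity} is handled as you indicate.
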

\begin{proof}[Proof of \Cref{lemma:bias and variance}]
	The bias result follows that in Theorem 2.2 in \cite{kai2010local_supp}. The variance result also largely follows that in \cite{kai2010local_supp}. Given 
	\begin{equation} \label{eq:var}
		\text{Var}(\hat{m}_{Y_+}(0)|\mathbf{X}) = \frac{1}{n_+h_{Y_+}} \frac{\sigma_{\epsilon_{Y_+}}^2}{q^2 f_{X_+}(0)} e_q^T (S_{Y_+}^{-1}(c) \Sigma_{Y_+}(c) S_{Y_+}^{-1}(c))_{11} e_q + o_p(\frac{1}{n_+h_{Y_+}}),
	\end{equation}
	It is easy to verify that when $q = 1$, \cref{eq:var} can be written as
	\begin{align} \label{eq:var q=1}
		\text{Var}(\hat{m}_{Y_+}(0)|\mathbf{X}) &= \frac{1}{n_+h_{Y_+}} \frac{\sigma_{\epsilon_{Y_+}}^2}{f_{X_+}(0)} \frac{\mu_{+,2}^2(c)\nu_{+,0}(c) - 2 \mu_{+,1}(c)\mu_{+,2}(c)\nu_{+,1}(c)+\mu_{+,1}^2(c)\nu_{+,2}(c)}{(\mu_{+,0}(c)\mu_{+,2}(c)-\mu_{+,1}^2(c))^2} R_1(q) \nonumber\\
		&\quad+ o_p(\frac{1}{n_+h_{Y_+}}),
	\end{align}
where $R_1(q) = \frac{1}{q^2} \sum_{k=1}^{q} \sum_{k'=1}^{q} \frac{\tau_{kk'}}{f_{\epsilon_{Y_+}(c_k)}f_{\epsilon_{Y_+}(c_{k'})}}$.
However, for $q \geq 2$, the result in \cref{eq:var q=1} no longer holds and we use \cref{eq:var} instead.
\end{proof}


\begin{lemma} \label{lemma:covariance}
	Under \cref{assumption:m continuity,assumption:sigma continuity,assumption:kernel,assumption:density x,assumption:error distribution,assumption:bandwidth}, as $n_+ \rightarrow \infty$, the covariance between $\hat{m}_{Y_+}(x)$ and $\hat{m}_{T_+}(x)$ at the boundary point $0$ is given by
	
	\begin{align}
\text{Cov}(\hat{m}_{Y_+}(0),\hat{m}_{T_+}(0)|\mathbf{X}) 		&= \frac{1}{n_+\sqrt{h_{Y_+}h_{T_+}}}\frac{\sigma_{\epsilon_{Y_+}}(0)\sigma_{\epsilon_{T_+}}(0)}{f_{X_+}(0)} b_{YT_+}+ o_p(\frac{1}{n_+h_{Y_+}} + \frac{1}{n_+h_{T_+}}), \label{eq:cov q=2}
	\end{align}
	where
	\begin{equation} \label{eq:b_YT+}
		b_{YT_+} = \frac{1}{q^2} e_q^T
		\left(S_{Y_+}^{-1}(c) \Sigma_{YT_+} S_{T_+}^{-1}(c)\right)_{11} e_q.
	\end{equation}
\end{lemma}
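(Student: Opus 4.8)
The plan is to obtain the covariance by working with the joint asymptotically linear (Bahadur-type) representation of the two LCQR fits, paralleling the variance computation in \Cref{lemma:bias and variance} but now tracking the cross terms between the $Y_+$ and $T_+$ regressions, which are run on the same units $i$. First I would recall from \Cref{lemma:theta distribution} that the transformed minimizer $\hat\theta_{n_+}$ for \cref{eq:Y plus model} has leading stochastic part $-\frac{\sigma_{\epsilon_{Y_+}}(0)}{f_{X_+}(0)}S_{Y_+}^{-1}(c)$ times the centered score $W_{Y_+,n_+}^*-E(W_{Y_+,n_+}^*\mid\mathbf X)$, which is asymptotically equivalent to $W_{Y_+,n_+}$ for covariance purposes, and the analogous statement for $T_+$. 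Using $\hat u_k=\sqrt{n_+h_{Y_+}}(\hat a_k-m_{Y_+}(0)-\sigma_{\epsilon_{Y_+}}c_k)$ from \cref{notation of set uvrd} together with $\hat m_{Y_+}(0)=\frac1q\sum_k\hat a_k$ and the symmetry of the $c_k$ about $0$ (so $\frac1q\sum_k c_k=0$), the centered estimator becomes $\hat m_{Y_+}(0)-m_{Y_+}(0)=\frac{1}{q\sqrt{n_+h_{Y_+}}}e_q^T\hat u_{1:q}+\text{bias}$, and likewise for $T_+$.

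Second, since $S_{Y_+}^{-1}(c)$, $S_{T_+}^{-1}(c)$, and the prefactors $\sigma/f$ are deterministic conditional on $\mathbf X$, the covariance reduces to a sandwich involving only the cross-covariance of the score vectors. Keeping the $q\times q$ ``11'' block and dividing by $q^2$, I would write
\begin{equation*}
\text{Cov}(\hat m_{Y_+}(0),\hat m_{T_+}(0)\mid\mathbf X)\approx \frac{1}{q^2}\,\frac{\sigma_{\epsilon_{Y_+}}(0)\sigma_{\epsilon_{T_+}}(0)}{f_{X_+}^2(0)}\,\frac{1}{\sqrt{n_+h_{Y_+}}\sqrt{n_+h_{T_+}}}\,e_q^T\bigl(S_{Y_+}^{-1}(c)\,\Xi_{n_+}\,S_{T_+}^{-1}(c)\bigr)_{11}e_q,
\end{equation*}
where $\Xi_{n_+}=\text{Cov}(W_{Y_+,n_+},W_{T_+,n_+}\mid\mathbf X)$.

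Third, I would evaluate $\Xi_{n_+}$. Because observations are independent across $i$, only same-$i$ cross terms survive, and for each $i$ the key quantity is $\text{Cov}(\eta_{Y_+,i,k},\eta_{T_+,i,k'})=P(\epsilon_{Y_+,i}\le c_k,\epsilon_{T_+,i}\le c_{k'})-\tau_k\tau_{k'}=F_+(c_k,c_{k'})-\tau_k\tau_{k'}=\phi_{kk'}$, which is exactly the object entering $\Sigma_{YT_+}(c)$ in \cref{notation of YT plus}. The kernel products then produce the $\nu_{+,\cdot}$ moments, and by the convergence $\Sigma_{nYT_+}\to f_{X_+}(0)\Sigma_{YT_+}(c)$ already recorded in \Cref{supp:notation} one gets $\Xi_{n_+}\to f_{X_+}(0)\Sigma_{YT_+}(c)$. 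Substituting, the two factors $1/f_{X_+}(0)$ and the single $f_{X_+}(0)$ collapse to $1/f_{X_+}(0)$, the normalization becomes $\frac{1}{n_+\sqrt{h_{Y_+}h_{T_+}}}$, and the bracket becomes $q^2 b_{YT_+}$ with $b_{YT_+}$ as in \cref{eq:b_YT+}, yielding \cref{eq:cov q=2}.

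The hardest part will be making the joint linear representation rigorous: I must show that the remainders from the two coupled, non-smooth check-loss objectives are jointly negligible and that the starred scores $W_{\cdot,n_+}^*$ (centered at the shifted thresholds $c_k-d_{i,k}/\sigma$) may be replaced by the unstarred $W_{\cdot,n_+}$ for the purpose of the limiting cross-covariance, so that the off-diagonal block is governed by $\phi_{kk'}$ rather than by the marginal $\tau_{kk'}$. A secondary technical point is the presence of two distinct bandwidths $h_{Y_+}\ne h_{T_+}$ in the kernel product $K(X_i/h_{Y_+})K(X_i/h_{T_+})$; I would justify the $\frac{1}{n_+\sqrt{h_{Y_+}h_{T_+}}}$ rate and the reduction of the cross-kernel integral to the $\nu$-moments (exact when $h_{Y_+}=h_{T_+}$, and absorbed into the stated $o_p(\frac{1}{n_+h_{Y_+}}+\frac{1}{n_+h_{T_+}})$ remainder otherwise), which is what licenses the clean limit $f_{X_+}(0)\Sigma_{YT_+}(c)$.
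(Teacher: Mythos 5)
Your proposal follows essentially the same route as the paper's proof: both invoke the linear representation from \Cref{lemma:theta distribution}, replace the starred scores $W_{\cdot,n_+}^*$ by the unstarred $W_{\cdot,n_+}$ via $\text{Var}(w^*-w\mid\mathbf{X})=o_p(1)$, and evaluate the cross-covariance of the scores through $\text{Cov}(\eta_{Y_+,i,k},\eta_{T_+,j,k'})=\phi_{kk'}$ for $i=j$ (zero otherwise), so that the limit $f_{X_+}(0)\Sigma_{YT_+}(c)$ cancels one factor in the $1/f_{X_+}^2(0)$ prefactor to yield \cref{eq:cov q=2}. The only caveat is your side remark that the two-bandwidth cross-kernel discrepancy is ``absorbed into the $o_p$ remainder'': when $h_{Y_+}/h_{T_+}$ does not tend to $1$ that discrepancy is of the same order as the leading term rather than smaller, but the paper's own proof is equally silent on this point, so it does not separate your argument from theirs.
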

\begin{proof}[Proof of \Cref{lemma:covariance}]
	Assume $p = 1$. From \Cref{lemma:theta distribution}, we write
	\begin{align*}
	&\quad \hat{m}_{Y_+}(0) - E(\hat{m}_{Y_+}(0)|\mathbf{X}) \nonumber\\ 
	&= -\frac{1}{q\sqrt{n_+h_{Y_+}}}\frac{\sigma_{\epsilon_{Y_+}}(0)}{f_{X_+}(0)} e_q^T
	\begin{pmatrix}
	(S_{Y_+}^{-1}(c))_{11} & (S_{Y_+}^{-1}(c))_{12}
	\end{pmatrix}
	\begin{pmatrix}
	w_{Y_+,1n}^* - E(w_{Y_+,1n}^*|\mathbf{X})\\
	w_{Y_+,2n}^* - E(w_{Y_+,2n}^*|\mathbf{X})
	\end{pmatrix} + o_p(1) \nonumber\\
	&= -\frac{1}{q\sqrt{n_+h_{Y_+}}}\frac{\sigma_{\epsilon_{Y_+}}(0)}{f_{X_+}(0)} e_q^T
	\begin{pmatrix}
	(S_{Y_+}^{-1}(c))_{11} & (S_{Y_+}^{-1}(c))_{12}
	\end{pmatrix}
	\begin{pmatrix}
	w_{Y_+,1n} - E(w_{Y_+,1n}|\mathbf{X}) \\
	w_{Y_+,2n} - E(w_{Y_+,2n}|\mathbf{X})
	\end{pmatrix} + o_p(1),
	\end{align*}
	where the last equality follows by the result that $\text{Var}(w_{Y_+,1n}^* - w_{Y_+,1n}|\mathbf{X}) = o_p(1)$ and $\text{Var}(w_{Y_+,21}^* - w_{Y_+,21}|\mathbf{X}) = o_p(1)$. See \cite{kai2010local} for a proof. 
	Similarly, we have
	\begin{align*}
	&\quad \hat{m}_{T_+}(0) - E(\hat{m}_{T_+}(0)|\mathbf{X}) \nonumber\\ 
	&= -\frac{1}{q\sqrt{n_+h_{T_+}}}\frac{\sigma_{\epsilon_{T_+}}(0)}{f_{X_+}(0)} e_q^T
	\begin{pmatrix}
	(S_{T_+}^{-1}(c))_{11} & (S_{T_+}^{-1}(c))_{12}
	\end{pmatrix}
	\begin{pmatrix}
	w_{T_+,1n} - E(w_{T_+,1n}|\mathbf{X})\\
	w_{T_+,2n} - E(w_{T_+,2n}|\mathbf{X})
	\end{pmatrix} + o_p(1). \\
	&\quad\text{Cov}(\hat{m}_{Y_+}(0),\hat{m}_{T_+}(0)|\mathbf{X})\\
	&= E\left((\hat{m}_{Y_+}(0) - E(\hat{m}_{Y_+}(0)|\mathbf{X}))(\hat{m}_{T_+}(0) - E(\hat{m}_{T_+}(0)|\mathbf{X}))\right)\\
	&= \frac{1}{q^2n_+\sqrt{h_{Y_+}h_{T_+}}}\frac{\sigma_{\epsilon_{Y_+}}(0)\sigma_{\epsilon_{T_+}}(0)}{f_{X_+}^2(0)}e_q^T
	\begin{pmatrix}
	(S_{Y_+}^{-1}(c))_{11} & (S_{Y_+}^{-1}(c))_{12}
	\end{pmatrix}\\
	&\quad \times E
	\begin{bmatrix}
	\begin{pmatrix}
	w_{Y_+,1n} - E(w_{Y_+,1n}|\mathbf{X})\\
	w_{Y_+,2n} - E(w_{Y_+,2n}|\mathbf{X})
	\end{pmatrix} 
		\begin{pmatrix}
	w_{T_+,1n} - E(w_{T_+,1n}|\mathbf{X})\\
	w_{T_+,2n} - E(w_{T_+,2n}|\mathbf{X})
	\end{pmatrix}^T
	\end{bmatrix}\times 
	\begin{pmatrix}
	(S_{T_+}^{-1}(c))_{11} \\
	(S_{T_+}^{-1}(c))_{12}
	\end{pmatrix} e_q\\
 &=\frac{1}{q^2n_+\sqrt{h_{Y_+}h_{T_+}}}\frac{\sigma_{\epsilon_{Y_+}}(0)\sigma_{\epsilon_{T_+}}(0)}{f_{X_+}(0)}e_q^T
	\left(S_{Y_+}^{-1}(c) \Sigma_{YT_+} S_{T_+}^{-1}(c)\right)_{11} e_q + o_p(\frac{1}{n_+h_{Y_+}} + \frac{1}{n_+h_{T_+}}),
	\end{align*}
	where $\text{Cov}(\eta_{Y_+,i,k},\eta_{T_+,j,k'}) = \phi_{kk'}$ if $i=j$, and $\text{Cov}(\eta_{Y_+,i,k},\eta_{T_+,j,k'}) = 0$ if $i \ne j$.
\end{proof}

\begin{proof}[Proof of \Cref{thm:sharp results}]
	See \Cref{lemma:bias and variance} above.
\end{proof}

\begin{proof}[Proof of \Cref{thm:fuzzy results}]
Consider the approximation
	\begin{align*}
		\hat{\tau}_{\text{fuzzy}} - \tau_{\text{fuzzy}} &= 
		\frac{1}{m_{T_+}(0) - m_{T_-}(0)}\left[\hat{m}_{Y_+}(0) - m_{Y_+}(0) - (\hat{m}_{Y_-}(0)- m_{Y_-}(0))\right] \nonumber \\
		&\quad - \frac{m_{Y_+}(0) - m_{Y_-}(0)}{\left[m_{T_+}(0) - m_{T_-}(0)\right]^2}\left[\hat{m}_{T_+}(0) - m_{T_+}(0) - (\hat{m}_{T_-}(0)- m_{T_-}(0))\right]\nonumber\\
		&\quad + o_p(h_{Y_+}^2 + h_{Y_-}^2 + h_{T_+}^2 + h_{T_-}^2). 
	\end{align*} 
The bias expression follows from \Cref{lemma:bias and variance}: use it four times for $\hat{m}_{Y_+}(0)$,  $\hat{m}_{Y_-}(0)$, $\hat{m}_{T_+}(0)$, and $\hat{m}_{T_-}(0)$. For the variance expression, note that the approximation above leads to
\begin{eqnarray}
    &&\text{Var}(\hat{\tau}_{\text{fuzzy}}) \notag \\
    &=& \frac{\text{Var}(\hat{m}_{Y_+}(0)) + \text{Var}(\hat{m}_{Y_-}(0))}{\left(m_{T_+}(0) - m_{T_-}(0)\right)^2} + \frac{\left(m_{Y_+}(0) - m_{Y_-}(0)\right)^2}{\left(m_{T_+}(0) - m_{T_-}(0)\right)^4} \left[\text{Var}(\hat{m}_{T_+}(0)) + \text{Var}(\hat{m}_{T_-}(0))\right] \notag\\
   && - 2\frac{m_{Y_+}(0) - m_{Y_-}(0)}{\left(m_{T_+}(0) - m_{T_-}(0)\right)^3} \left[\text{Cov}(\hat{m}_{Y_+}(0),\hat{m}_{T_+}(0)) + \text{Cov}(\hat{m}_{Y_-}(0),\hat{m}_{T_-}(0))\right] + \text{\small{s.o.}} \label{fuzzy variance long form}
\end{eqnarray}	
where $s.o.$ denotes a small order term.

Plugging the variance and covariance expressions in \Cref{lemma:bias and variance,lemma:covariance} to (\ref{fuzzy variance long form}) leads to the asymptotic variance expression of $\hat{\tau}_{\text{fuzzy}}$. 
\end{proof}

For convenience we write $\hat{m}_{Y_+}(0)$ and $\hat{m}_{Y_-}(0)$ as $\hat{m}_{Y_+}$ and $\hat{m}_{Y_-}$, respectively. \Cref{eq:sharp var adjusted} suggests that we need the expressions for $\text{Var}(\text{Bias}(\hat{m}_{Y_+}))$ and $\text{Cov}(\hat{m}_{Y_+}, \text{Bias}(\hat{m}_{Y_+}))$ to adjust the variance. The next lemma provides results for computing $\text{Var}(\text{Bias}(\hat{m}_{Y_+}))$. In deriving the results, we also present the bias of $\text{Bias}(\hat{m}_{Y_+})$. Let $e_r$ be a $p \times 1$ unit vector with  the $r$-th element equal to one. Let $p = 3$ in the following proof.

\begin{lemma} \label{lemma:bias and var of bias}
	Under Assumptions 1 to 6, as  $n_+ \rightarrow \infty$, the asymptotic bias and variance of $\hat{m}_{Y_+}^{(2)}$ are given by
	\begin{align} 
		\text{Bias}(\hat{m}_{Y_+}^{(2)}|\mathbf{X}) &= \frac{1}{12} a_{Y_+}^*(c) m_{Y_+}^{(4)} h_{Y_+}^2 + o_p(h_{Y_+}^2),	\label{eq:bias of m2d }\\
		\text{Var }(\hat{m}_{Y_+}^{(2)}|\mathbf{X}) &= \frac{4}{n_+ h_{Y_+}^5} \frac{\sigma_{\epsilon_{Y_+}}^2(0) b_{Y_+}^*(c)}{f_{X_+}(0)} + o_p(\frac{1}{n_+ h_{Y_+}^5}) \label{eq:variance of m2d},
	\end{align}
	where
	\begin{align}
		a_{Y_+}^*(c) &= \mu_{+,4} e_2^T (S_{Y_+}^{-1}(c))_{21} f_{\epsilon_{Y_+}} + \sum_{k=1}^{q}f_{\epsilon_{Y_+}}(c_k) e_2^T (S_{Y_+}^{-1}(c))_{22} 
		(\mu_{+,5},
		\mu_{+,6},
		\mu_{+,7})^T\\
		b_{Y_+}^*(c) &= e_2^T(S_{Y_+}^{-1}(c) \Sigma_{Y_+}(c) S_{Y_+}^{-1})_{22} e_2.
	\end{align}
\end{lemma}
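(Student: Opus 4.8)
The plan is to reduce both claims to the asymptotic linear representation of $\hat\theta_{n_+}$ furnished by \Cref{lemma:theta distribution}, specialized to a local cubic fit ($p=3$), and then to translate statements about $\hat v_2$ back into statements about $\hat m_{Y_+}^{(2)}$ via the reparametrization in \cref{notation of set uvrd}. Since $\hat b_j$ estimates $m_{Y_+}^{(j)}(0)/j!$, the second-derivative estimator is $\hat m_{Y_+}^{(2)} = 2\hat b_2$, and the definition of $v_j$ gives
\begin{equation*}
	\hat m_{Y_+}^{(2)} - m_{Y_+}^{(2)}(0) = 2\hat b_2 - m_{Y_+}^{(2)}(0) = \frac{2\,\hat v_2}{h_{Y_+}^2\sqrt{n_+ h_{Y_+}}},
\end{equation*}
so $\hat v_2$ is extracted from the $v$-block of $\hat\theta_{n_+}$ by the selector $e_2$. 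Consequently both the conditional bias and variance of $\hat m_{Y_+}^{(2)}$ follow from the corresponding moments of $\hat v_2$ after multiplying by $4/(n_+ h_{Y_+}^5)$.

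The variance is immediate. \Cref{lemma:theta distribution} gives the conditional covariance matrix of $\hat\theta_{n_+}$ as $\frac{\sigma_{\epsilon_{Y_+}}^2(0)}{f_{X_+}(0)} S_{Y_+}^{-1}(c)\Sigma_{Y_+}(c) S_{Y_+}^{-1}(c)$, so the limiting variance of $\hat v_2$ is the $(2,2)$ entry of the lower-right block, namely $\frac{\sigma_{\epsilon_{Y_+}}^2(0)}{f_{X_+}(0)} e_2^T (S_{Y_+}^{-1}(c)\Sigma_{Y_+}(c) S_{Y_+}^{-1}(c))_{22} e_2 = \frac{\sigma_{\epsilon_{Y_+}}^2(0)}{f_{X_+}(0)} b_{Y_+}^*(c)$. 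Multiplying by the scale factor $4/(n_+ h_{Y_+}^5)$ produces \cref{eq:variance of m2d}.

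The bias is the substantive part and requires evaluating $E(W_{n_+}^*\mid\mathbf X)$. Writing $E(\eta_{Y_+,i,k}^*\mid\mathbf X) = F_{\epsilon_{Y_+}}(c_k - d_{i,k}/\sigma_{\epsilon_{Y_+,i}})-\tau_k$ and expanding the c.d.f.\ about $c_k$ (where $F_{\epsilon_{Y_+}}(c_k)=\tau_k$), the leading term is $-f_{\epsilon_{Y_+}}(c_k)\,d_{i,k}/\sigma_{\epsilon_{Y_+}}(0)$. In $d_{i,k}=c_k[\sigma_{\epsilon_{Y_+}}(X_{+,i})-\sigma_{\epsilon_{Y_+}}(0)]+r_{i,3}$, the heteroskedasticity piece and the higher-order c.d.f.\ terms all carry a factor $c_k$ (or $c_k^2$) against $f_{\epsilon_{Y_+}}(c_k)$ or $f_{\epsilon_{Y_+}}'(c_k)$; under the error symmetry of \Cref{assumption:error distribution} the corresponding $k$-sums vanish (e.g.\ $\sum_k c_k f_{\epsilon_{Y_+}}(c_k)=0$), so they do not affect the leading derivative bias. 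By the definition of the Taylor remainder, $r_{i,3}$ begins at $\frac{m_{Y_+}^{(4)}(0)}{4!}(X_{+,i})^4$, which is the dominant contributor. Replacing $\frac{1}{n_+ h_{Y_+}}\sum_i K(x_{+,i})x_{+,i}^{\,s}$ by $f_{X_+}(0)\mu_{+,s}$ yields $E(w_{Y_+,1k}^*\mid\mathbf X)\propto f_{\epsilon_{Y_+}}(c_k)\mu_{+,4}$ and $E(w_{Y_+,2j}^*\mid\mathbf X)\propto(\sum_k f_{\epsilon_{Y_+}}(c_k))\mu_{+,4+j}$ for $j=1,2,3$, each with common factor $-\frac{m_{Y_+}^{(4)}(0)}{24}h_{Y_+}^4\sqrt{n_+ h_{Y_+}}f_{X_+}(0)/\sigma_{\epsilon_{Y_+}}(0)$. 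Applying $-\frac{\sigma_{\epsilon_{Y_+}}(0)}{f_{X_+}(0)}e_2^T S_{Y_+}^{-1}(c)$ to this expected score, the $(S_{Y_+}^{-1}(c))_{21}$ block acting on the first sub-vector reproduces $\mu_{+,4}\,e_2^T (S_{Y_+}^{-1}(c))_{21} f_{\epsilon_{Y_+}}$ and the $(S_{Y_+}^{-1}(c))_{22}$ block acting on the second reproduces $\sum_k f_{\epsilon_{Y_+}}(c_k)\,e_2^T(S_{Y_+}^{-1}(c))_{22}(\mu_{+,5},\mu_{+,6},\mu_{+,7})^T$, whose sum is exactly $a_{Y_+}^*(c)$. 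After the constants cancel one finds $E(\hat v_2\mid\mathbf X)=\frac{m_{Y_+}^{(4)}(0)}{24}a_{Y_+}^*(c)h_{Y_+}^4\sqrt{n_+ h_{Y_+}}(1+o_p(1))$, and the back-transformation by $2/(h_{Y_+}^2\sqrt{n_+ h_{Y_+}})$ delivers \cref{eq:bias of m2d }.

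The hard part will be the bookkeeping in the expected score: one must verify that every Taylor term of order lower than the quartic remainder is genuinely negligible—either excluded by the definition of $r_{i,3}$ or annihilated by the $c_k$-odd symmetry sums—since an error there would corrupt the leading $h_{Y_+}^2$ coefficient. Particular care is needed for the heteroskedasticity term $c_k[\sigma_{\epsilon_{Y_+}}(X_{+,i})-\sigma_{\epsilon_{Y_+}}(0)]$, whose contribution must be shown to enter only at smaller order after combination with $S_{Y_+}^{-1}(c)$, and for confirming that the density-weighted sum approximations hold with $o_p$ remainders of the stated order under \Cref{assumption:bandwidth}, which is precisely where $n_+ h_{Y_+}^5\to\infty$ is invoked to make $\hat m_{Y_+}^{(2)}$ consistent.
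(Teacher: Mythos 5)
Your proposal is correct and follows essentially the same route as the paper's proof: both exploit the reparametrization $\hat{m}_{Y_+}^{(2)} = m_{Y_+}^{(2)} + 2\hat{v}_2/(h_{Y_+}^2\sqrt{n_+h_{Y_+}})$, read the variance off the $(2,2)$ entry of the lower-right block of $S_{Y_+}^{-1}(c)\Sigma_{Y_+}(c)S_{Y_+}^{-1}(c)$ from \Cref{lemma:theta distribution}, and obtain the bias by expanding $E(W_{Y_+,n_+}^*\mid\mathbf{X})$ so that the quartic term of $r_{i,3}$ produces $\mu_{+,4}$ against $(S_{Y_+}^{-1}(c))_{21}$ and $(\mu_{+,5},\mu_{+,6},\mu_{+,7})^T$ against $(S_{Y_+}^{-1}(c))_{22}$, with the factor of $2$ turning $1/24$ into the stated $1/12$ exactly as the paper remarks. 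If anything, you are slightly more explicit than the paper in noting that the heteroskedasticity term $c_k[\sigma_{\epsilon_{Y_+}}(X_{+,i})-\sigma_{\epsilon_{Y_+}}(0)]$ is annihilated by the $c_k$-odd symmetry sums under \Cref{assumption:error distribution}, a cancellation the paper's proof uses implicitly.
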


\begin{proof}[Proof of \Cref{lemma:bias and var of bias}]
	From the definition of $v_j$, we have
	\begin{equation} \label{eq:md2 and v}
		\hat{m}_{Y_+}^{(2)} = m_{Y_+}^{(2)} + \frac{2\hat{v}_2}{h_{Y_+}^2  \sqrt{n_+ h_{Y_+}}}.
	\end{equation}
	Hence the bias becomes
	\begin{align*}
		&E(\hat{m}_{Y_+}^{(2)}) - m_{Y_+}^{(2)} = -\frac{2\sigma_{\epsilon_{Y_+}}(0)}{h_{Y_+}^2  \sqrt{n_+ h_{Y_+}}f_{X_+}(0)}  e_2^T ((S_{Y_+}^{-1}(c))_{21}, (S_{Y_+}^{-1}(c))_{22}) E(W_{Y_+,n}^*)\\
		&= -\frac{2\sigma_{\epsilon_{Y_+}}(0)}{h_{Y_+}^2  \sqrt{n_+ h_{Y_+}}f_{X_+}(0)} e_2^T (S_{Y_+}^{-1}(c))_{21} E(W_{Y_+,1n}^*) -\frac{2\sigma_{\epsilon_{Y_+}}(0)}{h_{Y_+}^2  \sqrt{n_+ h_{Y_+}}f_{X_+}(0)} e_2^T (S_{Y_+}^{-1}(c))_{22} E(W_{Y_+,2n}^*)\\
		&= \textrm{I} + \textrm{II}.
	\end{align*}
	\begin{align*}
		\textrm{I} &= -\frac{2\sigma_{\epsilon_{Y_+}}(0)}{h_{Y_+}^2  \sqrt{n_+ h_{Y_+}}f_{X_+}(0)} e_2^T (S_{Y_+}^{-1}(c))_{21} \\
		&\quad \times \left[-\frac{f_{\epsilon_{Y_+}}}{\sqrt{n_+ h_{Y_+}}} \sum_{i = 1}^{n_+} K_i c_k \frac{\sigma_{\epsilon_{Y_+,i}} - \sigma_{\epsilon_{Y_+}}(0)}{\sigma_{\epsilon_{Y_+,i}}} -\frac{f_{\epsilon_{Y_+}}}{\sqrt{n_+ h_{Y_+}}} \sum_{i = 1}^{n_+} K_i \frac{r_{i,3}}{\sigma_{\epsilon_{Y_+,i}}} \right]\\
		&= \frac{1}{12} m^{(4)}_{Y_+} \mu_{+,4}(c) e_2^T (S_{Y_+}^{-1}(c))_{21} f_{\epsilon_{Y_+}} h_{Y_+}^2 + o_p(h_{Y_+}^2),\\
		\textrm{II} &=  -\frac{2\sigma_{\epsilon_{Y_+}}(0)}{h_{Y_+}^2  \sqrt{n_+ h_{Y_+}}f_{X_+}(0)} e_2^T (S_{Y_+}^{-1}(c))_{22}\\
		&\quad \times\left[-\frac{\sum_{k=1}^{q}f_{\epsilon_{Y_+}}(c_k)}{\sqrt{n_+ h_{Y_+}}} \sum_{i = 1}^{n_+} K_i c_k \frac{\sigma_{\epsilon_{Y_+,i}} - \sigma_{\epsilon_{Y_+}}(0)}{\sigma_{\epsilon_{Y_+,i}}} \begin{pmatrix}
		x_{+,i}\\
		x_{+,i}^2\\
		x_{+,i}^3
		\end{pmatrix}
		 -\frac{\sum_{k=1}^{q}f_{\epsilon_{Y_+}}(c_k)}{\sqrt{n_+ h_{Y_+}}} \sum_{i = 1}^{n_+} K_i  \frac{r_{i,3}}{\sigma_{\epsilon_{Y_+,i}}} \begin{pmatrix}
		 x_{+,i}\\
		 x_{+,i}^2\\
		 x_{+,i}^3
		 \end{pmatrix}\right] \\
		 &= \frac{1}{12} m^{(4)}_{Y_+} \sum_{k=1}^{q}f_{\epsilon_{Y_+}}(c_k) e_2^T (S_{Y_+}^{-1}(c))_{22} \begin{pmatrix}
		 \mu_{+,5}\\
		 \mu_{+,6}\\
		 \mu_{+,7}
		 \end{pmatrix} h_{Y_+}^2 + o_p(h_{Y_+}^2).
	\end{align*}
	The bias result is proved by combining the two terms \textrm{I} and \textrm{II}. One would expect a number of $4! = 24$ instead of $12$ on the denominator. This is due to the extra number $2$ in \cref{eq:md2 and v}. Because of the way $\hat{v}_2$ is defined, the ``effective" constant on the denominator is still $24$, in line with the standard results for nonparametric derivatives. Similarly, the number $4$ appearing on the numerator of the variance is also a result of the number $2$ in \cref{eq:md2 and v}.
	The variance results from \cref{eq:md2 and v} and  \Cref{lemma:theta distribution}.
\end{proof}

\begin{proof} [Proof of \Cref{thm:sharp t adjusted dist}]
	Following \Cref{thm:sharp results}, we have
	\begin{align*} 
		\text{Var}(\hat{m}_{Y_+}) &= \frac{1}{n_+h_{Y_+}} \frac{b_{Y_+}(c)\sigma_{\epsilon_{Y_+}}^2(0)}{f_{X_+}(0)} + o_p(\frac{1}{n_+h_{Y_+}}),\\
		\text{Var}(\hat{m}_{Y_-}) &= \frac{1}{n_-h_{Y_-}} \frac{b_{Y_-}(c)\sigma_{\epsilon_{Y_-}}^2(0)}{f_{X_-}(0)} + o_p(\frac{1}{n_-h_{Y_-}}).
	\end{align*}		
	Use the bias expression in \Cref{thm:sharp results} and the variance result in \Cref{lemma:bias and var of bias}, we have
	\begin{align*}
		\text{Var}(\widehat{\text{Bias}}(\hat{m}_{Y_+})) &= \frac{\sigma_{\epsilon_{Y_+}}^2(0)}{n_+ h_{Y_+} f_{X_+}(0)}a^2(c)b_{Y_+}^*(c) + o_p(\frac{1}{n_+h_{Y_+}}),\\
		\text{Var}(\widehat{\text{Bias}}(\hat{m}_{Y_-})) &= \frac{\sigma_{\epsilon_{Y_-}}^2(0)}{n_- h_{Y_-} f_{X_-}(0)}a^2(c)b_{Y_-}^*(c) + o_p(\frac{1}{n_-h_{Y_-}}).
	\end{align*}
	For the covariances, we have
	\begin{align*}
		\text{Cov}(\hat{m}_{Y_+},\widehat{\text{Bias}}(\hat{m}_{Y_+})) &= \text{Cov}(m_{Y_+} + \frac{1}{q\sqrt{n_+ h_{Y_+}}}\sum_{k=1}^{q}\hat{u}_k, \frac{1}{2}a_{Y_+}(c)h_{Y_+}^2(m_{Y_+}^{(2)} + \frac{2\hat{v}_2}{h_{Y_+}^2\sqrt{n_+ h_{Y_+}}})   \\
		&= \frac{a_{Y_+}(c)}{n_+ h_{Y_+}q}\sum_{k=1}^{q} \text{Cov}(\hat{u}_k,\hat{v}_2)\\
		&= \frac{a_{Y_+}(c)\sigma_{\epsilon_{Y_+}}^2(0)}{n_+ h_{Y_+}qf_{X_+}(0)} e_q^T (S_{Y_+}^{-1} \Sigma_{Y_+} S_{Y_+}^{-1})_{12,2} + o_p(\frac{1}{n_+h_{Y_+}}),
	\end{align*}
	where $(S_{Y_+}^{-1} \Sigma_{Y_+} S_{Y_+}^{-1})_{12,2}$	is the second column of the matrix $(S_{Y_+}^{-1} \Sigma_{Y_+} S_{Y_+}^{-1})_{12}$ and the last line follows from \Cref{lemma:theta distribution}.
	Similarly, for data below the cutoff, we have
	\begin{equation*}
			\text{Cov}(\hat{m}_{Y_-},\widehat{\text{Bias}}(\hat{m}_{Y_-})) = \frac{a_{Y_-}(c)\sigma_{\epsilon_{Y_-}}^2(0)}{n_- h_{Y_-}qf_{X_-}(0)} e_q^T (S_{Y_-}^{-1} \Sigma_{Y_-} S_{Y_-}^{-1})_{12,2} + o_p(\frac{1}{n_-h_{Y_-}}).
	\end{equation*}
	
	The expression for $\text{Var}(\hat{\tau}_{\text{sharp}} - \widehat{\text{Bias}}(\hat{\tau}_{\text{sharp}}))$ is obtained by substituting the six variance and covariance results into \cref{eq:sharp var adjusted},
	\begin{equation*}
		\text{Var}(\hat{\tau}_{\text{sharp}} - \widehat{\text{Bias}}(\hat{\tau}_{\text{sharp}})) = \frac{1}{n_+h_{Y_+}} V_{\text{sharp},+} + \frac{1}{n_-h_{Y_-}} V_{\text{sharp},-}, 
	\end{equation*}
	where
	\begin{align*}
		 V_{\text{sharp},+} &=  \frac{b_{Y_+}(c)\sigma_{\epsilon_{Y_+}}^2(0)}{f_{X_+}(0)} + \frac{\sigma_{\epsilon_{Y_+}}^2(0)}{f_{X_+}(0)}a^2(c)b_{Y_+}^*(c) - 2 \frac{a_{Y_+}(c)\sigma_{\epsilon_{Y_+}}^2(0)}{qf_{X_+}(0)} e_q^T (S_{Y_+}^{-1} \Sigma_{Y_+} S_{Y_+}^{-1})_{12,2}, \nonumber\\
		 V_{\text{sharp},-} &= \frac{b_{Y_-}(c)\sigma_{\epsilon_{Y_-}}^2(0)}{f_{X_-}(0)} +\frac{\sigma_{\epsilon_{Y_-}}^2(0)}{f_{X_-}(0)}a^2(c)b_{Y_-}^*(c) - 2 \frac{a_{Y_-}(c)\sigma_{\epsilon_{Y_-}}^2(0)}{qf_{X_-}(0)} e_q^T (S_{Y_-}^{-1} \Sigma_{Y_-} S_{Y_-}^{-1})_{12,2}.
 	\end{align*}
 	
 	Next, we establish the asymptotic normality of the adjusted \textit{t}-statistic. From \Cref{lemma:bias and variance}, we have
 	\begin{align} 
 	    \frac{\hat{m}_{Y_+} - \frac{1}{2} a_{Y_+}(c)m_{Y_+}^{(2)} h_{Y_+}^2 - m_{Y_+}}{\sqrt{\text{Var}(\hat{m}_{Y_+})}} &= \frac{\hat{m}_{Y_+} - E(\hat{m}_{Y_+})}{\sqrt{\text{Var}(\hat{m}_{Y_+})}} + \frac{E(\hat{m}_{Y_+}) - m_{Y_+} - \frac{1}{2} a_{Y_+}(c)m_{Y_+}^{(2)} h_{Y_+}^2 }{\sqrt{\text{Var}(\hat{m}_{Y_+})}} \nonumber \\ 
 	    &=\frac{\hat{m}_{Y_+} - E(\hat{m}_{Y_+})}{\sqrt{\text{Var}(\hat{m}_{Y_+})}} + \frac{O_p(h_{Y_+}^3)}{O_p(\sqrt{1/n_+h_{Y_+}})}.\label{eq:m bc t stat} \\
 	    & \overset{d}{\rightarrow} N(0,1). \label{eq:m bc t stat normality}
 	\end{align}
 	The second term in \cref{eq:m bc t stat} converges to 0 under \Cref{assumption:bandwidth}. In the first term, given the definition of $u_k$ in \cref{notation of set uvrd} and since $\hat{m}_{Y_+}$ is a linear function of $\hat{u}_k$ in \cref{eq:lcqr estimate}, \Cref{lemma:theta distribution} and the Delta method lead to the normality result in \cref{eq:m bc t stat normality}.
 	
 	Similarly, we have
 	\begin{equation} \label{eq:m bc t stat normality minus}
 	    \frac{\hat{m}_{Y_-} - \frac{1}{2} a_{Y_-}(c)m_{Y_-}^{(2)} h_{Y_-}^2 - m_{Y_-}}{\sqrt{\text{Var}(\hat{m}_{Y_-})}} \overset{d}{\rightarrow} N(0,1).
 	\end{equation}
 	
 	Let $\tau_0 = m_{Y_+} - m_{Y_-}$. Using the proof for \cref{eq:m bc t stat normality,eq:m bc t stat normality minus}, we can show
 	\begin{align} \label{eq:m bc t stat 1}
 	    \frac{\hat{\tau}_{\text{sharp}} - \left[\frac{1}{2} a_{Y_+}(c)m_{Y_+}^{(2)} h_{Y_+}^2 - \frac{1}{2} a_{Y_-}(c)m_{Y_-}^{(2)} h_{Y_-}^2\right] - \tau_0}{\sqrt{\text{Var}(\hat{\tau}_{\text{sharp}})}} &= \overset{d}{\rightarrow} N(0,1).
 	\end{align}
 	
 	Finally, we have
 	
 \begin{align}
 	    \frac{\hat{\tau}_{\text{sharp}} - \widehat{\text{Bias}}({\hat{\tau}_{\text{sharp}}}) - \tau_0}{\sqrt{\text{Var}(\hat{\tau}_{\text{sharp}}^{\text{bc}})}} &= \frac{\hat{\tau}_{\text{sharp}} - \widehat{\text{Bias}}({\hat{\tau}_{\text{sharp}}}) - E(\hat{\tau}_{\text{sharp}} - \widehat{\text{Bias}}({\hat{\tau}_{\text{sharp}}}))}{\sqrt{\text{Var}(\hat{\tau}_{\text{sharp}}^{\text{bc}})}} \nonumber \\
 	    &\quad + \frac{ E(\hat{\tau}_{\text{sharp}} - \widehat{\text{Bias}}({\hat{\tau}_{\text{sharp}}})) - \tau_0}{\sqrt{\text{Var}(\hat{\tau}_{\text{sharp}}^{\text{bc}})}} \nonumber \\
 	    &= \frac{\hat{\tau}_{\text{sharp}}^{\text{bc}} - E(\hat{\tau}_{\text{sharp}}^{\text{bc}})}{\sqrt{\text{Var}(\hat{\tau}_{\text{sharp}}^{\text{bc}})}} + O_p(\sqrt{n_+h_+^7}) + O_p(\sqrt{n_-h_-^7}) \nonumber \\
 	    &\overset{d}{\rightarrow} N(0,1),
\end{align}
where we use the proof similar to \cref{eq:m bc t stat normality,eq:m bc t stat normality minus,eq:m bc t stat 1} and the fact that $E(\hat{\tau}_{\text{sharp}} - \widehat{\text{Bias}}({\hat{\tau}_{\text{sharp}}})) - \tau_0 = O_p(h_{Y_+}^3) + O_p(h_{Y_-}^3)$.
 	
 \end{proof}

\begin{proof} [Proof of \Cref{thm:fuzzy t adjusted dist}]
	We first note that all bias terms in \cref{eq:fuzzy biases} can be obtained using \Cref{lemma:bias and variance}. For terms in the adjusted variance in \cref{eq:fuzzy adj var}, $\text{Var}(\hat{m}_{Y_+})$, $\text{Var}(\hat{m}_{T_+})$, $\text{Var}(\widehat{\text{Bias}}(\hat{m}_{Y_+}))$, $\text{Var}(\widehat{\text{Bias}}(\hat{m}_{T_+}))$, $\text{Cov}(\hat{m}_{Y_+},\widehat{\text{Bias}}(\hat{m}_{Y_+}))$, and $\text{Cov}(\hat{m}_{T_+},\widehat{\text{Bias}}(\hat{m}_{T_+}))$  can be obtained using results in the proof of \Cref{thm:sharp t adjusted dist}; $\text{Cov}(\hat{m}_{Y_+},\hat{m}_{T_+})$ is obtained using \Cref{lemma:covariance}. And we list these seven terms in the following.
	\begin{align*}
		\text{Var}(\hat{m}_{Y_+}) &= \frac{1}{n_+h_{Y_+}} \frac{b_{Y_+}(c)\sigma_{\epsilon_{Y_+}}^2(0)}{f_{X_+}(0)} + o_p(\frac{1}{n_+h_{Y_+}}),\\
		\text{Var}(\hat{m}_{T_+}) &= \frac{1}{n_+h_{T_+}} \frac{b_{T_+}(c)\sigma_{\epsilon_{T_+}}^2(0)}{f_{X_+}(0)} + o_p(\frac{1}{n_+h_{T_+}}),\\
		\text{Var}(\widehat{\text{Bias}}(\hat{m}_{Y_+})) &= \frac{\sigma_{\epsilon_{Y_+}}^2(0)}{n_+ h_{Y_+} f_{X_+}(0)}a_{Y_+}^2(c)b_{Y_+}^*(c) + o_p(\frac{1}{n_+h_{Y_+}}),\\
		\text{Var}(\widehat{\text{Bias}}(\hat{m}_{T_+})) &= \frac{\sigma_{\epsilon_{T_+}}^2(0)}{n_+ h_{T_+} f_{X_+}(0)}a_{T_+}^2(c)b_{T_+}^*(c) + o_p(\frac{1}{n_+h_{T_+}}),
			\end{align*}
				\begin{align*}
		\text{Cov}(\hat{m}_{Y_+},\hat{m}_{T_+}) &=\frac{1}{n_+\sqrt{h_{Y_+}h_{T_+}}}\frac{\sigma_{\epsilon_{Y_+}}(0)\sigma_{\epsilon_{T_+}}(0)}{f_{X_+}(0)} b_{YT_+}+ o_p(\frac{1}{n_+h_{Y_+}} + \frac{1}{n_+h_{T_+}}),\\
		\text{Cov}(\hat{m}_{Y_+},\widehat{\text{Bias}}(\hat{m}_{Y_+}))&= \frac{a_{Y_+}(c)\sigma_{\epsilon_{Y_+}}^2(0)}{n_+ h_{Y_+}qf_{X_+}(0)} e_q^T (S_{Y_+}^{-1} \Sigma_{Y_+} S_{Y_+}^{-1})_{12,2} + o_p(\frac{1}{n_+h_{Y_+}}),\\
		\text{Cov}(\hat{m}_{T_+},\widehat{\text{Bias}}(\hat{m}_{T_+}))&= \frac{a_{T_+}(c)\sigma_{\epsilon_{T_+}}^2(0)}{n_+ h_{T_+}qf_{X_+}(0)} e_q^T (S_{T_+}^{-1} \Sigma_{T_+} S_{T_+}^{-1})_{12,2} + o_p(\frac{1}{n_+h_{T_+}}).
	\end{align*}
Next, we compute the remaining three covariances.
	\begin{align*}
		\text{Cov}(\widehat{\text{Bias}}(\hat{m}_{Y_+}),\widehat{\text{Bias}}(\hat{m}_{T_+})) &= \text{Cov}(\frac{1}{2}a_{Y_+}(c)\hat{m}_{Y_+}^{(2)}h_{Y_+}^2,\frac{1}{2}a_{T_+}(c)\hat{m}_{T_+}^{(2)}h_{T_+}^2)\\
		&= \frac{a_{Y_+}(c)a_{T_+}(c)}{\sqrt{n_+ h_{Y_+}}\sqrt{n_+ h_{T_+}}}\text{Cov}(\hat{v}_{2,Y_+}, \hat{v}_{2,T_+})\\
		&=\frac{a_{Y_+}(c)a_{T_+}(c)\sigma_{\epsilon_{Y_+}}(0)\sigma_{\epsilon_{T_+}}(0)}{n_+ \sqrt{h_{Y_+}h_{T_+}}f_{X_+}(0)} e_2^T (S_{Y_+}^{-1} \Sigma_{YT_+} S_{T_+}^{-1})_{22} e_2 \\
		&\quad+ o_p(\frac{1}{n_+h_{Y_+}} + \frac{1}{n_+h_{T_+}}).
		\end{align*}	
		\begin{align*}
		\text{Cov}(\hat{m}_{Y_+},\widehat{\text{Bias}}(\hat{m}_{T_+})) &=\text{Cov}(m_{Y_+} + \frac{1}{q\sqrt{n_+ h_{Y_+}}}\sum_{k=1}^{q}\hat{u}_{k,Y}, \frac{1}{2}a_{T_+}(c)h_{T_+}^2(m_{T_+}^{(2)} + \frac{2\hat{v}_{2,T}}{h_{T_+}^2\sqrt{n_+ h_{T_+}}}))\\
		&=\frac{a_{T_+}(c)\sigma_{\epsilon_{Y_+}}(0)\sigma_{\epsilon_{T_+}}(0)}{q n_+ \sqrt{h_{Y_+}h_{T_+}}f_{X_+}(0)} e_q^T (S_{Y_+}^{-1} \Sigma_{YT_+} S_{T_+}^{-1})_{12,2} + o_p(\frac{1}{n_+h_{Y_+}} + \frac{1}{n_+h_{T_+}}).
	\end{align*}	
	\begin{align*}	
\text{Cov}(\hat{m}_{T_+},\widehat{\text{Bias}}(\hat{m}_{Y_+}))&=\text{Cov}(m_{T_+} + \frac{1}{q\sqrt{n_+ h_{T_+}}}\sum_{k=1}^{q}\hat{u}_{k,T}, \frac{1}{2}a_{Y_+}(c)h_{Y_+}^2(m_{Y_+}^{(2)} + \frac{2\hat{v}_{2,Y}}{h_{Y_+}^2\sqrt{n_+ h_{Y_+}}}))\\ 
		&=\frac{a_{Y_+}(c)\sigma_{\epsilon_{Y_+}}(0)\sigma_{\epsilon_{T_+}}(0)}{q n_+ \sqrt{h_{Y_+}h_{T_+}}f_{X_+}(0)} e_q^T (S_{T_+}^{-1} \Sigma_{TY_+} S_{Y_+}^{-1})_{12,2} + o_p(\frac{1}{n_+h_{Y_+}} + \frac{1}{n_+h_{T_+}}).
	\end{align*}
	Substituting the above results into \cref{eq:fuzzy adj var} gives the expression for $\text{Var}( (\hat{m}_{Y_+} - \tau_0 \hat{m}_{T_+}) -(\widehat{\text{Bias}}(\hat{m}_{Y_+}) - \tau_0 \widehat{\text{Bias}}(\hat{m}_{T_+})) )$. The result for $\text{Var}( (\hat{m}_{Y_-} - \tau_0 \hat{m}_{T_-}) -(\widehat{\text{Bias}}(\hat{m}_{Y_-}) - \tau_0 \widehat{\text{Bias}}(\hat{m}_{T_-})) )$ can be obtained in a similar way. Adding up the two variance results gives the adjusted variance in the fuzzy case.	
	
	To establish the asymptotic normality, note that we can use \cref{eq:fuzzy biases} to write $\tilde{\tau}_{\text{fuzzy}}^{\text{bc}}$ as
	\begin{equation*}
		\tilde{\tau}_{\text{fuzzy}}^{\text{bc}} = (\hat{m}_{Y_+} - \widehat{\text{Bias}}(\hat{m}_{Y_+})) - \tau_0 (\hat{m}_{T_+} - \widehat{\text{Bias}}(\hat{m}_{T_+})) - (\hat{m}_{Y_-}-\widehat{\text{Bias}}(\hat{m}_{Y_-})) + \tau_0 (\hat{m}_{T_-}- \widehat{\text{Bias}}(\hat{m}_{T_-})).
	\end{equation*}
	Using the similar argument in proving the asymptotic normality of $\hat{\tau}_{\text{sharp}}^{\text{bc}}$, we can establish the asymptotic distribution of $t_{\text{fuzzy}}^{\text{adj.}}$.
\end{proof}

\begin{proof}[Proof of \Cref{thm:adjusted bandwidth}]
	We first expand $\text{Bias}(\hat{m}_{Y_+})$ up to $O(h_{Y_+}^3)$ on the boundary. Recall $\hat{m}_{Y_+} = \sum_{k=1}^{q} \hat{a}_k/q$ and we have
	\begin{align*}
		\text{Bias}(\hat{m}_{Y_+}) &= \frac{\sigma_{\epsilon_{Y_+}}(0)}{q}\sum_{k=1}^{q}c_k - \frac{\sigma_{\epsilon_{Y_+}}(0)}{q\sqrt{n_+ h_{Y_+}} f_{X_+}(0)} e_q^T \left[(S_{Y_+}^{-1}(c))_{11} E(w_{Y_+,1n}^*) +(S_{Y_+}^{-1}(c))_{12} E(w_{Y_+,2n}^*) \right]\\
		&=- \frac{\sigma_{\epsilon_{Y_+}}(0)}{q\sqrt{n_+ h_{Y_+}} f_{X_+}(0)} e_q^T  (S_{Y_+}^{-1}(c))_{11} E(w_{Y_+,1n}^*) - \frac{\sigma_{\epsilon_{Y_+}}(0)}{q\sqrt{n_+ h_{Y_+}} f_{X_+}(0)} e_q^T (S_{Y_+}^{-1}(c))_{12} E(w_{Y_+,2n}^*) \\
		&= \textrm{I} + \textrm{II}.
	\end{align*}
	Consider term $\textrm{I}$.
	\begin{align*}
		\textrm{I}&=  \frac{-\sigma_{\epsilon_{Y_+}}(0)}{q\sqrt{n_+ h_{Y_+}} f_{X_+}(0)} e_q^T  (S_{Y_+}^{-1}(c))_{11} 
		\begin{pmatrix}
		\frac{1}{\sqrt{n_+ h_{Y_+}}} \sum_{i=1}^{n_+}K_i E(\eta_{Y_+,i,1}^*)\\
		\vdots\\
		\frac{1}{\sqrt{n_+ h_{Y_+}}} \sum_{i=1}^{n_+}K_i E(\eta_{Y_+,i,q}^*)
		\end{pmatrix} 
		\\
		&= \frac{\sigma_{\epsilon_{Y_+}}(0)}{q\sqrt{n_+ h_{Y_+}} f_{X_+}(0)} e_q^T  (S_{Y_+}^{-1}(c))_{11} 
		\begin{pmatrix}
		\frac{f_{\epsilon_{Y_+}}(c_1)}{\sqrt{n_+ h_{Y_+}}} \sum_{i=1}^{n_+}K_i \frac{ d_{i,1}}{\sigma_{\epsilon_{Y_+,i}}}\\
		\vdots\\
		\frac{f_{\epsilon_{Y_+}}(c_q)}{\sqrt{n_+ h_{Y_+}}} \sum_{i=1}^{n_+}K_i \frac{ d_{i,q}}{\sigma_{\epsilon_{Y_+,i}}}
		\end{pmatrix} + o_p(1)
		\\
		&=	\frac{1}{q\sqrt{n_+ h_{Y_+}} f_{X_+}(0)} e_q^T  (S_{Y_+}^{-1}(c))_{11} 
		\begin{pmatrix}
		\frac{f_{\epsilon_{Y_+}}(c_1)}{\sqrt{n_+ h_{Y_+}}} \sum_{i=1}^{n_+}K_i r_{i,1}\\
		\vdots\\
		\frac{f_{\epsilon_{Y_+}}(c_q)}{\sqrt{n_+ h_{Y_+}}} \sum_{i=1}^{n_+}K_i r_{i,1}
		\end{pmatrix} + o_p(1)
		\\	
		&= \frac{1}{2q} e_q^T  (S_{Y_+}^{-1}(c))_{11} f_{\epsilon_{Y_+}} m_{Y_+}^{(2)} \mu_{+,2}  h_{Y_+}^2 + \frac{1}{6q} e_q^T  (S_{Y_+}^{-1}(c))_{11} f_{\epsilon_{Y_+}} m_{Y_+}^{(3)} \mu_{+,3}  h_{Y_+}^3 \\
		&\quad+ \frac{f_{X_+}^{(1)}(0)}{2qf_{X_+}(0)} e_q^T  (S_{Y_+}^{-1}(c))_{11} f_{\epsilon_{Y_+}} m_{Y_+}^{(2)} \mu_{+,3} h_{Y_+}^3 + o_p(h_{Y_+}^3),
	\end{align*}
	where the second equality follows by expanding the cumulative distribution  of $\epsilon_{Y_+,i}$ around $c_k$ and the third equality follows by noticing that all terms containing $c_k$, after multiplied by the coefficient $\frac{\sigma_{\epsilon_{Y_+}}(0)}{q\sqrt{n_+ h_{Y_+}} f_{X_+}(0)} e_q^T  (S_{Y_+}^{-1}(c))_{11}$, become zero after a summation. The last equality is obtained by a Taylor series expansion of $m_{Y_+}$ at $0$ up to order $3$ in $r_{i,1}$, similar to the expansion in the definition of $r_{i,3}$.

	Consider term $\textrm{II}$. Note that $p = 1$ in the following proof when we estimate the conditional mean using degree one local polynomial.
	\begin{align*}
		\textrm{II} &= \frac{-\sigma_{\epsilon_{Y_+}}(0)}{q\sqrt{n_+ h_{Y_+}} f_{X_+}(0)} e_q^T  (S_{Y_+}^{-1}(c))_{12} 
		\begin{pmatrix}
		\frac{1}{\sqrt{n_+ h_{Y_+}}} \sum_{k=1}^{q} \sum_{i=1}^{n_+}K_i X_{+,i} E(\eta_{Y_+,i,1}^*)\\
		\vdots\\
		\frac{1}{\sqrt{n_+ h_{Y_+}}} \sum_{k=1}^{q}\sum_{i=1}^{n_+}K_i X_{+,i}^p E(\eta_{Y_+,i,q}^*)
		\end{pmatrix} 
		\\
		&= \frac{\sigma_{\epsilon_{Y_+}}(0)}{q\sqrt{n_+ h_{Y_+}} f_{X_+}(0)} e_q^T  (S_{Y_+}^{-1}(c))_{12} 
		\begin{pmatrix}
		\frac{1}{\sqrt{n_+ h_{Y_+}}} \sum_{k=1}^{q} f_{\epsilon_{Y_+}}(c_1) \sum_{i=1}^{n_+}K_i X_{+,i} \frac{ d_{i,1}}{\sigma_{\epsilon_{Y_+,i}}}\\
		\vdots\\
		\frac{1}{\sqrt{n_+ h_{Y_+}}} \sum_{k=1}^{q} f_{\epsilon_{Y_+}}(c_q) \sum_{i=1}^{n_+}K_i X_{+,i}^p \frac{ d_{i,q}}{\sigma_{\epsilon_{Y_+,i}}}
		\end{pmatrix} + o_p(1)\\
		&= \frac{\sum_{k=1}^{q}f_{\epsilon_{Y_+}}(c_k)}{2q} e_q^T  (S_{Y_+}^{-1}(c))_{12} \begin{pmatrix}
		\mu_{+,3} \\
		\vdots    \\
		\mu_{+,p+2}
		\end{pmatrix} m_{Y_+}^{(2)} h_{Y_+}^2 \\
		&\quad + \frac{\sum_{k=1}^{q}f_{\epsilon_{Y_+}}(c_k)}{6q} e_q^T  (S_{Y_+}^{-1}(c))_{12} \begin{pmatrix}
		\mu_{+,4} \\
		\vdots    \\
		\mu_{+,p+3}
		\end{pmatrix} m_{Y_+}^{(3)} h_{Y_+}^3 \\
		&\quad + \frac{f_{X_+}^{(1)}(0) \sum_{k=1}^{q}f_{\epsilon_{Y_+}}(c_k)}{2qf_{X_+}(0)} e_q^T  (S_{Y_+}^{-1}(c))_{12} \begin{pmatrix}
		\mu_{+,4} \\
		\vdots    \\
		\mu_{+,p+3}
		\end{pmatrix} m_{Y_+}^{(2)} h_{Y_+}^3 + o_p(h_{Y_+}^3).
	\end{align*}
	Combining \textrm{I} and \textrm{II} yields
	\begin{align*}
		\text{Bias}(\hat{m}_{Y_+}) &= \frac{1}{2} a_{Y_+}(c) m_{Y_+}^{(2)} h_{Y_+}^2 + \frac{1}{6}\check{a}_{Y_+}(c) m_{Y_+}^{(3)} h_{Y_+}^3 +\frac{1}{2}\frac{\tilde{a}_{Y_+}(c) f_{X_+}^{(1)}(0)}{f_{X_+}(0)}m_{Y_+}^{(2)} h_{Y_+}^3 + o_p(h_{Y_+}^3),
	\end{align*}
	where $a_{Y_+}(c)$ is  in \Cref{lemma:bias and variance}, $\check{a}_{Y_+}(c) = \frac{\mu_{+,2}(c)\mu_{+,3}(c) - \mu_{+,1}(c)\mu_{+,4}(c)}{\mu_{+,0}(c)\mu_{+,2}(c) - \mu_{+,1}^2(c)}$, and $
		\tilde{a}_{Y_+}(c) = \frac{\mu_{+,2}^2(c) - \mu_{+,1}(c)\mu_{+,4}(c)}{\mu_{+,0}(c)\mu_{+,2}(c) - \mu_{+,1}^2(c)}$. Hence the leading term in $\text{Bias}(\hat{m}_{Y_+}-\text{Bias}(\hat{m}_{Y_+}))$ is  $\frac{1}{6}\check{a}_{Y_+}(c) m_{Y_+}^{(3)} h_{Y_+}^3 +  \frac{1}{2}\frac{\tilde{a}_{Y_+}(c) f_{X_+}^{(1)}(0)}{f_{X_+}(0)}m_{Y_+}^{(2)} h_{Y_+}^3$.
	
	Since we work with data above the cutoff in this proof, the adjusted variance is given by $\frac{1}{n_+ h_{Y_+}}V_{\text{sharp}}^{\text{adj.}}$ in the proof of \Cref{thm:sharp t adjusted dist}. Thus, the adjusted MSE can be written as
    \begin{align*}
    	\text{adj. MSE} &= \left[\frac{1}{6}\check{a}_{Y_+}(c) m_{Y_+}^{(3)} h_{Y_+}^3+\frac{1}{2}\frac{\tilde{a}_{Y_+}(c) f_{X_+}^{(1)}(0)}{f_{X_+}(0)}m_{Y_+}^{(2)} h_{Y_+}^3\right]^2 + \frac{1}{n_+ h_{Y_+}}V_{\text{sharp}}^{\text{adj.}} + o_p(h_{Y_+}^6 + \frac{1}{n_+ h_{Y_+}})\\
    	&= C_2^2 h_{Y_+}^6 + \frac{1}{n_+ h_{Y_+}} C_3 + o_p(h_{Y_+}^6 + \frac{1}{n_+ h_{Y_+}}),
    \end{align*}
    where $C_2 = \frac{1}{6}\check{a}_{Y_+}(c) m_{Y_+}^{(3)} +\frac{1}{2}\frac{\tilde{a}_{Y_+}(c) f_{X_+}^{(1)}(0)}{f_{X_+}(0)}m_{Y_+}^{(2)}$ and $C_3 = V_{\text{sharp}}^{\text{adj.}}$.  The bandwidth that minimizes the adjusted MSE is given by $h = \left(\frac{C_3}{6C_2^2}\right)^{1/7}n_+^{-1/7}$.
\end{proof}

\subsection{Lemmas and propositions for fixed-n results} \label{supp:fixed-n}

This section first collects several lemmas for the development of fixed-$n$ approximations. We then present two propositions that are the fixed-$n$ counterparts of \Cref{thm:sharp t adjusted dist,thm:fuzzy t adjusted dist}. Assume $p = 1$ in the following lemmas.

\begin{lemma} \label{lemma:fixed-n bias and variance}
	Under  \cref{assumption:m continuity,assumption:sigma continuity,assumption:kernel,assumption:density x,assumption:error distribution,assumption:bandwidth}, the fixed-$n$ bias and variance are given by
	\begin{align}
	\text{Bias}(\hat{m}_{Y_+}|\mathbf{X})_{\text{fixed-n}}&= \frac{1}{q} e_q^T \Bigg[(S_{nY_+}^{-1})_{11} f_{\epsilon_{Y_+}} \frac{1}{2 n_+ h_{Y_+}} \sum_{i=1}^{n_+} \frac{K_{+,i}x_{+,i}^2}{\sigma_{\epsilon_{Y_+,i}}} \nonumber \\
	&\quad + (S_{nY_+}^{-1})_{12} \sum_{k=1}^{q}f_{\epsilon_{Y_+}}(c_k) \frac{1}{2 n_+ h_{Y_+}} \sum_{i=1}^{n_+} \frac{K_{+,i}x_{+,i}^3}{\sigma_{\epsilon_{Y_+,i}}}   \Bigg] m_{Y_+}^{(2)} h_{Y_+}^2 + o_p(h_{Y_+}^2),\nonumber \\
	\text{Var}(\hat{m}_{Y_+}|\mathbf{X})_{\text{fixed-n}} &=  \frac{1}{n_+h_{Y_+}q^2} e_q^T \left(S_{nY_+}^{-1} \Sigma_{nY_+} S_{nY_+}^{-1}\right)e_q + o_p(\frac{1}{n_+h_{Y_+}}). \nonumber	
	\end{align}
\end{lemma}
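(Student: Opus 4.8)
The plan is to re-run the Bahadur-type expansion that underlies \Cref{lemma:theta distribution} and \Cref{lemma:bias and variance}, but to stop \emph{before} passing to the kernel-moment limits, retaining instead the finite-sample Hessian $S_{nY_+}$ and score-covariance $\Sigma_{nY_+}$ from \Cref{notation of fixedn s sigma}. Concretely, by the convexity argument of \cite{kai2010local_supp} invoked in \Cref{lemma:theta distribution}, the transformed minimizer of $L_{n_+}(\theta)$ admits the representation $\hat\theta_{n_+}=-A_{n_+}^{-1}W^*_{Y_+,n_+}+o_p(\cdot)$, where $A_{n_+}$ is the conditional expected Hessian of $L_{n_+}$ and $W^*_{Y_+,n_+}$ its score. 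With the scaling built into the definitions of \Cref{supp:notation}, one has $A_{n_+}=S_{nY_+}$ and $\operatorname{Var}(W^*_{Y_+,n_+}\mid\mathbf X)=\Sigma_{nY_+}+o_p(1)$; the limits $S_{nY_+}\to(f_{X_+}/\sigma_{\epsilon_{Y_+}})S_{Y_+}$ and $\Sigma_{nY_+}\to f_{X_+}\Sigma_{Y_+}$ recorded after \Cref{notation of fixedn s sigma} are exactly what recover \Cref{lemma:bias and variance}. Keeping $S_{nY_+}$, $\Sigma_{nY_+}$ unevaluated is the fixed-$n$ device.

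For the bias I would use $\hat m_{Y_+}=m_{Y_+}+(q\sqrt{n_+h_{Y_+}})^{-1}e_q^T\hat u$, the $\sum_k c_k$ term vanishing by the symmetry of the $\tau_k$ and of the error (\Cref{assumption:error distribution}), so that
\[
\text{Bias}(\hat m_{Y_+}\mid\mathbf X)=-\frac{1}{q\sqrt{n_+h_{Y_+}}}\,e_q^T\Big[(S_{nY_+}^{-1})_{11}\,E(w^*_{Y_+,1n}\mid\mathbf X)+(S_{nY_+}^{-1})_{12}\,E(w^*_{Y_+,2n}\mid\mathbf X)\Big].
\]
The conditional score means follow by expanding $E(\eta^*_{Y_+,i,k}\mid\mathbf X)=F_{\epsilon_{Y_+}}(c_k-d_{i,k}/\sigma_{\epsilon_{Y_+,i}})-\tau_k$ to first order about $c_k$; the $c_k[\sigma_{\epsilon_{Y_+,i}}-\sigma_{\epsilon_{Y_+}}(0)]$ piece of $d_{i,k}$ again cancels upon summing over $k$, leaving $r_{i,1}\approx\tfrac12 m_{Y_+}^{(2)}h_{Y_+}^2 x_{+,i}^2$. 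Substituting and recognizing $\frac{1}{n_+h_{Y_+}}\sum_i K_{+,i}x_{+,i}^2/\sigma_{\epsilon_{Y_+,i}}$ together with its $x_{+,i}^3$ analogue reproduces the stated fixed-$n$ bias, \emph{without} collapsing these sums to $\mu_{+,2}$ and $\mu_{+,3}$.

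For the variance I would combine the centered representation
\[
\hat m_{Y_+}-E(\hat m_{Y_+}\mid\mathbf X)=-\frac{1}{q\sqrt{n_+h_{Y_+}}}\,e_q^T\big[(S_{nY_+}^{-1})_{11},\,(S_{nY_+}^{-1})_{12}\big]\big(W^*_{Y_+,n_+}-E(W^*_{Y_+,n_+}\mid\mathbf X)\big)+o_p\big((n_+h_{Y_+})^{-1/2}\big)
\]
with $\operatorname{Var}(W^*_{Y_+,n_+}\mid\mathbf X)=\Sigma_{nY_+}+o_p(1)$. This produces the sandwich $S_{nY_+}^{-1}\Sigma_{nY_+}S_{nY_+}^{-1}$, whose upper-left $q\times q$ block, pre- and post-multiplied by $e_q$ and divided by $q^2n_+h_{Y_+}$, gives the claimed fixed-$n$ variance (the compact $e_q^T(\cdot)e_q$ in the statement being understood to extract this block).

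The main obstacle I anticipate is the remainder control in purely finite-sample terms. In the asymptotic proof the convergence $S_{nY_+}\to(f_{X_+}/\sigma_{\epsilon_{Y_+}})S_{Y_+}$ is precisely what bounds the error of the quadratic approximation; here I must instead show the Bahadur remainder is $o_p((n_+h_{Y_+})^{-1/2})$ while $S_{nY_+}$ and $\Sigma_{nY_+}$ remain unevaluated. This requires $S_{nY_+}$ to be uniformly well-conditioned (bounded away from singularity with probability approaching one) and the second-order term of the CDF expansion of $F_{\epsilon_{Y_+}}$ to be negligible, both of which hold under \Cref{assumption:kernel,assumption:density x,assumption:error distribution} together with the bandwidth condition \Cref{assumption:bandwidth}.
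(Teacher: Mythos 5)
Your proposal follows essentially the same route as the paper's own proof: the paper likewise keeps the quadratic (Bahadur-type) approximation $L_{n_+}(\theta)=\tfrac12\theta^T S_{nY_+}\theta+\theta^T W^*_{Y_+,n_+}+o_p(1)$ from \cite{kai2010local_supp} with the finite-sample $S_{nY_+}$ unevaluated, obtains the bias by plugging the first-order CDF expansion of $E(\eta^*_{Y_+,i,k}\mid\mathbf X)$ into $\hat m_{Y_+}-m_{Y_+}=-(q\sqrt{n_+h_{Y_+}})^{-1}e_q^T[(S_{nY_+}^{-1})_{11},(S_{nY_+}^{-1})_{12}]W^*_{Y_+,n_+}$ with the same $c_k$-cancellations you describe, and gets the variance from the sandwich $S_{nY_+}^{-1}\Sigma_{nY_+}S_{nY_+}^{-1}$ via $\operatorname{Var}(W^*_{Y_+,n_+}-W_{Y_+,n_+}\mid\mathbf X)=o_p(1)$, exactly as you propose (including your correct reading of $e_q^T(\cdot)e_q$ as extracting the upper-left $q\times q$ block). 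Your anticipated obstacle about remainder control is handled in the paper just as you suggest, by citing the convexity lemmas of \cite{kai2010local} together with the convergence $S_{nY_+}\rightarrow (f_{X_+}(0)/\sigma_{\epsilon_{Y_+}}(0))S_{Y_+}(c)$, which keeps $S_{nY_+}$ nonsingular with probability approaching one.
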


\begin{proof}[Proof of \Cref{lemma:fixed-n bias and variance}]
	We first state some results for $E(w_{Y_+,1n}^*)$ and $E(w_{Y_+,2n}^*)$ that are used in the proof of the asymptotic results in \Cref{thm:adjusted bandwidth} in \Cref{supp:asymptotic}. $E(w_{Y_+,1n}^*)$ is a $q \times 1$ vector while $E(w_{Y_+,2n}^*)$ is a $p \times 1$ vector with $p=1$ in this case. By not letting $n\rightarrow \infty$, we have
	\begin{equation} \label{eq:Ew}
		\begin{aligned} 
		E(w_{Y_+,1n}^*) &= -f_{\epsilon_{Y_+}} \frac{1}{2 \sqrt{n_+ h_{Y_+}}} \sum_{i=1}^{n_+} \frac{K_{+,i}x_{+,i}^2}{\sigma_{\epsilon_{Y_+,i}}} m_{Y_+}^{(2)}(0) h_{Y_+}^2 + o(1),\\
		E(w_{Y_+,2n}^*) &=-\sum_{k=1}^{q}f_{\epsilon_{Y_+}}(c_k) \frac{1}{2 \sqrt{n_+ h_{Y_+}}} \sum_{i=1}^{n_+} \frac{K_{+,i}x_{+,i}^3}{\sigma_{\epsilon_{Y_+,i}}} m_{Y_+}^{(2)}(0) h_{Y_+}^2 + o(1).
		\end{aligned}
	\end{equation}
	From the proof of Theorem 5, Lemmas 2 and 3 in \cite{kai2010local}, the loss function becomes
	\begin{equation*}
		L_{n_+}(\theta) = \frac{1}{2} \theta^T S_{nY_+} \theta + W_{Y_+,n_+}^* + o_p(1),
	\end{equation*} 
	the solution of which is,
	\begin{equation*}
		\hat{\theta}_{n_+} = - S_{nY_+}^{-1} W_{Y_+,n_+}^* + o_p(1),
	\end{equation*}
	Rewrite the above equation as
	\begin{equation} \label{eq:fixed-n theta}
		\hat{\theta}_{n_+} + S_{nY_+}^{-1} E(W_{Y_+,n_+}^*|\textbf{X}) = - S_{nY_+}^{-1} \left[W_{Y_+,n_+}^* - E(W_{Y_+,n_+}^*|\textbf{X})\right] + o_p(1),
	\end{equation}
	which is the base to prove \Cref{lemma:theta distribution} in the asymptotic case. Combining \cref{eq:lcqr estimate} and \cref{eq:fixed-n theta}, we obtain the following expression for pre-asymptotic bias
	\begin{align}
		\hat{m}_{Y_+} - m_{Y_+} &= \frac{1}{q \sqrt{n_+ h_{Y_+}}} \sum_{k=1}^{q} \hat{u}_k \label{eq:fixed_n mY bias form 1} \\
		&= -\frac{1}{q \sqrt{n_+ h_{Y_+}}} e_q^T \begin{bmatrix}
		(S_{nY_+}^{-1})_{11} & (S_{nY_+}^{-1})_{12}
		\end{bmatrix} W_{Y_+,n_+}^*. \label{eq:fixed-n mY bias}
	\end{align}
	Plug the result in \cref{eq:Ew} into \cref{eq:fixed-n mY bias}, and we prove the fixed-$n$ bias result.
	From \cref{eq:fixed-n theta}, the variance of $\hat{\theta}_{n_+}$ becomes 
	\begin{align*}
		\text{Var}(\hat{\theta}_{n_+}) &= S_{nY_+}^{-1} \text{Var}(W_{Y_+,n_+}^* - E(W_{Y_+,n_+}^*|\textbf{X})) S_{nY_+}^{-1} \\
		&\rightarrow S_{nY_+}^{-1} \text{Var}(W_{Y_+,n_+} - E(W_{Y_+,n_+}|\textbf{X})) S_{nY_+}^{-1}\\
		&= S_{nY_+}^{-1} \text{Var}(W_{Y_+,n_+}) S_{nY_+}^{-1},
	\end{align*}
	where we use the result $\text{Var}(W_{Y_+,n_+}^* - W_{Y_+,n_+}|\textbf{X}) = o_p(1)$ from the proof of Theorem 5 in \cite{kai2010local}. Similar to the proof of \Cref{lemma:covariance}, we can show $\text{Var}(W_{Y_+,n_+}) = \Sigma_{nY_+}$, which, together with \cref{eq:lcqr estimate}, proves the variance result in this lemma.
\end{proof} 

\begin{lemma} \label{lemma:fixed-n covariance}
Under  \cref{assumption:m continuity,assumption:sigma continuity,assumption:kernel,assumption:density x,assumption:error distribution,assumption:bandwidth}, the fixed-$n$ covariance between $\hat{m}_{Y_+}(0)$ and $\hat{m}_{T_+}(0)$ is given by
\begin{equation*}
	\text{Cov}(\hat{m}_{Y_+},\hat{m}_{T_+}|\textbf{X})_{\text{fixed-n}} = \frac{1}{q^2 n_+ \sqrt{ h_{Y_+} h_{T_+}}} e_q^T\begin{bmatrix}
	(S_{nY_+}^{-1})_{11} & (S_{nY_+}^{-1})_{12}
	\end{bmatrix} \Sigma_{nYT_+} \begin{bmatrix}
	(S_{nT_+}^{-1})_{11} & (S_{nT_+ }^{-1})_{12}
	\end{bmatrix}^T e_q.
\end{equation*}
\end{lemma}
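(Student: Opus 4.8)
The plan is to replicate the argument used for the asymptotic covariance in \Cref{lemma:covariance}, but to stop before passing any kernel average to its limiting moment, so that the sums $\frac{1}{n_+h}\sum_i K_{+,i}^2(\cdots)$ survive intact inside the fixed-$n$ matrices. The starting point is the pre-asymptotic representation already established in the proof of \Cref{lemma:fixed-n bias and variance}: combining \cref{eq:lcqr estimate} with \cref{eq:fixed-n theta} gives, for the outcome equation,
\begin{equation*}
\hat{m}_{Y_+} - m_{Y_+} = -\frac{1}{q\sqrt{n_+h_{Y_+}}} e_q^T \begin{bmatrix}(S_{nY_+}^{-1})_{11} & (S_{nY_+}^{-1})_{12}\end{bmatrix} W_{Y_+,n_+}^*,
\end{equation*}
and the same identity with $Y$ replaced by $T$ for $\hat{m}_{T_+} - m_{T_+}$. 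Since covariance is unaffected by the deterministic (conditional-mean) parts, I would work with the centered quantities $\hat{m}_{Y_+} - E(\hat{m}_{Y_+}|\mathbf{X})$ and $\hat{m}_{T_+} - E(\hat{m}_{T_+}|\mathbf{X})$, which are governed by $W_{Y_+,n_+}^* - E(W_{Y_+,n_+}^*|\mathbf{X})$ and its $T$-analog.

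First I would reduce the covariance to a sandwiched cross-covariance of the score vectors. Taking the conditional expectation of the product of the two centered representations gives
\begin{align*}
\text{Cov}(\hat{m}_{Y_+},\hat{m}_{T_+}|\mathbf{X}) &= \frac{1}{q^2 n_+\sqrt{h_{Y_+}h_{T_+}}} e_q^T \begin{bmatrix}(S_{nY_+}^{-1})_{11} & (S_{nY_+}^{-1})_{12}\end{bmatrix} \\
&\quad \times \text{Cov}(W_{Y_+,n_+}^*,W_{T_+,n_+}^*|\mathbf{X}) \begin{bmatrix}(S_{nT_+}^{-1})_{11} & (S_{nT_+}^{-1})_{12}\end{bmatrix}^T e_q,
\end{align*}
so the whole problem collapses to identifying the leading term of $\text{Cov}(W_{Y_+,n_+}^*,W_{T_+,n_+}^*|\mathbf{X})$.

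Next I would remove the stars. Writing $W^* = W + (W^*-W)$ and expanding the cross-covariance into four pieces, the fact (borrowed from the proof of Theorem 5 in \cite{kai2010local}, already invoked in \Cref{lemma:fixed-n bias and variance}) that $\text{Var}(W_{Y_+,n_+}^* - W_{Y_+,n_+}|\mathbf{X}) = o_p(1)$, together with the corresponding $T$-statement and the Cauchy--Schwarz inequality, forces the three pieces involving a difference to be of smaller order than $\text{Cov}(W_{Y_+,n_+},W_{T_+,n_+}|\mathbf{X})$. It then remains to compute this last cross-covariance block by block. Using $\eta_{Y_+,i,k} = I(\epsilon_{Y_+,i}\le c_k)-\tau_k$ and its $T$-counterpart, independence across $i$ annihilates all $i\ne j$ cross terms, while for $i=j$ the joint distribution of the errors gives $\text{Cov}(\eta_{Y_+,i,k},\eta_{T_+,i,k'}|\mathbf{X}) = F_+(c_k,c_{k'}) - \tau_k\tau_{k'} = \phi_{kk'}$, exactly as in \Cref{lemma:covariance}. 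For the $(1,1)$ block this yields $\frac{1}{n_+\sqrt{h_{Y_+}h_{T_+}}}\sum_i K(X_i/h_{Y_+})K(X_i/h_{T_+})\phi_{kk'}$, and the analogous weighted sums with powers $x_{+,i}^j$ for the other blocks, i.e.\ precisely the entries of $\Sigma_{nYT_+}$. Substituting $\text{Cov}(W_{Y_+,n_+},W_{T_+,n_+}|\mathbf{X}) = \Sigma_{nYT_+}$ into the sandwich gives the claimed formula.

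The step I expect to require the most care is this last block computation, for two reasons. First, the two estimators use different bandwidths, so the kernel weights entering the cross-covariance are products $K(X_i/h_{Y_+})K(X_i/h_{T_+})$ rather than a genuine square; matching these to the symbol $K_{+,i}^2$ used in the definition of $\Sigma_{nYT_+}$ requires the convention that $K_{+,i}^2$ abbreviates this product, consistent with the scaling $\frac{1}{n_+\sqrt{h_{Y_+}h_{T_+}}}$. Second, the star-removal must be carried out jointly for the two different normalizations $\sqrt{n_+h_{Y_+}}$ and $\sqrt{n_+h_{T_+}}$, so the Cauchy--Schwarz bounds on the cross terms must be tracked against the correct $h$-powers to confirm they are genuinely negligible relative to the retained term. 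As a consistency check one can let $n_+\to\infty$ and use $\Sigma_{nYT_+}\to f_{X_+}(0)\Sigma_{YT_+}(c)$ and $S_{nY_+}^{-1}\to \frac{\sigma_{\epsilon_{Y_+}}(0)}{f_{X_+}(0)}S_{Y_+}^{-1}(c)$ to recover \cref{eq:cov q=2} of \Cref{lemma:covariance}.
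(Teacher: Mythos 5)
Your proposal is correct and follows essentially the same route as the paper: the paper's proof likewise uses the fixed-$n$ representations $\hat{m}_{Y_+}-m_{Y_+}$ and $\hat{m}_{T_+}-m_{T_+}$ in terms of $S_{nY_+}^{-1}W_{Y_+,n_+}^*$ and $S_{nT_+}^{-1}W_{T_+,n_+}^*$, then invokes the argument of \Cref{lemma:covariance} together with the identification of the cross-covariance of the centered score vectors with $\Sigma_{nYT_+}$. Your explicit star-removal via the $\text{Var}(W^*-W|\mathbf{X})=o_p(1)$ bounds and Cauchy--Schwarz, and your remark that $K_{+,i}^2$ in $\Sigma_{nYT_+}$ is a convention for the product $K(X_{+,i}/h_{Y_+})K(X_{+,i}/h_{T_+})$ under unequal bandwidths, simply make precise the details the paper compresses into ``using the proof similar to that in \Cref{lemma:covariance}.''
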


\begin{proof}[Proof of \Cref{lemma:fixed-n covariance}]
	Similar to \cref{eq:fixed-n mY bias}, we have
	\begin{equation} \label{eq:fixed-n mT bias}
		\hat{m}_{T_+} - m_{T_+} = -\frac{1}{q \sqrt{n_+ h_{T_+}}} e_q^T \begin{bmatrix}
		(S_{nT_+}^{-1})_{11} & (S_{nT_+}^{-1})_{12}
		\end{bmatrix} W_{T_+,n_+}^*.
	\end{equation}
	Using the proof similar to that in \Cref{lemma:covariance} and the result
	\begin{equation*}
		\Sigma_{nYT_+} = E \left[\begin{pmatrix}
		w_{Y_+,1n}^* - E(w_{Y_+,1n}^*|\mathbf{X})\\
		w_{Y_+,2n}^* - E(w_{Y_+,2n}^*|\mathbf{X})
		\end{pmatrix} 
		\begin{pmatrix}
		w_{T_+,1n}^* - E(w_{T_+,1n}^*|\mathbf{X})\\
		w_{T_+,2n}^* - E(w_{T_+,2n}^*|\mathbf{X})
		\end{pmatrix}^T
		\right],
	\end{equation*} we have that \Cref{lemma:fixed-n covariance} holds.
\end{proof}

\begin{lemma} \label{lemma:fixed-n sharp results}
	Under \cref{assumption:m continuity,assumption:sigma continuity,assumption:kernel,assumption:density x,assumption:error distribution,assumption:bandwidth}, we have
	\begin{align*}
		\text{Bias}(\hat{\tau}_{\text{sharp}}|\mathbf{X})_{\text{fixed-n}} &= \text{Bias}(\hat{m}_{Y_+})_{\text{fixed-n}}  - \text{Bias}(\hat{m}_{Y_-})_{\text{fixed-n}} + o_p(h_{Y_+}^2 + h_{Y_-}^2),\\
		\text{Var}(\hat{\tau}_{\text{sharp}}|\mathbf{X})_{\text{fixed-n}} &=\text{Var}(\hat{m}_{Y_+})_{\text{fixed-n}} + \text{Var}(\hat{m}_{Y_-})_{\text{fixed-n}} + o_p(\frac{1}{n_+ h_{Y_+}} + \frac{1}{n_- h_{Y_-}}),
	\end{align*} 
	where $\text{Bias}(\hat{m}_{Y_+})_{\text{fixed-n}}$ and $\text{Var}(\hat{m}_{Y_+})_{\text{fixed-n}}$ are given in \Cref{lemma:fixed-n bias and variance}, and $\text{Bias}(\hat{m}_{Y_-})_{\text{fixed-n}}$ and $\text{Var}(\hat{m}_{Y_-})_{\text{fixed-n}}$ are defined analogously.
\end{lemma}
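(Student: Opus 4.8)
The plan is to exploit the fact that $\hat{\tau}_{\text{sharp}} = \hat{m}_{Y_+}(0) - \hat{m}_{Y_-}(0)$ together with the mutual independence of the observations above and below the cutoff, so that the fixed-$n$ results reduce almost immediately to \Cref{lemma:fixed-n bias and variance}. First I would write the estimation error as the difference of two pieces, $\hat{\tau}_{\text{sharp}} - \tau_{\text{sharp}} = (\hat{m}_{Y_+}(0) - m_{Y_+}(0)) - (\hat{m}_{Y_-}(0) - m_{Y_-}(0))$, and take conditional expectations given $\mathbf{X}$. By linearity of the conditional bias operator, the fixed-$n$ bias splits as $\text{Bias}(\hat{m}_{Y_+})_{\text{fixed-n}} - \text{Bias}(\hat{m}_{Y_-})_{\text{fixed-n}}$, and each term is supplied directly by \Cref{lemma:fixed-n bias and variance} (with $+$ replaced by $-$ in the obvious way for the second term). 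Collecting the two remainders, of orders $o_p(h_{Y_+}^2)$ and $o_p(h_{Y_-}^2)$ respectively, yields the claimed $o_p(h_{Y_+}^2 + h_{Y_-}^2)$ bound.

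For the variance, the key observation is that the subsamples $\{i : X_i \geq 0\}$ and $\{i : X_i < 0\}$ are disjoint and, under the \textit{i.i.d.} condition in \Cref{assumption:error distribution}, the errors $\epsilon_{Y_+,i}$ and $\epsilon_{Y_-,j}$ are independent across the two groups. Conditioning on $\mathbf{X}$ therefore makes $\hat{m}_{Y_+}(0)$ and $\hat{m}_{Y_-}(0)$ conditionally independent, so that $\text{Cov}(\hat{m}_{Y_+}(0), \hat{m}_{Y_-}(0)|\mathbf{X}) = 0$. Expanding the variance of the difference and dropping this vanishing cross term gives $\text{Var}(\hat{\tau}_{\text{sharp}}|\mathbf{X})_{\text{fixed-n}} = \text{Var}(\hat{m}_{Y_+})_{\text{fixed-n}} + \text{Var}(\hat{m}_{Y_-})_{\text{fixed-n}}$, with each term again read off from \Cref{lemma:fixed-n bias and variance} and the two remainders of orders $o_p((n_+ h_{Y_+})^{-1})$ and $o_p((n_- h_{Y_-})^{-1})$ combining into the stated bound.

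The argument is essentially bookkeeping once the single-sided fixed-$n$ results are in hand, so there is no genuinely hard step; the effort has already been spent in proving \Cref{lemma:fixed-n bias and variance}. The one point that deserves a careful word is the vanishing of the conditional covariance. Here I would appeal to the representation in \cref{eq:fixed-n mY bias}, which writes $\hat{m}_{Y_+}(0) - m_{Y_+}(0)$ as a linear functional of $W_{Y_+,n_+}^*$, whose entries depend only on the indicators $\eta_{Y_+,i,k}^*$ built from the $+$-side errors; the analogous representation for the $-$ side depends only on the $-$-side errors, and these two error collections are independent given $\mathbf{X}$. Consequently every cross-product in the covariance expansion has zero conditional mean, which is precisely the fixed-$n$ analogue of the independence argument underlying the additive variance in \Cref{thm:sharp results}.
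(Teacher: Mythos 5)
Your proposal is correct and follows essentially the same route as the paper, whose proof is simply a one-line application of \Cref{lemma:fixed-n bias and variance} to the decomposition $\hat{\tau}_{\text{sharp}} = \hat{m}_{Y_+}(0) - \hat{m}_{Y_-}(0)$ in \cref{eq:sharp treatment}. You merely make explicit the bookkeeping the paper leaves implicit---linearity of the conditional bias and the vanishing conditional covariance across the two sides via the representation in \cref{eq:fixed-n mY bias}---which is a sound and faithful filling-in of the same argument.
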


\begin{proof}[Proof of \Cref{lemma:fixed-n sharp results}]
	The results hold by applying \Cref{lemma:fixed-n bias and variance} to \cref{eq:sharp treatment}.
\end{proof}

\begin{lemma} \label{lemma:fixed-n fuzzy results}
	Under \cref{assumption:m continuity,assumption:sigma continuity,assumption:kernel,assumption:density x,assumption:error distribution,assumption:bandwidth}, we have
	\begin{align}
	\text{Bias}(\hat{\tau}_{\text{fuzzy}}|\mathbf{X})_{\text{fixed-n}} &= \frac{1}{m_{T_+} - m_{T_-}}\left[\text{Bias}(\hat{m}_{Y_+})_{\text{fixed-n}}  - \text{Bias}(\hat{m}_{Y_-})_{\text{fixed-n}}\right] \nonumber \\
	&\quad - \frac{m_{Y_+} - m_{Y_-}}{\left[m_{T_+} - m_{T_-}\right]^2}\left[\text{Bias}(\hat{m}_{T_+})_{\text{fixed-n}}  - \text{Bias}(\hat{m}_{T_-})_{\text{fixed-n}}\right]\nonumber\\
	&\quad + o_p(h_{Y_+}^2 + h_{Y_-}^2 + h_{T_+}^2 + h_{T_-}^2).
	\end{align}
	The variance expression is given in (\ref{fuzzy variance long form}) by substituting the results for $\text{Var}(\hat{m}_{Y_+})$, $\text{Var}(\hat{m}_{Y_-})$, $\text{Var}(\hat{m}_{T_+})$, $\text{Var}(\hat{m}_{T_-})$, $\text{Cov}(\hat{m}_{Y_+}, \hat{m}_{T_+})$ and $\text{Cov}(\hat{m}_{Y_-}, \hat{m}_{T_-})$ in \Cref{lemma:fixed-n bias and variance,lemma:fixed-n covariance}.
\end{lemma}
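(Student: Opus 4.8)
The plan is to reuse the delta-method linearization of the ratio estimator established in the proof of \Cref{thm:fuzzy results} and then substitute the fixed-$n$ component results in place of their asymptotic counterparts. Since $m_{T_+}(0) - m_{T_-}(0)$ is bounded away from zero, the expansion
\begin{align*}
	\hat{\tau}_{\text{fuzzy}} - \tau_{\text{fuzzy}} &= \frac{1}{m_{T_+} - m_{T_-}}\left[(\hat{m}_{Y_+} - m_{Y_+}) - (\hat{m}_{Y_-} - m_{Y_-})\right]\\
	&\quad - \frac{m_{Y_+} - m_{Y_-}}{(m_{T_+} - m_{T_-})^2}\left[(\hat{m}_{T_+} - m_{T_+}) - (\hat{m}_{T_-} - m_{T_-})\right] + \text{s.o.}
\end{align*}
holds verbatim; it is a purely algebraic identity in the four centered estimators and does not depend on whether the components are approximated asymptotically or in fixed-$n$ form.

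For the bias, I would take $E(\cdot\,|\,\mathbf{X})$ of this linearization. Conditional expectation distributes over the linear combination, and each of the four terms $E(\hat{m}_{Y_+} - m_{Y_+}\,|\,\mathbf{X})$, and analogously for $Y_-$, $T_+$, $T_-$, is replaced by the corresponding fixed-$n$ bias from \Cref{lemma:fixed-n bias and variance} (applied once on each side, with $Y$ replaced by $T$ where needed). This yields the stated bias formula with the same weights $\frac{1}{m_{T_+} - m_{T_-}}$ and $-\frac{m_{Y_+} - m_{Y_-}}{(m_{T_+} - m_{T_-})^2}$ as in the asymptotic case.

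For the variance, I would take $\text{Var}(\cdot\,|\,\mathbf{X})$ of the linearization. Expanding the variance of the linear combination reproduces exactly the decomposition in (\ref{fuzzy variance long form}). Two simplifications arise: by the \textit{i.i.d.} assumption and the fact that the samples above and below the cutoff are disjoint, all cross-side covariances vanish, so only the four same-side variances and the two same-side covariances $\text{Cov}(\hat{m}_{Y_+}, \hat{m}_{T_+})$ and $\text{Cov}(\hat{m}_{Y_-}, \hat{m}_{T_-})$ survive. Substituting the fixed-$n$ variances from \Cref{lemma:fixed-n bias and variance} and the fixed-$n$ covariances from \Cref{lemma:fixed-n covariance} into (\ref{fuzzy variance long form}) completes the variance claim.

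The main obstacle will be justifying that the second-order remainder (the ``s.o.'' term) in the linearization is $o_p$ of the stated rate $h_{Y_+}^2 + h_{Y_-}^2 + h_{T_+}^2 + h_{T_-}^2$ for the bias and $o_p(\frac{1}{n_+ h_{Y_+}} + \frac{1}{n_- h_{Y_-}} + \frac{1}{n_+ h_{T_+}} + \frac{1}{n_- h_{T_-}})$ for the variance, now using only the fixed-$n$ consistency of the component estimators rather than their kernel-moment limits. This parallels the control of higher-order Taylor terms in the proof of \Cref{thm:fuzzy results}, but one must verify that the finite-sample weighted sums appearing in \Cref{lemma:fixed-n bias and variance,lemma:fixed-n covariance} remain bounded in probability, so that the product and quotient terms neglected in the delta expansion are genuinely of smaller order.
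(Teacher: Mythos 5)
Your proposal takes essentially the same route as the paper: its proof of this lemma is a one-line citation that reuses the delta-method linearization from the proof of \Cref{thm:fuzzy results} and substitutes the fixed-$n$ component results of \Cref{lemma:fixed-n bias and variance,lemma:fixed-n covariance} into (\ref{fuzzy variance long form}), exactly as you describe. Your remarks on the vanishing cross-side covariances and on controlling the second-order remainder simply make explicit what the paper leaves implicit.
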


\begin{proof}[Proof of \Cref{lemma:fixed-n fuzzy results}]
	The proof follows from  (\ref{fuzzy variance long form}) and \Cref{lemma:fixed-n bias and variance,lemma:fixed-n covariance}.
\end{proof}

Let $p=2$ in the following lemma.
\begin{lemma} \label{lemma:fixed-n bias and var of bias}
	Under \cref{assumption:m continuity,assumption:sigma continuity,assumption:kernel,assumption:density x,assumption:error distribution,assumption:bandwidth}, the fixed-$n$ variance of $\hat{m}_{Y_+}^{(2)}$ is given by
	\begin{equation*}
	\text{Var }(\hat{m}_{Y_+}^{(2)}|\mathbf{X}) = \frac{4}{n_+ h_{Y_+}^5} e_2^T(S_{nY_+}^{-1} \Sigma_{nY_+} S_{nY_+}^{-1})_{22} e_2 + o_p(\frac{1}{n_+ h_{Y_+}^5}). \label{eq:fixed-n variance of m2d}
	\end{equation*}
\end{lemma}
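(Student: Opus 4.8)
The plan is to mirror the variance computation in the proof of \Cref{lemma:bias and var of bias}, but to keep $n_+$ and $h_{Y_+}$ fixed throughout and replace the limiting kernel-moment matrices by their fixed-$n$ counterparts $S_{nY_+}$ and $\Sigma_{nY_+}$. The starting point is the exact identity \cref{eq:md2 and v}, which with $p=2$ reads $\hat{m}_{Y_+}^{(2)} = m_{Y_+}^{(2)} + 2\hat{v}_2/(h_{Y_+}^2\sqrt{n_+ h_{Y_+}})$, where $\hat{v}_2$ is the last coordinate of the transformed minimizer $\hat{\theta}_{n_+}=(\hat{u}_1,\dots,\hat{u}_q,\hat{v}_1,\hat{v}_2)^T$. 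Since $m_{Y_+}^{(2)}$ is deterministic, taking conditional variances gives
\begin{equation*}
\text{Var}(\hat{m}_{Y_+}^{(2)}|\mathbf{X}) = \frac{4}{n_+ h_{Y_+}^5}\,\text{Var}(\hat{v}_2|\mathbf{X}),
\end{equation*}
so the task reduces to the fixed-$n$ variance of $\hat{v}_2$, and the factor $4$ in the statement is precisely the square of the $2$ in \cref{eq:md2 and v}, exactly as in the asymptotic proof.

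Next I would invoke the fixed-$n$ Bahadur-type representation \cref{eq:fixed-n theta} established in the proof of \Cref{lemma:fixed-n bias and variance}, now taken with $p=2$, so that the lower block of $\hat{\theta}_{n_+}$ is two-dimensional:
\begin{equation*}
\hat{\theta}_{n_+} + S_{nY_+}^{-1} E(W_{Y_+,n_+}^*|\mathbf{X}) = - S_{nY_+}^{-1}\bigl[W_{Y_+,n_+}^* - E(W_{Y_+,n_+}^*|\mathbf{X})\bigr] + o_p(1).
\end{equation*}
Reading off the bottom block-row and selecting the last coordinate with the $2\times1$ unit vector $e_2$ yields
\begin{equation*}
\hat{v}_2 - E(\hat{v}_2|\mathbf{X}) = - e_2^T\bigl[(S_{nY_+}^{-1})_{21},\,(S_{nY_+}^{-1})_{22}\bigr]\bigl[W_{Y_+,n_+}^* - E(W_{Y_+,n_+}^*|\mathbf{X})\bigr] + o_p(1),
\end{equation*}
whence $\text{Var}(\hat{v}_2|\mathbf{X})$ equals $e_2^T[(S_{nY_+}^{-1})_{21},(S_{nY_+}^{-1})_{22}]\,\text{Var}(W_{Y_+,n_+}^*|\mathbf{X})\,[(S_{nY_+}^{-1})_{21},(S_{nY_+}^{-1})_{22}]^T e_2 + o_p(1)$.

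Finally I would identify the middle factor with $\Sigma_{nY_+}$. As in the proofs of \Cref{lemma:covariance,lemma:fixed-n bias and variance}, the replacement of $W_{Y_+,n_+}^*$ by $W_{Y_+,n_+}$ is justified by $\text{Var}(W_{Y_+,n_+}^* - W_{Y_+,n_+}|\mathbf{X}) = o_p(1)$ (from the proof of Theorem 5 in \cite{kai2010local}), and the direct second-moment computation using $\text{Cov}(\eta_{Y_+,i,k},\eta_{Y_+,j,k'}) = \tau_{kk'}$ for $i=j$ and $0$ otherwise gives $\text{Var}(W_{Y_+,n_+}|\mathbf{X}) = \Sigma_{nY_+}$ exactly. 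Substituting this collapses the sandwich into the bottom-right $p\times p$ block, so that $\text{Var}(\hat{v}_2|\mathbf{X}) = e_2^T(S_{nY_+}^{-1}\Sigma_{nY_+}S_{nY_+}^{-1})_{22}e_2 + o_p(1)$, and combining with the first display proves the claim. The one point that needs care—and the main obstacle—is that \cref{eq:fixed-n theta} and the covariance identity were originally set up for $p=1$; I would verify that the quadratic approximation of the LCQR loss $L_{n_+}(\theta)$ underlying the representation goes through verbatim at $p=2$, i.e.\ that enlarging the polynomial block does not disturb the convexity and remainder-control steps. This is mild, since those arguments are uniform in the finite polynomial order, but it is the only place where anything beyond a direct transcription of the earlier lemmas is required.
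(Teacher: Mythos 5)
Your proposal is correct and follows essentially the same route as the paper, whose one-line proof simply combines $\text{Var}(\hat{v}_2) = e_2^T(S_{nY_+}^{-1}\Sigma_{nY_+}S_{nY_+}^{-1})_{22}e_2$ with the identity \cref{eq:md2 and v}; you merely spell out the details the paper leaves implicit, namely that $\text{Var}(\hat{v}_2|\mathbf{X})$ comes from the fixed-$n$ representation \cref{eq:fixed-n theta} at $p=2$, the replacement of $W_{Y_+,n_+}^*$ by $W_{Y_+,n_+}$ via $\text{Var}(W_{Y_+,n_+}^* - W_{Y_+,n_+}|\mathbf{X})=o_p(1)$, and the second-moment computation giving $\text{Var}(W_{Y_+,n_+}|\mathbf{X})=\Sigma_{nY_+}$. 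Your closing caveat about checking that the quadratic approximation of $L_{n_+}(\theta)$ extends verbatim from $p=1$ to $p=2$ is a fair point the paper glosses over, but it is indeed mild and does not change the argument.
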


\begin{proof}[Proof of \Cref{lemma:fixed-n bias and var of bias}]
	It reslts from combining $\text{Var}(\hat{v}_2) = e_2^T(S_{nY_+}^{-1} \Sigma_{nY_+} S_{nY_+}^{-1})_{22} e_2$ and \cref{eq:md2 and v}.
\end{proof}

\setcounter{theorem}{0}

\begin{proposition} \label{prop:fixed-n sharp t}
	Under \cref{assumption:m continuity,assumption:sigma continuity,assumption:kernel,assumption:density x,assumption:error distribution,assumption:bandwidth}, the fixed-$n$ adjusted \textit{t}-statistic for the sharp RD is given by
	\begin{equation} \label{eq:fixed-n sharp t adjusted dist}
	t_{\text{sharp, fixed-n}}^{\text{adj.}} = \frac{\hat{\tau}_{\text{sharp}} - \widehat{\text{Bias}}(\hat{\tau}_{\text{sharp}})_{\text{fixed-n}}-\tau_0}{\sqrt{\text{Var}(\hat{\tau}_{\text{sharp}} - \widehat{\text{Bias}}(\hat{\tau}_{\text{sharp}})_{\text{fixed-n}})_{\text{fixed-n}}}},
	\end{equation} 
	where the expression for fixed-$n$ terms, $\widehat{\text{Bias}}(\hat{\tau}_{\text{sharp}})_{\text{fixed-n}}$, $\text{Var}(\hat{\tau}_{\text{sharp}} - \widehat{\text{Bias}}(\hat{\tau}_{\text{sharp}})_{\text{fixed-n}})_{\text{fixed-n}}$, are given in the proof of this proposition.
\end{proposition}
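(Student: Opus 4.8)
The plan is to re-run the argument in the proof of \Cref{thm:sharp t adjusted dist}, but at every step replace the limiting kernel-moment quantities by their finite-sample analogues, so that the bias and variance are expressed through the matrices $S_{nY_+}$ and $\Sigma_{nY_+}$ of \Cref{supp:notation} rather than through $S_{Y_+}(c)$ and $\Sigma_{Y_+}(c)$. The starting point is the fixed-$n$ representation of the minimizer in \cref{eq:fixed-n theta}, $\hat{\theta}_{n_+} + S_{nY_+}^{-1} E(W_{Y_+,n_+}^*|\mathbf{X}) = - S_{nY_+}^{-1}[W_{Y_+,n_+}^* - E(W_{Y_+,n_+}^*|\mathbf{X})] + o_p(1)$, which holds without sending $n_+ \to \infty$ and which already underlies \Cref{lemma:fixed-n bias and variance,lemma:fixed-n covariance}. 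Everything in the statement then reduces to assembling the fixed-$n$ forms of the six variance and covariance pieces entering \cref{eq:sharp var adjusted}.

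First I would read off the fixed-$n$ bias. By \Cref{lemma:fixed-n sharp results}, $\widehat{\text{Bias}}(\hat{\tau}_{\text{sharp}})_{\text{fixed-n}} = \widehat{\text{Bias}}(\hat{m}_{Y_+})_{\text{fixed-n}} - \widehat{\text{Bias}}(\hat{m}_{Y_-})_{\text{fixed-n}}$, with each side given explicitly by \Cref{lemma:fixed-n bias and variance}; this supplies the numerator correction in \cref{eq:fixed-n sharp t adjusted dist}. For the denominator, the two leading variances $\text{Var}(\hat{m}_{Y_\pm})_{\text{fixed-n}}$ come directly from \Cref{lemma:fixed-n bias and variance}. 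The two bias-estimate variances follow from $\widehat{\text{Bias}}(\hat{m}_{Y_+}) = \tfrac{1}{2} a_{Y_+}(c) \hat{m}_{Y_+}^{(2)} h_{Y_+}^2$ in \cref{eq:m+ bias hat} together with the fixed-$n$ variance of the second-derivative estimator in \Cref{lemma:fixed-n bias and var of bias}, giving $\text{Var}(\widehat{\text{Bias}}(\hat{m}_{Y_+}))_{\text{fixed-n}} = \tfrac{1}{4} a_{Y_+}^2(c) h_{Y_+}^4 \,\text{Var}(\hat{m}_{Y_+}^{(2)})_{\text{fixed-n}}$, and likewise below the cutoff.

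The genuinely new piece is the fixed-$n$ covariance $\text{Cov}(\hat{m}_{Y_+}, \widehat{\text{Bias}}(\hat{m}_{Y_+}))_{\text{fixed-n}}$. I would handle it exactly as in the proof of \Cref{thm:sharp t adjusted dist}: write $\hat{m}_{Y_+} - m_{Y_+}$ through the $q^{-1}\sum_k \hat{u}_k$ term and $\widehat{\text{Bias}}(\hat{m}_{Y_+})$ as a multiple of $\hat{v}_2$ via \cref{eq:md2 and v}, embed both $\hat{u}_k$ and $\hat{v}_2$ in the common $\hat{\theta}_{n_+}$, and take the covariance using \cref{eq:fixed-n theta}. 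Since the score vectors $W_{Y_+,n_+}^*$ driving the intercepts and the second-order coefficient share the same indicators $\eta_{Y_+,i,k}^*$, this produces $e_q^T (S_{nY_+}^{-1} \Sigma_{nY_+} S_{nY_+}^{-1})_{12,2}$, the finite-sample analogue of the block appearing in \Cref{thm:sharp t adjusted dist}. Substituting all six pieces into \cref{eq:sharp var adjusted} yields $\text{Var}(\hat{\tau}_{\text{sharp}} - \widehat{\text{Bias}}(\hat{\tau}_{\text{sharp}})_{\text{fixed-n}})_{\text{fixed-n}}$, which I would record as the denominator. A separate distributional limit need not be re-proved here, since the fixed-$n$ representation \cref{eq:fixed-n theta} already exhibits $\hat{\tau}_{\text{sharp}}^{\text{bc}}$ as a linear combination of asymptotically normal scores, just as in \Cref{thm:sharp t adjusted dist}.

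The main obstacle is reconciling the polynomial orders used in the two fits. The mean $\hat{m}_{Y_+}$ is obtained from the degree-one objective \cref{eq:lcqr obj} ($p=1$), whereas the second-derivative estimate needed for the bias uses a higher-order local polynomial ($p=2$ in \Cref{lemma:fixed-n bias and var of bias}). The covariance step therefore requires a single joint fixed-$n$ representation in which the intercept block and the second-order block live in the same $S_{nY_+}$, $\Sigma_{nY_+}$, and I must verify that extracting the $(\cdot)_{12,2}$ block of $S_{nY_+}^{-1}\Sigma_{nY_+}S_{nY_+}^{-1}$ is consistent with this joint representation and that the resulting Studentization preserves the automatic boundary-carpentry property that motivates the fixed-$n$ approach. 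The remaining manipulations are bookkeeping parallel to the asymptotic case.
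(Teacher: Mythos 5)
Your route is the paper's route: the same six-term decomposition of \cref{eq:sharp var adjusted}, the same fixed-$n$ representation \cref{eq:fixed-n theta}, the numerator bias from \Cref{lemma:fixed-n sharp results,lemma:fixed-n bias and variance}, the leading variances from \Cref{lemma:fixed-n bias and variance}, the second-derivative variance from \Cref{lemma:fixed-n bias and var of bias}, and the covariance extracted as the $(\cdot)_{12,2}$ block of $S_{nY_+}^{-1}\Sigma_{nY_+}S_{nY_+}^{-1}$ by embedding $\hat{u}_k$ and $\hat{v}_2$ in a common $\hat{\theta}_{n_+}$ via \cref{eq:md2 and v} and \cref{eq:fixed_n mY bias form 1}. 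The joint-fit issue you flag as the main obstacle is handled in the paper exactly as in the proof of \Cref{thm:sharp t adjusted dist}: both blocks are read off one higher-order representation, so there is nothing extra to verify beyond what that proof already does.

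There is, however, one concrete slip that breaks the fixed-$n$ character of the statistic you would end up with. For $\text{Var}(\widehat{\text{Bias}}(\hat{m}_{Y_+}))$ you take the bias estimator in its asymptotic form $\widehat{\text{Bias}}(\hat{m}_{Y_+}) = \tfrac{1}{2} a_{Y_+}(c)\, \hat{m}_{Y_+}^{(2)} h_{Y_+}^2$ from \cref{eq:m+ bias hat}, so your term carries the limiting constant $\tfrac{1}{4}a_{Y_+}^2(c)$, and the same constant would propagate into your covariance term. The paper instead writes the fixed-$n$ bias as $D_{nY_+,1}\,\hat{m}_{Y_+}^{(2)} h_{Y_+}^2$ (see \cref{eq:fixed-n bias for mY+} and \cref{eq:DnY+}), where $D_{nY_+,1}$ is built from $(S_{nY_+}^{-1})_{11}$, $(S_{nY_+}^{-1})_{12}$ and the finite kernel sums, and accordingly obtains $\textrm{II}$ with coefficient $D_{nY_+,1}^2$ and $\textrm{III}$ with coefficient $2D_{nY_+,1}/(q n_+ h_{Y_+})$. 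Your version is internally inconsistent: your numerator subtracts the fixed-$n$ bias of \Cref{lemma:fixed-n bias and variance} (the $D_{nY_\pm,1}$-form), while your denominator Studentizes as if the subtracted quantity were the $a(c)$-form. Since $D_{nY_+,1}\to \tfrac{1}{2}a_{Y_+}(c)$ only as $n_+\to\infty$, in finite samples your adjusted variance would not match the correction actually performed, and the expression would retain the kernel moments $\mu_{+,j}$ that the fixed-$n$ approximation is specifically designed to eliminate. The repair is mechanical --- carry $D_{nY_\pm,1}$ through terms $\textrm{II}$ and $\textrm{III}$ --- but as written the plan proves a hybrid statistic, not \cref{eq:fixed-n sharp t adjusted dist}.
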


\begin{proof}[Proof of \Cref{prop:fixed-n sharp t}]
	The fixed-$n$ bias term on the numerator of \cref{eq:fixed-n sharp t adjusted dist} is given in \Cref{lemma:fixed-n sharp results}. For the denominator of \cref{eq:fixed-n sharp t adjusted dist}, recall
	\begin{align}
		\text{Var}(\hat{\tau}_{\text{sharp}} - \widehat{\text{Bias}}(\hat{\tau}_{\text{sharp}})_{\text{fixed-n}})_{\text{fixed-n}} &= \text{Var}(\hat{\tau}_{\text{sharp}}) + \text{Var}( \widehat{\text{Bias}}(\hat{\tau}_{\text{sharp}})_{\text{fixed-n}}) \nonumber \\
		&\quad - 2 \text{Cov}(\hat{\tau}_{\text{sharp}}, \widehat{\text{Bias}}(\hat{\tau}_{\text{sharp}})_{\text{fixed-n}}) \nonumber\\
		&= \textrm{I} + \textrm{II} + \textrm{III}.
	\end{align}
	Term $\textrm{I}$ is given in \Cref{lemma:fixed-n bias and variance}. Consider the second term $\textrm{II}$.
	\begin{align} \label{eq:fixed-n var of sharp treatment}
		\text{Var}( \widehat{\text{Bias}}(\hat{\tau}_{\text{sharp}})_{\text{fixed-n}}) &=\text{Var}( \widehat{\text{Bias}}(\hat{m}_{Y_+})_{\text{fixed-n}}) + \text{Var}( \widehat{\text{Bias}}(\hat{m}_{Y_-})_{\text{fixed-n}}).
	\end{align}
	Using \Cref{lemma:fixed-n bias and variance} and omitting the small-order terms, we have
	\begin{align}
		\widehat{\text{Bias}}(\hat{m}_{Y_+})_{\text{fixed-n}} &= D_{nY_+,1} \hat{m}_{Y_+}^{(2)} h_{Y_+}^2,\label{eq:fixed-n bias for mY+}\\
		\widehat{\text{Bias}}(\hat{m}_{Y_-})_{\text{fixed-n}} &= D_{nY_-,1} \hat{m}_{Y_-}^{(2)} h_{Y_-}^2,\label{eq:fixed-n bias for mY-}
	\end{align}
	where
	
	\begin{align}
		&D_{nY_+,1} \label{eq:DnY+} \\
		=& \frac{1}{q} e_q^T \Bigg[(S_{nY_+}^{-1})_{11} f_{\epsilon_{Y_+}} \frac{1}{2 n_+ h_{Y_+}} \sum_{i=1}^{n_+} \frac{K_{+,i}x_{+,i}^2}{\sigma_{\epsilon_{Y_+,i}}} + (S_{nY_+}^{-1})_{12} \sum_{k=1}^{q}f_{\epsilon_{Y_+}}(c_k) \frac{1}{2 n_+ h_{Y_+}} \sum_{i=1}^{n_+} \frac{K_{+,i}x_{+,i}^3}{\sigma_{\epsilon_{Y_+,i}}}   \Bigg],\nonumber \\
		&D_{nY_-,1} \label{eq:DnY-}\\
		=& \frac{1}{q} e_q^T \Bigg[(S_{nY_-}^{-1})_{11} f_{\epsilon_{Y_-}} \frac{1}{2 n_- h_{Y_-}} \sum_{i=1}^{n_-} \frac{K_{-,i}x_{-,i}^2}{\sigma_{\epsilon_{Y_-,i}}}  + (S_{nY_-}^{-1})_{12} \sum_{k=1}^{q}f_{\epsilon_{Y_-}}(c_k) \frac{1}{2 n_- h_{Y_-}} \sum_{i=1}^{n_-} \frac{K_{-,i}x_{-,i}^3}{\sigma_{\epsilon_{Y_-,i}}}   \Bigg].\nonumber
	\end{align}
	Applying \Cref{lemma:fixed-n bias and var of bias} to  $\widehat{\text{Bias}}(\hat{m}_{Y_+})_{\text{fixed-n}}$, $\widehat{\text{Bias}}(\hat{m}_{Y_-})_{\text{fixed-n}}$ for \cref{eq:fixed-n var of sharp treatment}:
	\begin{equation*}
		\textrm{II} = \frac{4}{n_+ h_{Y_+}} D_{nY_+,1}^2 e_2^T (S_{nY_+}^{-1} \Sigma_{nY_+} S_{nY_+}^{-1})_{22} e_2 + \frac{4}{n_- h_{Y_-}} D_{nY_-,1}^2 e_2^T (S_{nY_-}^{-1} \Sigma_{nY_-} S_{nY_-}^{-1})_{22} e_2.
	\end{equation*}
	For term $\textrm{III}$ with \textit{i.i.d.} errors, we have
	\begin{equation*}
		\text{Cov}(\hat{\tau}_{\text{sharp}}, \widehat{\text{Bias}}(\hat{\tau}_{\text{sharp}})_{\text{fixed-n}}) = \text{Cov}(\hat{m}_{Y_+}, \widehat{\text{Bias}}(\hat{m}_{Y_+})_{\text{fixed-n}}) + \text{Cov}(\hat{m}_{Y_-}, \widehat{\text{Bias}}(\hat{m}_{Y_-})_{\text{fixed-n}}).
	\end{equation*}
	Using \cref{eq:md2 and v} and \cref{eq:fixed_n mY bias form 1}, we can show
	\begin{align*}
		\text{Cov}(\hat{m}_{Y_+}, \widehat{\text{Bias}}(\hat{m}_{Y_+})_{\text{fixed-n}}) &= \frac{2 D_{nY_+,1}}{q n_+ h_{Y_+}} e_2^T (S_{nY_+}^{-1} \Sigma_{nY_+} S_{nY_+}^{-1})_{12,2}, \\
		\text{Cov}(\hat{m}_{Y_-}, \widehat{\text{Bias}}(\hat{m}_{Y_-})_{\text{fixed-n}}) &= \frac{2 D_{nY_-,1}}{q n_- h_{Y_-}} e_2^T (S_{nY_-}^{-1} \Sigma_{nY_-} S_{nY_-}^{-1})_{12,2}.
	\end{align*}
	Putting everything together, the numerator in \cref{eq:fixed-n sharp t adjusted dist} becomes
	\begin{align*}
		\hat{\tau}_{\text{sharp}} - \widehat{\text{Bias}}(\hat{\tau}_{\text{sharp}})_{\text{fixed-n}} &=\hat{\tau}_{\text{sharp}} - \Bigg\{ \frac{1}{q} e_q^T \Bigg[(S_{nY_+}^{-1})_{11} f_{\epsilon_{Y_+}} \frac{1}{2 n_+ h_{Y_+}} \sum_{i=1}^{n_+} \frac{K_{+,i}x_{+,i}^2}{\sigma_{\epsilon_{Y_+,i}}} \nonumber \\
		&\quad + (S_{nY_+}^{-1})_{12} \sum_{k=1}^{q}f_{\epsilon_{Y_+}}(c_k) \frac{1}{2 n_+ h_{Y_+}} \sum_{i=1}^{n_+} \frac{K_{+,i}x_{+,i}^3}{\sigma_{\epsilon_{Y_+,i}}}   \Bigg] m_{Y_+}^{(2)} h_{Y_+}^2 \\
		&\quad - \frac{1}{q} e_q^T \Bigg[(S_{nY_-}^{-1})_{11} f_{\epsilon_{Y_-}} \frac{1}{2 n_- h_{Y_-}} \sum_{i=1}^{n_-} \frac{K_{-,i}x_{-,i}^2}{\sigma_{\epsilon_{Y_-,i}}} \nonumber \\
		&\quad + (S_{nY_-}^{-1})_{12} \sum_{k=1}^{q}f_{\epsilon_{Y_-}}(c_k) \frac{1}{2 n_- h_{Y_-}} \sum_{i=1}^{n_-} \frac{K_{-,i}x_{-,i}^3}{\sigma_{\epsilon_{Y_-,i}}}   \Bigg] m_{Y_-}^{(2)} h_{Y_-}^2 \Bigg \}.
	\end{align*}
	The variance on the denominator is given by
	\begin{align*}
		&\quad \text{Var}(\hat{\tau}_{\text{sharp}} - \widehat{\text{Bias}}(\hat{\tau}_{\text{sharp}})_{\text{fixed-n}}) \\
		&=\frac{1}{n_+h_{Y_+}q^2} e_q^T \left(S_{nY_+}^{-1} \Sigma_{nY_+} S_{nY_+}^{-1}\right)e_q + \frac{1}{n_-h_{Y_-}q^2} e_q^T \left(S_{nY_-}^{-1} \Sigma_{nY_-} S_{nY_-}^{-1}\right)e_q \\
		&\quad +\frac{4}{n_+ h_{Y_+}} D_{nY_+,1}^2 e_2^T (S_{nY_+}^{-1} \Sigma_{nY_+} S_{nY_+}^{-1})_{22} e_2 + \frac{4}{n_- h_{Y_-}} D_{nY_-,1}^2 e_2^T (S_{nY_-}^{-1} \Sigma_{nY_-} S_{nY_-}^{-1})_{22} e_2\\
		&\quad - 2 \left[\frac{2 D_{nY_+,1}}{q n_+ h_{Y_+}} e_2^T (S_{nY_+}^{-1} \Sigma_{nY_+} S_{nY_+}^{-1})_{12,2} + \frac{2 D_{nY_-,1}}{q n_- h_{Y_-}} e_2^T (S_{nY_-}^{-1} \Sigma_{nY_-} S_{nY_-}^{-1})_{12,2}\right].
	\end{align*}
\end{proof}

\begin{proposition} \label{prop:fixed-n fuzzy t adjusted dist}
	Under \cref{assumption:m continuity,assumption:sigma continuity,assumption:kernel,assumption:density x,assumption:error distribution,assumption:bandwidth}, the fixed-$n$ adjusted \textit{t}-statistic for the fuzzy RD is given by
	\begin{equation} \label{eq:fixed-n fuzzy t adjusted dist}
	t_{\text{fuzzy, fixed-n}}^{\text{adj.}} = \frac{\tilde{\tau}_{\text{fuzzy}} - \widehat{\text{Bias}}(\tilde{\tau}_{\text{fuzzy}})_{\text{fixed-n}}}{\sqrt{\text{Var}(\tilde{\tau}_{\text{fuzzy}} - \widehat{\text{Bias}}(\tilde{\tau}_{\text{fuzzy}})_{\text{fixed-n}})}}.
	\end{equation}
\end{proposition}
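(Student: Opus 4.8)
The plan is to reproduce, in the fixed-$n$ regime, the two ingredients that enter the adjusted $t$-statistic of \Cref{thm:fuzzy t adjusted dist}, following the same template used in the proof of \Cref{prop:fixed-n sharp t}: a bias-corrected numerator and a variance-adjusted denominator, both expressed through the pre-asymptotic matrices $S_{nY_+}$, $\Sigma_{nY_+}$, and $\Sigma_{nYT_+}$ rather than kernel moments. For the numerator, I would note that $\widehat{\text{Bias}}(\tilde{\tau}_{\text{fuzzy}})_{\text{fixed-n}}$ is, by \cref{eq:fuzzy biases}, the linear combination $[\widehat{\text{Bias}}(\hat{m}_{Y_+})_{\text{fixed-n}} - \tau_0 \widehat{\text{Bias}}(\hat{m}_{T_+})_{\text{fixed-n}}] - [\widehat{\text{Bias}}(\hat{m}_{Y_-})_{\text{fixed-n}} - \tau_0 \widehat{\text{Bias}}(\hat{m}_{T_-})_{\text{fixed-n}}]$, and each of the four fixed-$n$ component biases is already supplied by \Cref{lemma:fixed-n bias and variance}. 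This makes the numerator immediate once the sharp-case bookkeeping is repeated with $Y$ replaced by $T$ where needed.

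The denominator is the substantive part. I would expand $\text{Var}(\tilde{\tau}_{\text{fuzzy}} - \widehat{\text{Bias}}(\tilde{\tau}_{\text{fuzzy}})_{\text{fixed-n}})$ using \cref{eq:fuzzy adj var} together with its mirror image below the cutoff, and then replace each variance and covariance by its fixed-$n$ analog. Several of the required blocks are already in hand: $\text{Var}(\hat{m}_{Y_+})_{\text{fixed-n}}$ and $\text{Var}(\hat{m}_{T_+})_{\text{fixed-n}}$ come from \Cref{lemma:fixed-n bias and variance}; $\text{Cov}(\hat{m}_{Y_+},\hat{m}_{T_+})_{\text{fixed-n}}$ from \Cref{lemma:fixed-n covariance}; and the variances of the two same-response bias estimators together with the same-response estimator/bias covariances from \Cref{lemma:fixed-n bias and var of bias} and the computations already carried out in the proof of \Cref{prop:fixed-n sharp t} (in particular the constants $D_{nY_+,1}$, $D_{nT_+,1}$ and their below-cutoff counterparts).

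What remains are the three cross-response covariances $\text{Cov}(\widehat{\text{Bias}}(\hat{m}_{Y_+}),\widehat{\text{Bias}}(\hat{m}_{T_+}))$, $\text{Cov}(\hat{m}_{Y_+},\widehat{\text{Bias}}(\hat{m}_{T_+}))$, and $\text{Cov}(\hat{m}_{T_+},\widehat{\text{Bias}}(\hat{m}_{Y_+}))$, which I would derive by mirroring the asymptotic calculations in the proof of \Cref{thm:fuzzy t adjusted dist} while keeping $n$ and $h$ fixed. The representations $\hat{m}_{Y_+}-m_{Y_+} = \frac{1}{q\sqrt{n_+h_{Y_+}}}\sum_k \hat{u}_k$ from \cref{eq:fixed_n mY bias form 1} and $\hat{m}_{Y_+}^{(2)} = m_{Y_+}^{(2)} + 2\hat{v}_2/(h_{Y_+}^2\sqrt{n_+h_{Y_+}})$ from \cref{eq:md2 and v} reduce each cross-covariance to a covariance between the $\hat{u}_k$ (or $\hat{v}_2$) blocks of the $Y$-problem and the $T$-problem. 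The joint second moment of these blocks is exactly the fixed-$n$ matrix $\Sigma_{nYT_+}$, whose entries encode the joint quantile-crossing probabilities $\phi_{kk'}=F_+(c_k,c_{k'})-\tau_k\tau_{k'}$; the needed fact is that $\text{Cov}(\eta_{Y_+,i,k},\eta_{T_+,i,k'})=\phi_{kk'}$ within an observation and $0$ across observations, exactly as in \Cref{lemma:covariance}. I expect this step to be the main obstacle, since one must track the possibly unequal bandwidths $h_{Y_+}$ and $h_{T_+}$ (which surface as $\sqrt{h_{Y_+}h_{T_+}}$ normalizations) and correctly extract the $(12,2)$ and $22$ sub-blocks of $S_{nY_+}^{-1}\Sigma_{nYT_+}S_{nT_+}^{-1}$ that multiply $e_q$ and $e_2$.

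Finally, for the limiting $N(0,1)$ distribution I would appeal to the fixed-$n$ linear representation \cref{eq:fixed-n theta}, the pre-asymptotic analog of \Cref{lemma:theta distribution}, to write each of $\hat{m}_{Y_+}-\widehat{\text{Bias}}(\hat{m}_{Y_+})$, $\hat{m}_{T_+}-\widehat{\text{Bias}}(\hat{m}_{T_+})$, and their below-cutoff versions as asymptotically normal. Since $\tilde{\tau}_{\text{fuzzy}}^{\text{bc}}$ is the fixed linear combination of these four jointly normal quantities prescribed by \cref{eq:fuzzy biases}, it is itself asymptotically normal; dividing by the consistent fixed-$n$ adjusted standard error assembled above and invoking Slutsky's theorem yields the claimed form of $t_{\text{fuzzy, fixed-n}}^{\text{adj.}}$ in \cref{eq:fixed-n fuzzy t adjusted dist}.
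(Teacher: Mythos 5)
Your proposal is correct and follows essentially the same route as the paper's own proof: assembling the numerator from \cref{eq:fuzzy biases} via \Cref{lemma:fixed-n bias and variance}, expanding the denominator through \cref{eq:fuzzy adj var} with the seven blocks already available from \Cref{lemma:fixed-n bias and variance,lemma:fixed-n covariance,lemma:fixed-n bias and var of bias} and the proof of \Cref{prop:fixed-n sharp t}, and deriving the three remaining cross-response covariances through the representations in \cref{eq:fixed_n mY bias form 1,eq:md2 and v} and the matrix $\Sigma_{nYT_+}$ with the $\sqrt{h_{Y_+}h_{T_+}}$ normalization and the $(12,2)$ and $22$ sub-blocks, exactly as the paper does. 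Your closing normality argument is extra relative to the paper's proof (the proposition itself only asserts the form of the statistic), but it is consistent with the framework and does not detract from the match.
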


\begin{proof} [Proof of \Cref{prop:fixed-n fuzzy t adjusted dist}]
	The numerator of \cref{eq:fixed-n fuzzy t adjusted dist} can be obtained by applying the fixed-$n$ bias result in \Cref{lemma:fixed-n bias and variance} to \cref{eq:fuzzy biases}. To compute the denominator, we again start with \cref{eq:fuzzy adj var}. For the fixed-$n$ result in \cref{eq:fuzzy adj var}, expressions for $\text{Var}(\hat{m}_{Y_+})_{\text{fixed-n}}$ and $\text{Var}(\hat{m}_{T_+})_{\text{fixed-n}}$ are given by \Cref{lemma:fixed-n bias and variance}, $\text{Var}( \widehat{\text{Bias}}(\hat{m}_{Y_+})_{\text{fixed-n}})$, $\text{Var}( \widehat{\text{Bias}}(\hat{m}_{T_+})_{\text{fixed-n}})$, $\text{Cov}(\hat{m}_{Y_+}, \widehat{\text{Bias}}(\hat{m}_{Y_+})_{\text{fixed-n}})$ and $\text{Cov}(\hat{m}_{T_+}, \widehat{\text{Bias}}(\hat{m}_{T_+})_{\text{fixed-n}})$ are derived in the proof of \Cref{prop:fixed-n sharp t}. We list the seven terms below and omit the small-order terms.
	\begin{align*}
		\text{Var}(\hat{m}_{Y_+})_{\text{fixed-n}} &= \frac{1}{n_+h_{Y_+}q^2} e_q^T \left(S_{nY_+}^{-1} \Sigma_{nY_+} S_{nY_+}^{-1}\right)e_q,\\
		\text{Var}(\hat{m}_{T_+})_{\text{fixed-n}} &= \frac{1}{n_+h_{T_+}q^2} e_q^T \left(S_{nT_+}^{-1} \Sigma_{nT_+} S_{nT_+}^{-1}\right)e_q,\\
		\text{Var}(\widehat{\text{Bias}}(\hat{m}_{Y_+})_{\text{fixed-n}}) &= \frac{4}{n_+ h_{Y_+}} D_{nY_+,1}^2 e_2^T (S_{nY_+}^{-1} \Sigma_{nY_+} S_{nY_+}^{-1})_{22} e_2,\\
		\text{Var}(\widehat{\text{Bias}}(\hat{m}_{T_+})_{\text{fixed-n}}) &= \frac{4}{n_+ h_{T_+}} D_{nT_+,1}^2 e_2^T (S_{nT_+}^{-1} \Sigma_{nT_+} S_{nT_+}^{-1})_{22} e_2,\\
		\text{Cov}(\hat{m}_{Y_+},\hat{m}_{T_+})_{\text{fixed-n}} &= \frac{1}{q^2 n_+ \sqrt{ h_{Y_+} h_{T_+}}} e_q^T
		\begin{bmatrix}
		(S_{nY_+}^{-1})_{11} \; (S_{nY_+}^{-1})_{12}
		\end{bmatrix} \Sigma_{nYT_+} 
		\begin{bmatrix}
		(S_{nT_+}^{-1})_{11} \; (S_{nT_+ }^{-1})_{12}
		\end{bmatrix}^T e_q,\\
		\text{Cov}(\hat{m}_{Y_+}, \widehat{\text{Bias}}(\hat{m}_{Y_+})_{\text{fixed-n}}) &= \frac{2 D_{nY_+,1}}{q n_+ h_{Y_+}} e_2^T (S_{nY_+}^{-1} \Sigma_{nY_+} S_{nY_+}^{-1})_{12,2},\\
		\text{Cov}(\hat{m}_{T_+}, \widehat{\text{Bias}}(\hat{m}_{T_+})_{\text{fixed-n}}) &= \frac{2 D_{nT_+,1}}{q n_+ h_{T_+}} e_2^T (S_{nT_+}^{-1} \Sigma_{nT_+} S_{nT_+}^{-1})_{12,2}.
	\end{align*}
	
	 We only need to compute the remaining three terms in \cref{eq:fuzzy adj var}. Consider the term $\text{Cov}(\widehat{\text{Bias}}(\hat{m}_{Y_+})_{\text{fixed-n}},\widehat{\text{Bias}}(\hat{m}_{T_+})_{\text{fixed-n}})$. Using the result in \cref{eq:fixed-n bias for mY+} and a similar result for $\widehat{\text{Bias}}(\hat{m}_{T_+})_{\text{fixed-n}}$, together with the result in \cref{eq:md2 and v} and a similar result for $\hat{m}_{T_+}^{(2)}$, it can be shown that
	\begin{equation*}
		\text{Cov}(\widehat{\text{Bias}}(\hat{m}_{Y_+})_{\text{fixed-n}},\widehat{\text{Bias}}(\hat{m}_{T_+})_{\text{fixed-n}}) = \frac{4 D_{nY_+,1} D_{nT_+,1}}{n_+ \sqrt{h_{Y_+} h_{T_+}}} e_2^T (S_{nY_+}^{-1} \Sigma_{nYT_+} S_{nT_+}^{-1})_{22} e_2,
	\end{equation*}
	similar to the proof in \Cref{thm:adjusted bandwidth}.
	
	Again, similar to the proof in \Cref{thm:adjusted bandwidth}, using \cref{eq:md2 and v}, \cref{eq:fixed-n bias for mY+} and \cref{eq:DnY+}, we have
	\begin{align*}
		\text{Cov}(\hat{m}_{Y_+},\widehat{\text{Bias}}(\hat{m}_{T_+})_{\text{fixed-n}}) &= \frac{2 D_{nT_+,1}}{q n_+ \sqrt{h_{Y_+} h_{T_+}}} e_2^T (S_{nY_+}^{-1} \Sigma_{nYT_+} S_{nT_+}^{-1})_{12,2}, \\
		\text{Cov}(\hat{m}_{T_+},\widehat{\text{Bias}}(\hat{m}_{Y_+})_{\text{fixed-n}}) &= \frac{2 D_{nY_+,1}}{q n_+ \sqrt{h_{Y_+} h_{T_+}}} e_2^T (S_{nT_+}^{-1} \Sigma_{nTY_+} S_{nY_+}^{-1})_{12,2}.
	\end{align*}
	
	Substitute the above ten results into \cref{eq:fuzzy adj var} to obtain a fixed-$n$ version of $\text{Var}( (\hat{m}_{Y_+} - \tau_0 \hat{m}_{T_+}) -(\widehat{\text{Bias}}(\hat{m}_{Y_+}) - \tau_0 \widehat{\text{Bias}}(\hat{m}_{T_+})) )$. The fixed-$n$ result for $\text{Var}( (\hat{m}_{Y_-} - \tau_0 \hat{m}_{T_-}) -(\widehat{\text{Bias}}(\hat{m}_{Y_-}) - \tau_0 \widehat{\text{Bias}}(\hat{m}_{T_-})) )$ can be obtained in a similar way. Adding up the two results gives the variance on the denominator of \cref{eq:fixed-n fuzzy t adjusted dist}.
\end{proof}

\newpage

\subsection{Additional figures and tables}

\subsubsection{Figure: the bias of bias-corrected estimators} \label{supp:fig bias of bias-corrected estimator}

To accompany \Cref{fig:lee bias} in the main text, this subsection contains additional  \Cref{fig:lee lm more} that compares the finite sample performance of LCQR and LLR in estimating the treatment effect.  

\begin{figure}[htp]
	\centering
\subfloat[Lee with heteroskedatic errors ]{\includegraphics[width=0.5\textwidth,keepaspectratio=TRUE]{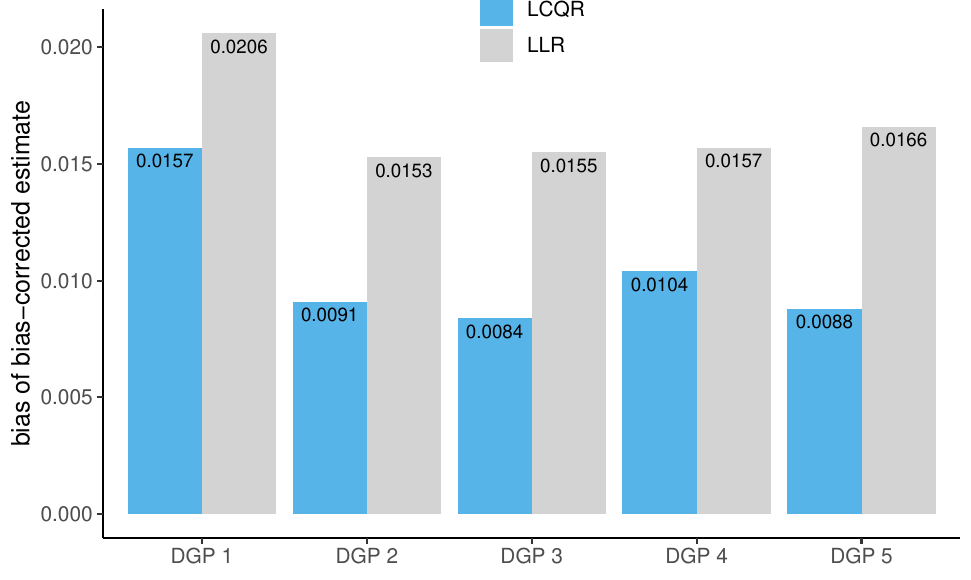}}\\
\subfloat[LM with homoskedastic errors]{	\includegraphics[width=0.5\textwidth,keepaspectratio=TRUE]{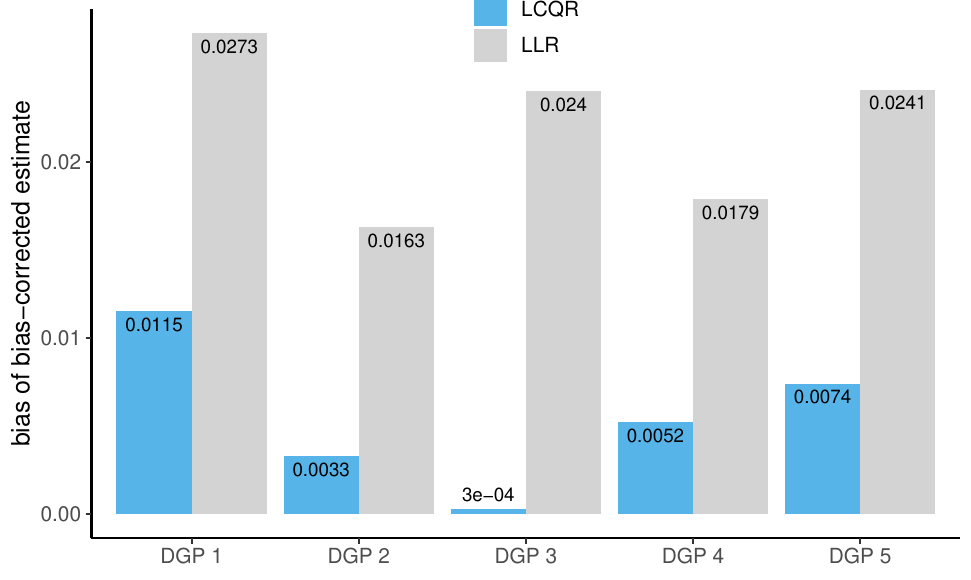}}\\
\subfloat[LM with heteroskedatic errors ]{	\includegraphics[width=0.5\textwidth,keepaspectratio=TRUE]{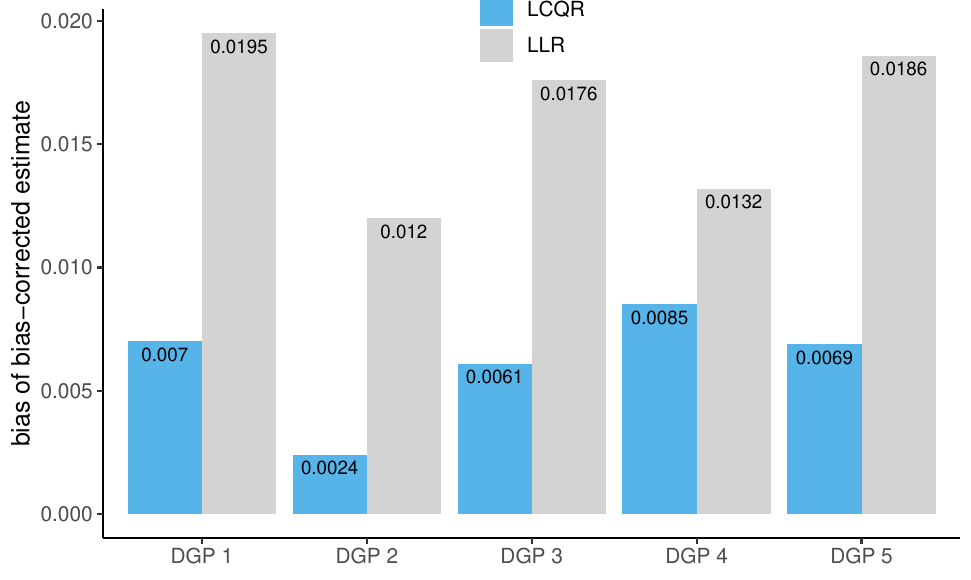}}
	\caption{Absolute value of average bias of the bias-corrected estimators, $\hat{\tau}_{\text{1bw}}^{\text{cqr,bc}}$ and $\hat{\tau}_{\text{1bw}}^{\text{robust,bc}}$ for the Lee and LM models. $\hat{\tau}_{\text{1bw}}^{\text{cqr,bc}}$ is the bias-corrected LCQR estimator. $\hat{\tau}_{\text{1bw}}^{\text{robust,bc}}$ is the bias-corrected LLR estimator. The result is based on $5000$ replications and the true treatment effect is $0.04$ for Lee and $-3.45$ for LM. The DGPs are described in the paper. }%
	\label{fig:lee lm more}%
\end{figure}

\subsubsection{Figure: LCQR and LLR at interior and boundary points} 

To motivate the use of LCQR, consider the nonlinear model in \cite{ruppertsheather1995JASA}, $Y = \sin(5\pi X) + 0.5\epsilon$, where $\epsilon$ follows a mixture normal distribution, $0.95N(0,1) + 0.05N(0,10^2)$, and $X$ follows a uniform distribution on $[0,1]$. It is clear from \Cref{fig:intro_plot} that LCQR exhibits less ``flapping" for both interior and boundary points. The relative stable behavior of LCQR on the boundary when data move away from normality is of particular importance to the estimation and inference in RD.

\begin{figure}[htp]
	\centering
	\subfloat[LLR versus LCQR]{{\includegraphics[width=0.48\linewidth,keepaspectratio]{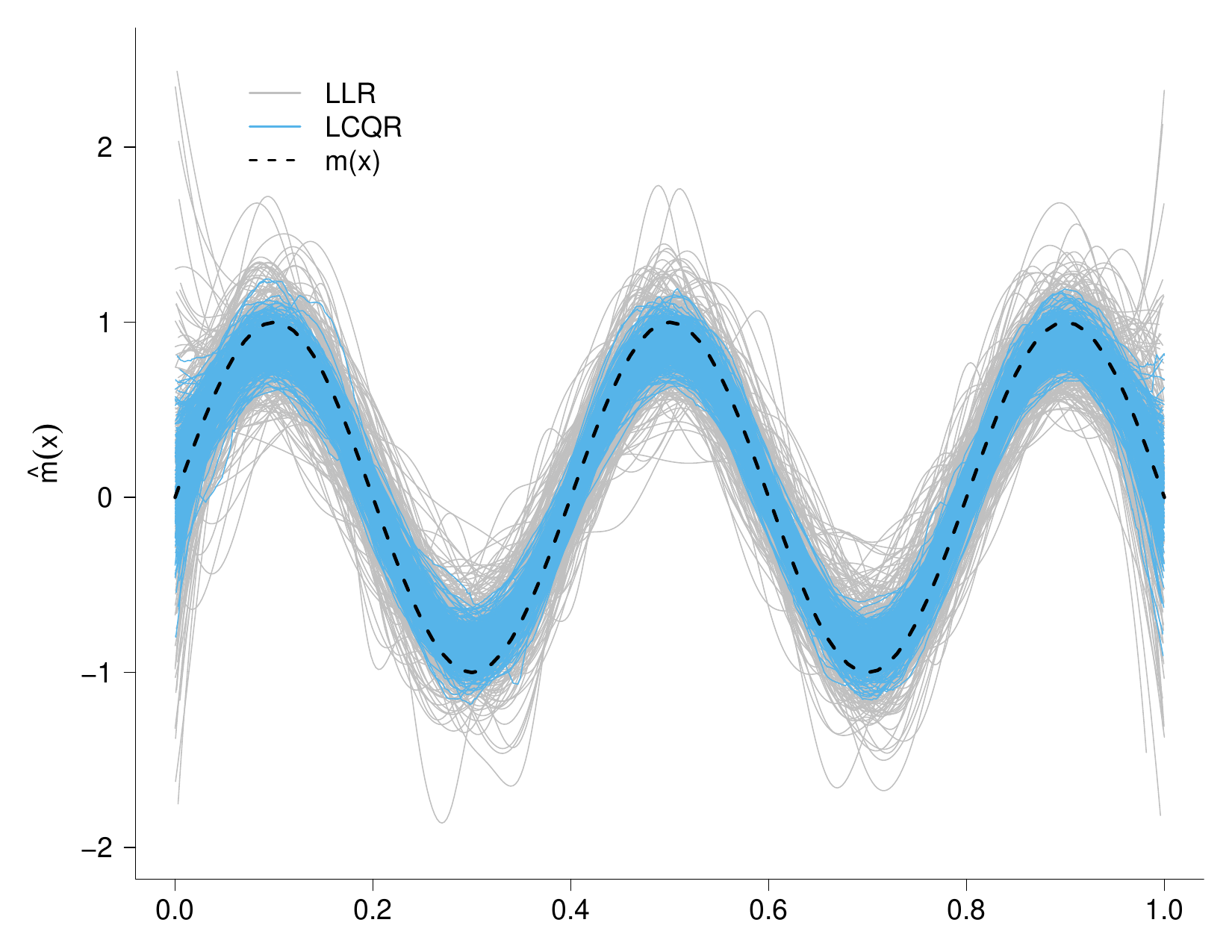} }}%
	\subfloat[Box plots of the estimates on the boundaries]{{\includegraphics[width=0.48\linewidth,keepaspectratio]{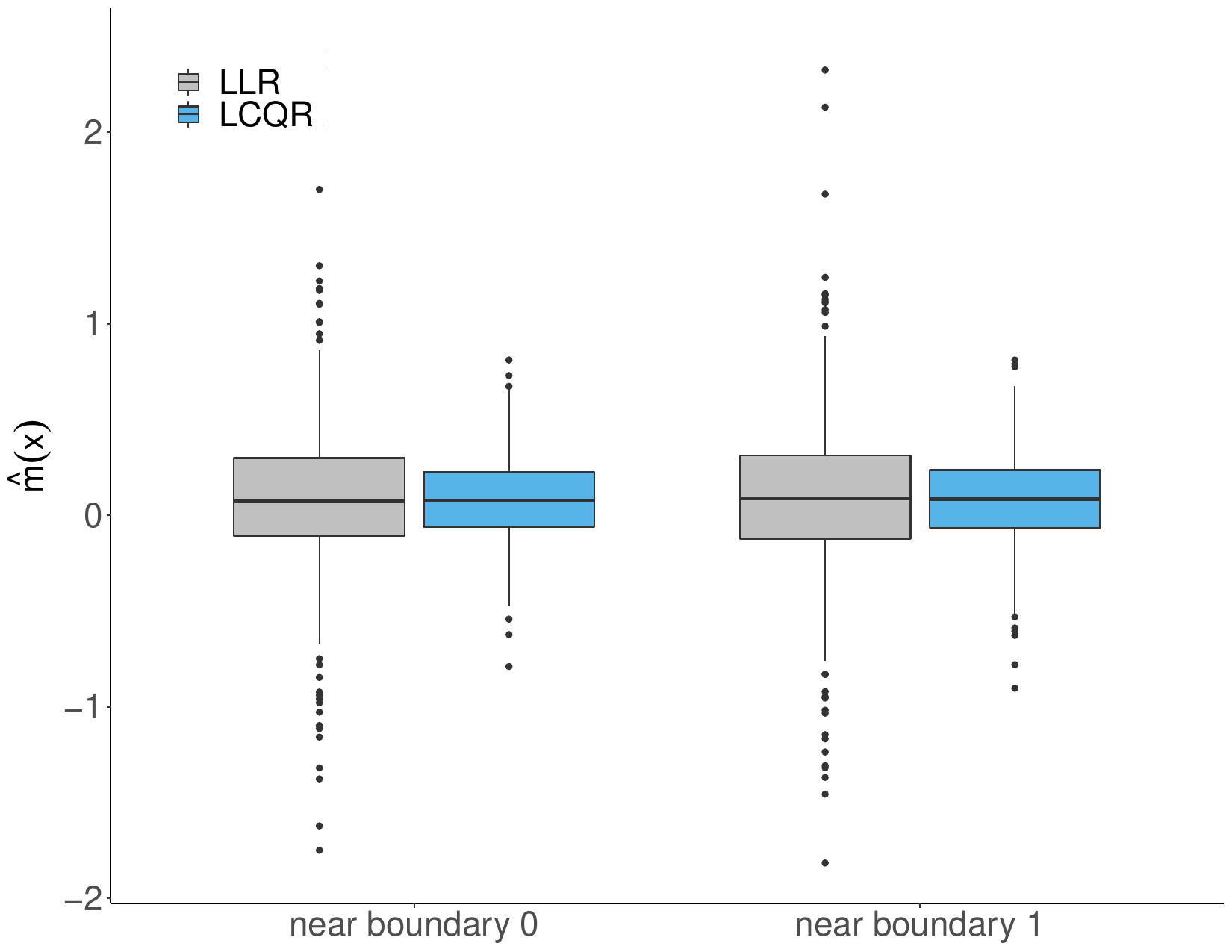} }}%
	\caption{Estimates of LLR and LCQR with a sample size of 400 and 400 replications. Both methods use the same direct plug-in bandwidth in \cite{ruppertsheather1995JASA}. $m(X)=\sin(5\pi X)$. }%
	\label{fig:intro_plot}%
\end{figure}




\subsubsection{Table: coverage probability with the rule-of-thumb bandwidth} \label{supp:lcqr with rot bandwidth}

This subsection presents \Cref{tb: coverage rot} that is similar to Table \ref{tb:main coverage} except that $\hat{\tau}_{\text{2bw}}^{\text{cqr}}$ and $\hat{\tau}_{\text{2bw}}^{\text{cqr,bc}}$ use the rule-of-thumb bandwidth described by Equation (4.3) in \cite{fan1996local}. \Cref{tb: coverage rot}  indicates that the proposed LCQR method has some robustness to the choice of bandwidth.

\begin{table}[htp] \centering 
	\begin{center}
	\caption{Coverage probability of 95\% confidence intervals in  Lee and LM models using the rule-of-thumb bandwidth for LCQR} 
	\label{tb: coverage rot} 
		\begin{threeparttable}
			\begin{tabular}{lccccc  ccccc} 
				\hline
				& \multicolumn{5}{c}{A. Lee with homoskedastic errors} &  \multicolumn{5}{c}{B. Lee  with heteroskedatic errors}\\
				\cmidrule(lr){2-6}\cmidrule(lr){7-11}
			\footnotesize	Estimator& \footnotesize DGP 1&\footnotesize DGP 2&\footnotesize DGP 3&\footnotesize DGP 4&\footnotesize DGP 5&\footnotesize  DGP 1&\footnotesize DGP 2&\footnotesize DGP 3&\footnotesize DGP 4&\footnotesize DGP 5\\ 	
				\hline
				$\hat{\tau}_{\text{2bw}}^{\text{cqr}}$ & 0.915 & 0.917 & 0.909 & 0.916 & 0.917 & 0.901 & 0.896 & 0.887 & 0.897 & 0.895\\
				$\hat{\tau}_{\text{2bw}}^{\text{cqr,bc}}$ & 0.976 & 0.963 & 0.968 & 0.976 & 0.965 & 0.969 & 0.956 & 0.958 & 0.965 & 0.951 \\
    &\\
	& \multicolumn{5}{c}{C. LM  with homoskedastic errors} &  \multicolumn{5}{c}{D. LM  with heteroskedatic errors}\\
				\cmidrule(lr){2-6}\cmidrule(lr){7-11}
		\footnotesize	Estimator& \footnotesize DGP 1&\footnotesize DGP 2&\footnotesize DGP 3&\footnotesize DGP 4&\footnotesize DGP 5&\footnotesize  DGP 1&\footnotesize DGP 2&\footnotesize DGP 3&\footnotesize DGP 4&\footnotesize DGP 5\\ 	
				\hline
				$\hat{\tau}_{\text{2bw}}^{\text{cqr}}$ & 0.888 & 0.891 & 0.880 & 0.892 & 0.859  & 0.876 & 0.876 & 0.870 & 0.878 & 0.845\\
				$\hat{\tau}_{\text{2bw}}^{\text{cqr,bc}}$ & 0.967 & 0.956 & 0.962 & 0.968 & 0.960 & 0.958 & 0.946 & 0.952 & 0.952 & 0.943\\	
				\hline 
				\end{tabular}
			\begin{tablenotes}[flushleft]
				\item[] \textit{Notes}: The reported numbers are the simulated coverage probabilities of the 95\% confidence intervals associated with different estimators. The results are based on 5000 replications with a sample size $n = 500$. The s.e. and adjusted s.e. for the LCQR estimator are obtained based on the asymptotic expressions from \Cref{thm:sharp results} and \Cref{thm:sharp t adjusted dist}. Estimators with superscript \texttt{bc} are both bias-corrected and s.e.-adjusted. The DGPs are described in the paper. 
			\end{tablenotes}
		\end{threeparttable}
	\end{center} 
\end{table}

\subsubsection{Table: coverage probability of fixed-n LCQR with small sample} \label{supp:fixed-n coverage probability small sample}

In this subsection, we decrease the sample size from $n=500$ to $300$ in the simulation study. We show that the fixed-$n$ approach indeed can improve the coverage when the sample size is relatively small, as reported in the last row of each panel of \Cref{tb: coverage small sample}.


\begin{table}[htp] \centering 
	\begin{center}
	\caption{Coverage probability of 95\% confidence intervals in  Lee and LM models, $n=300$ } 
	\label{tb: coverage small sample} 
		\begin{threeparttable}
			\begin{tabular}{lccccc  ccccc} 
				\hline
				& \multicolumn{5}{c}{A. Lee with homoskedastic errors} &  \multicolumn{5}{c}{B. Lee  with heteroskedastic errors}\\
				\cmidrule(lr){2-6}\cmidrule(lr){7-11}
			\footnotesize	Estimator& \footnotesize DGP 1&\footnotesize DGP 2&\footnotesize DGP 3&\footnotesize DGP 4&\footnotesize DGP 5&\footnotesize  DGP 1&\footnotesize DGP 2&\footnotesize DGP 3&\footnotesize DGP 4&\footnotesize DGP 5\\ 	
				\hline
			$\hat{\tau}_{\text{1bw}}^{\text{cqr}}$ & 0.906 & 0.892 & 0.900 & 0.903 & 0.886 & 0.882 & 0.872 & 0.877 & 0.873 & 0.871  \\
			    $\hat{\tau}_{\text{2bw}}^{\text{cqr}}$ & 0.895 & 0.887 & 0.890 & 0.890 & 0.875  & 0.869 & 0.859 & 0.868 & 0.859 & 0.857 \\
			        $\hat{\tau}_{\text{1bw}}^{\text{llr}}$ &0.926 & 0.926 & 0.929 & 0.917 & 0.943  & 0.923 & 0.922 & 0.928 & 0.918 & 0.945  \\
    $\hat{\tau}_{\text{1bw}}^{\text{cqr,bc}}$ & 0.958 & 0.941 & 0.940 & 0.948 & 0.923  &0.947 & 0.926 & 0.928 & 0.933 & 0.914 \\

    $\hat{\tau}_{\text{2bw}}^{\text{cqr,bc}}$ & 0.954 & 0.937 & 0.937 & 0.945 & 0.918& 0.939 & 0.924 & 0.929 & 0.929 & 0.906  \\

    $\hat{\tau}_{\text{1bw}}^{\text{robust,bc}}$ &0.927 & 0.928 & 0.930 & 0.921 & 0.946  & 0.923 & 0.925 & 0.928 & 0.920 & 0.946\\
    $\hat{\tau}_{\text{1bw,fixed-n}}^{\text{cqr,bc}}$ & 0.977 & 0.959 & 0.960 & 0.969 & 0.950  &0.958 & 0.939 & 0.944 & 0.945 & 0.936 \\
    &\\
	& \multicolumn{5}{c}{C. LM  with homoskedastic errors} &  \multicolumn{5}{c}{D. LM  with heteroskedastic errors}\\
				\cmidrule(lr){2-6}\cmidrule(lr){7-11}
		\footnotesize	Estimator& \footnotesize DGP 1&\footnotesize DGP 2&\footnotesize DGP 3&\footnotesize DGP 4&\footnotesize DGP 5&\footnotesize  DGP 1&\footnotesize DGP 2&\footnotesize DGP 3&\footnotesize DGP 4&\footnotesize DGP 5\\ 	
				\hline
				$\hat{\tau}_{\text{1bw}}^{\text{cqr}}$ & 0.791 & 0.820 & 0.826 & 0.803 & 0.832 & 0.613 & 0.668 & 0.700 & 0.640 & 0.728 \\
				    $\hat{\tau}_{\text{2bw}}^{\text{cqr}}$ & 0.805 & 0.839 & 0.841 & 0.815 & 0.837 & 0.622 & 0.689 & 0.706 & 0.657 & 0.732  \\
				        $\hat{\tau}_{\text{1bw}}^{\text{llr}}$ &0.907 & 0.917 & 0.921 & 0.905 & 0.936 & 0.899 & 0.905 & 0.914 & 0.896 & 0.926 \\
    $\hat{\tau}_{\text{1bw}}^{\text{cqr,bc}}$ &0.959 & 0.941 & 0.939 & 0.946 & 0.923   & 0.941 & 0.928 & 0.927 & 0.931 & 0.910\\

    $\hat{\tau}_{\text{2bw}}^{\text{cqr,bc}}$ & 0.953 & 0.936 & 0.936 & 0.944 & 0.915& 0.935 & 0.924 & 0.924 & 0.928 & 0.908\\

    $\hat{\tau}_{\text{1bw}}^{\text{robust, bc}}$ & 0.926 & 0.930 & 0.934 & 0.923 & 0.946  &0.927 & 0.930 & 0.933 & 0.923 & 0.946  \\
    $\hat{\tau}_{\text{1bw,fixed-n}}^{\text{cqr,bc}}$ & 0.975 & 0.956 & 0.959 & 0.966 & 0.951 & 0.956 & 0.939 & 0.942 & 0.946 & 0.938 \\
				\hline 
			\end{tabular}
			\begin{tablenotes}[flushleft]
				\item[] \textit{Notes}: The reported numbers are the simulated coverage probabilities of the 95\% confidence intervals associated with different estimators. The results are based on 5000 replications with a sample size $n = 300$. The s.e. and adjusted s.e. for the LCQR estimator are obtained based on the asymptotic expressions from \Cref{thm:sharp results} and \Cref{thm:sharp t adjusted dist}, except for $\hat{\tau}_{\text{1bw,fixed-n}}^{\text{cqr,bc}}$ where fixed-$n$ approximations are used. Estimators with superscript \texttt{bc} are both bias-corrected and s.e.-adjusted. The result of $\hat{\tau}_{\text{1bw}}^{\text{robust,bc}}$ is based on the CE-optimal bandwidth. The DGPs are described in the paper.
			\end{tablenotes}
		\end{threeparttable}
	\end{center} 
\end{table}

\subsubsection{Table: LCQR for sharp kink RD} \label{supp:lcqr with kink}

We consider the LM model used for the simulation study, but now focus on the difference in derivatives around the cutoff: $18.49-2.3=16.19$, as in a sharp kink RD design. \Cref{tb:lcqr kink} shows that LCQR could  outperform the local polynomial regression for estimating derivatives when data are non-normal; see e.g. DGP 2 - 5.

\begin{table}[htp] \centering 
	\begin{center}
	\caption{LCQR for sharp kink RD} 
	\label{tb:lcqr kink} 
		\begin{threeparttable}
			\begin{tabular}{lccccc  ccccc} 
				\hline
	
    				 & \multicolumn{10}{c}{LM  with $\tau_{sharp \  kink}=16.19$}\\
				 \cline{2-11}
				& \multicolumn{5}{c}{Homoskedastic errors } &  \multicolumn{5}{c}{Heteroskedastic errors}\\
				\cmidrule(lr){2-6}\cmidrule(lr){7-11}
		\footnotesize		Estimator& 	\footnotesize DGP 1&	\footnotesize DGP 2&	\footnotesize DGP 3&	\footnotesize DGP 4&	\footnotesize DGP 5& 	\footnotesize DGP 1&	\footnotesize DGP 2&	\footnotesize DGP 3&	\footnotesize DGP 4&	\footnotesize DGP 5\\ 	
				\hline
				LCQR & 15.91 & 16.14   & 15.89  & 16.12  & 16.13   
				& 15.93 & 16.09  & 15.91  & 16.07  & 16.08 \\
				(s.e.)    & \footnotesize (10.94)& \footnotesize (12.90)& \footnotesize (13.83)& \footnotesize (11.47)& \footnotesize (11.87) &\footnotesize (6.27)& \footnotesize (7.35)& \footnotesize (7.88)& \footnotesize (6.57)& \footnotesize (6.79)\\
				
    LPR & 15.93 & 16.13 & 15.69 & 16.06 & 15.84  
    & 15.96 & 16.07 & 15.82 & 16.03 & 15.90\\
     (s.e.)   &\footnotesize (10.47)& \footnotesize (14.77)&\footnotesize  (18.11)& \footnotesize (12.25)& \footnotesize (25.42) 
        &\footnotesize (6.10)& \footnotesize (8.61)& \footnotesize (10.53)& \footnotesize (7.13)& \footnotesize (14.80)\\
				\hline 
			\end{tabular}
			\begin{tablenotes}[flushleft]
				\item[] \textit{Notes}: The reported numbers are the simulated averages and  standard errors (in brackets) of the associated  estimators. The results are based on 5000 replications with a sample size $n = 500$. The DGPs are as described in the paper for the  LM model, yet the focus here is on the difference in first derivatives. For both LCQR and LPR (local polynomial regression), we consider the 3rd-order polynomial with the fixed bandwidth $=0.3$ and the triangular kernel. The R code to replicate this table can be downloaded from \url{https://xhuang.netlify.app/post/r-code-to-replicate-rd-tables/}.
			\end{tablenotes}
		\end{threeparttable}
	\end{center} 
\end{table}

\subsubsection{Table: simulation results for sharp RD with covariates} \label{supp:covariates}

In this subsection we use the same DGP as for Table SA-1 in \cite{calonico2019covariates}. We briefly describe the DGP below. Let $Z_i$ be the  covariate. Consider a sample size of $n=1000$ and $5000$ replications. For each $i=1,\cdots,n$, we have
\begin{equation*}
    Y_i = m_{y,j}(X_i,Z_i)+\epsilon_{y,i}, \quad Z_i = m_{z}(X_i)+\epsilon_{z,i}, \quad X_i \sim 2\times \text{Beta}(2,4)-1
\end{equation*}
with
\begin{equation*}
    \begin{pmatrix}
    \epsilon_{y,i}\\
    \epsilon_{z,i}
    \end{pmatrix}
    \sim N(0,\Sigma_j),\quad
    \Sigma_j = 
    \begin{pmatrix}
    \sigma_y^2 & \rho_j \sigma_y \sigma_z\\
    \rho_j \sigma_y \sigma_z & \sigma_z^2
    \end{pmatrix}
\end{equation*}
and $j=1,2,3,4$, corresponding to the following four models.
\begin{itemize}
    \item \textbf{Model 1} has no covariate and is the same as \cref{eq:lee}
    \begin{equation*} 
	m_{y,1}(X_i,Z_i) = \begin{cases}
	0.48 + 1.27 X_i + 7.18 X_i^2 + 20.21 X_i^3 + 21.54 X_i^4 + 7.33 X_i^5 & \text{ if } X_i < 0,\\
	0.52 + 0.84 X_i - 3.00 X_i^2 + 7.99 X_i^3 - 9.01 X_i^4 + 3.56 X_i^5 & \text{ if } X_i \geq 0,
	\end{cases}
\end{equation*}
and let $\sigma_y=0.1295$ and $\sigma_z = 0.1353$.
\item \textbf{Model 2} adds one covariate, and let $\rho = 0.2692$,
\end{itemize}
\begin{equation*} 
	m_{y,2}(X_i,Z_i) = \begin{cases}
	0.36 + 0.96 X_i + 5.47 X_i^2 + 15.28 X_i^3 + 15.87 X_i^4 + 5.14 X_i^5 + 0.22Z_i & \text{ if } X_i < 0,\\
	0.38 + 0.62 X_i - 2.84 X_i^2 + 8.42 X_i^3 - 10.24 X_i^4 + 4.31 X_i^5 + 0.28Z_i & \text{ if } X_i \geq 0,
	\end{cases}
\end{equation*}
\begin{equation*} 
	m_{z}(X_i) = \begin{cases}
	0.49 + 1.06 X_i + 5.74 X_i^2 + 17.14 X_i^3 + 19.75 X_i^4 + 7.47 X_i^5 & \text{ if } X_i < 0,\\
	0.49 + 0.61 X_i + 0.23 X_i^2 - 3.46 X_i^3 + 6.43 X_i^4 - 3.48 X_i^5 & \text{ if } X_i \geq 0.
	\end{cases}
\end{equation*}
\begin{itemize}
\item \textbf{Model 3} is the same as Model 2 except for $\rho = 0$.
\item \textbf{Model 4} is the same as Model 2 except for $\rho = 2 \times 0.2692$.
\end{itemize}

The true value for $\tau$ is $0.04$ in Model 1 and approximately $0.05$ in Models 2-4. \Cref{tb:covariate} reports $\sqrt{\text{MSE}}$, bias as a percentage of $\tau$ and empirical coverage (EC) for the confidence intervals based on $\hat{\tau}$ and $\tilde{\tau}$. The EC for $\hat{\tau}$ is obtained using bias-corrected, s.e.-adjusted \textit{t}-statistic in \Cref{thm:sharp t adjusted dist}; the EC for $\tilde{\tau}$ is obtained using the same \textit{t}-statistic for $\hat{\tau}$ but replacing $\hat{\tau}$ with $\tilde{\tau}$ on the numerator. See also \Cref{sec:covariates} for a discussion of this \textit{ad hoc} method for $\tilde{\tau}$. The last column in \Cref{tb:covariate} gives reasonably good coverage probabilities, suggesting the \textit{ad hoc} approach described in \Cref{sec:covariates} works well under the considered DGP. However, more simulation studies are needed to investigate its performance. 
\begin{table}[htp] \centering 
	\begin{center}
	\caption{Simulation results using a single bandwidth} 
	\label{tb:covariate} 

		\begin{threeparttable}
			\begin{tabular}{lccc cc ccc} 
				\hline
				& &\multicolumn{3}{c}{$\hat{\tau}$ in \cref{eq:without covariates}}& &\multicolumn{3}{c}{$\tilde{\tau}$ in \cref{eq:with covariates}}\\ 
				\cmidrule(lr){3-5} \cmidrule(lr){7-9}
				&&$\sqrt{\text{MSE}}$&\text{Bias (\%)}&\text{EC}&&$\sqrt{\text{MSE}}$&\text{Bias (\%)}&\text{EC}  \\ 
				\hline
				\text{Model 1} & \ \ \ \ \ &0.046 & 0.369 & 0.953 &\ \ \ \ \ &0.046 & 0.368 & 0.952 \\
                \text{Model 2} & &0.049 & 0.256 & 0.942 && 0.043 & 0.178 & 0.968 \\
                \text{Model 3} & &0.047 & 0.275 & 0.938 & &0.046 & 0.231 & 0.944 \\
                \text{Model 4} & &0.053 & 0.275 & 0.949 & &0.038 & 0.139 & 0.980 \\
				\hline
			\end{tabular}
		    \begin{tablenotes}[flushleft]
		    	\item[] \textit{Notes}: We use a single bandwidth $h=0.15$ for estimation, bias-correction, s.e.-adjustment in all four models. This number is chosen to mimic the bandwidth used in the simulation section in \cite{calonico2019covariates}. All numbers in the table are based on $5000$ replications. The R code to replicate this table can be downloaded from \url{https://xhuang.netlify.app/post/r-code-to-replicate-rd-tables/}.
		    \end{tablenotes}
		\end{threeparttable}
	\end{center}
\end{table}

\end{appendices}
\end{document}